\def\llncs{0}											
\def\LNCSpreview{0}								
\def\papertype{0}								   
\def\fullpage{1}									 
\def\pagelimit{}									 
\def\anonymous{0}								 
\def\acknowledgments{0}						 
\def\overflow{0}									 
\def\showlabels{0}								   
\def\authnotes{0}									
\def\dieordollar{0}									
\def\notxfont{0}
\def\nomathpazo{0}
\def\stuffedtitlepage{0}						 
\def\abbrevref{0}									
\def\dropargs{0}									
\def\rmvtheoremspace{0}						 
\def\allowbreaks{0}								  
\def\choosebibstyle{1}							
\def\changefont{0}								  
\def\submission{0}
\def\cameraready{0}
\def\llncs{1}
\def\anonymous{1}
\def\authnotes{0}
\def\choosebibstyle{0}
\def\nomathpazo{1}
\def\submission{1}
\def\llncs{1}
\def\authnotes{0}
\def\choosebibstyle{0}
\def\acknowledgments{1}
\def\nomathpazo{1}
\newcommand{\fuyuki}[1]{}
\newcommand{\ryo}[1]{}
\newcommand{\takashi}[1]{}
\newcommand{\fuyuki}[1]{$\ll$\textsf{\color{blue} Fuyuki: { #1}}$\gg$}
\newcommand{\ryo}[1]{$\ll$\textsf{\color{red} Ryo: { #1}}$\gg$}
\newcommand{\takashi}[1]{$\ll$\textsf{\color{orange} Takashi: { #1}}$\gg$}
\def\confvers{0}										   
\def\titletext{											
Secure Software Leasing from Standard Assumptions
}
\def\runningtitle{									
}
\def\titletext{											
Secure Software Leasing from Standard Assumptions
}
\def\runningtitle{									
}
\date{}										  
\def\choosepubinfo{5}						   
\def\pubinfoYEAR{2016}								
\def\pubinfoSUBMISSIONDATE{}		  
\def\pubinfoDOI{ }								 
\def\pubinfoBIBDATA{ }						 
\def\pubinfoCONFERENCE{CRYPTO}				
\def\pubinfoindividual{

}
\newcommand{\createauthor}[5]{%
	\@namedef{#1name}{#2}%
	\@namedef{#1running}{#3}%
	\@namedef{#1institute}{#4}%
	\@namedef{#1thanks}{#5}%
}
\newcommand{\createinstitute}[4]{%
	\@namedef{#1instname}{#2}%
	\@namedef{#1mail}{#3}%
	\@namedef{#1number}{#4}%
}
\newcounter{authorcount}
\newcommand{\newauthor}[4]{
	\stepcounter{authorcount}
	\createauthor{\theauthorcount}{#1}{#2}{#3}{#4}
}
\newcounter{institutecount}
\newcommand{\newinstitute}[3]{
	\stepcounter{institutecount}
	\createinstitute{\theinstitutecount}{#1}{#2}{#3}
}
\def\contactmail{}
\def\keywords{
secure software leasing, learning with errors, classical communication
}
\def\authorsofpdf{
\ifcsname 5name\endcsname
	\csname 1name\endcsname~et~al.
\else
	\ifcsname 1name\endcsname
		\csname 1name\endcsname
		\ifcsname 3name\endcsname
			,\ \csname 2name\endcsname
			\ifcsname 4name\endcsname
				,\ \csname 3name\endcsname\ and \csname 4name\endcsname
			\else%
				\ and\ \csname 3name\endcsname
			\fi
		\else
			\ifcsname 2name\endcsname
				\csname 2name\endcsname
			\fi
		\fi
	\fi
\fi
}
	\def\llncs{1}
\let\accentvec\vec
\let\vec\accentvec
\definecolor{darkblue}{rgb}{0,0,0.6}
\definecolor{darkgreen}{rgb}{0,0.5,0}
\definecolor{darkviolet}{RGB}{130,95,141}
	\def\choosepubinfo{0}
	\renewcommand*{\backref}[1]{(Cited on page~#1.)}
	\def\authnotes{0}
		\newcounter{pagewarning}
\newcommand{\authnote}[2]{\ifnum\authnotes=1 \begin{center}\fbox{\begin{minipage}{.98\textwidth}
\textbf{#1 says:} #2\end{minipage}}\end{center} \fi}
\newlength{\strutdepth}%
\newcommand{\notes}[3]{
\ifnum\authnotes=1
	\noindent{\bfseries
	\color{#1}{#3}\color{#1}}%
	\strut\vadjust{\kern-\strutdepth%
		\vtop to \strutdepth{%
			\baselineskip\strutdepth%
			\vss\llap{{\large\color{#1}\textbf{#2}\quad\color{black}}}\null%
		}%
	}%
\fi
}
\mathchardef\hyphen="2D
\newtheoremstyle{thicktheorem}%
{\ifnum\rmvtheoremspace=1
	0.4
\fi
\topsep}
{\ifnum\rmvtheoremspace=1
	0.4
\fi
\topsep}
{\itshape}{}%
{\bfseries}%
{.}
{ }%
{\thmname{#1}\thmnumber{ #2}%
	\ifnum\dropargs=0
		\thmnote{ (#3)}%
	\fi
}
\newtheoremstyle{remark}
{\ifnum\rmvtheoremspace=1
	0.4
\fi
\topsep}
{\ifnum\rmvtheoremspace=1
	0.4
\fi
\topsep}
	{}
	{}
	{}
	{.}
	{ }
	{\textit{\thmname{#1}}\thmnumber{ #2}
		\ifnum\dropargs=0
			\thmnote{ (#3)}%
		\fi
	}
	\theoremstyle{thicktheorem}
	\newtheorem{theorem}{Theorem}[section]
	\newtheorem{lemma}[theorem]{Lemma}
	\newtheorem{corollary}[theorem]{Corollary}
	\newtheorem{proposition}[theorem]{Proposition}
	\newtheorem{definition}[theorem]{Definition}
	\theoremstyle{remark}
	\newtheorem{claim}[theorem]{Claim}
	\newtheorem{remark}[theorem]{Remark}
	\newtheorem{conjecture}[theorem]{Conjecture}
\theoremstyle{remark}
\newtheorem{assumption}[theorem]{Assumption}
\newtheorem{observation}[theorem]{Observation}
\newtheorem{fact}[theorem]{Fact}
\newtheorem{experiment}{Experiment}
\newtheorem{construction}[theorem]{Construction}
\newtheorem{counterexample}[theorem]{Counterexample}
	\crefname{assumption}{Assumption}{Assumptions}
	\crefname{construction}{Construction}{Constructions}
	\crefname{corollary}{Corollary}{Corollaries}
	\crefname{conjecture}{Conjecture}{Conjectures}
	\crefname{definition}{Definition}{Definitions}
	\crefname{exmaple}{Example}{Examples}
	\crefname{experiment}{Experiment}{Experiments}
	\crefname{counterexample}{Counterexample}{Counterexamples}
	\crefname{lemma}{Lemma}{Lemmata}
	\crefname{observation}{Observation}{Observations}
	\crefname{proposition}{Proposition}{Propositions}
	\crefname{remark}{Remark}{Remarks}
	\crefname{theorem}{Theorem}{Theorems}
	\crefname{assumption}{Ass.}{Ass.}
	\crefname{construction}{Constr.}{Constr.}
	\crefname{corollary}{Cor.}{Cor.}
	\crefname{conjecture}{Conj.}{Conj.}
	\crefname{definition}{Def.}{Def.}
	\crefname{exmaple}{Ex.}{Ex.}
	\crefname{experiment}{Exp.}{Exp.}
	\crefname{counterexample}{Counterex.}{Counterex.}
	\crefname{lemma}{Lem.}{Lem.}
	\crefname{observation}{Obs.}{Obs.}
	\crefname{proposition}{Prop.}{Prop.}
	\crefname{remark}{Rem.}{Rem.}
	\crefname{theorem}{Thm.}{Thms.}
\crefname{claim}{Claim}{Claims}
\crefname{fact}{Fact}{Facts}
\crefname{note}{Note}{Notes}
\def\YYYSMcoin{\mbox{\begin{tikzpicture}[scale=0.0125]
\definecolor{coinbrown}{HTML}{D89E36}\definecolor{coindarkyellow}{HTML}{F8D81E}\definecolor{coinyellow}{HTML}{F8F800}\fill[coinyellow] (3,-1) rectangle (9,9);\fill(0,0) rectangle (1,8);\fill(1,8) rectangle (2,10);\fill(2,10) rectangle (4,11);\fill(4,11) rectangle (8,12);\fill(8,11) rectangle (10,10);\fill(10,10) rectangle (11,8);\fill(11,8) rectangle (12,0);\fill(10,-2) rectangle (11,0);\fill(8,-3) rectangle (10,-2);\fill(4,-4) rectangle (8,-3);\fill(2,-3) rectangle (4,-2);\fill(1,0) rectangle (2,-2);\fill (5,-1) rectangle (7,0);\fill (7,0) rectangle (8,8);\fill[coinbrown] (9,8) rectangle (10,10);\fill[coinbrown] (10,0) rectangle (11,8);\fill[coinbrown] (9,-2) rectangle (10,0);\fill[coinbrown] (8,-2) rectangle (9,-1);\fill[coinbrown] (4,-3) rectangle (8,-2);\fill[coindarkyellow] (2,-2) rectangle (3,8);\fill[coindarkyellow] (3,-2) rectangle (8,-1);\fill[coindarkyellow] (8,-1) rectangle (9,0);\fill[coindarkyellow] (9,0) rectangle (10,8);\fill[coindarkyellow] (8,8) rectangle (9,10);\fill[coindarkyellow] (4,9) rectangle (8,10);\fill[coindarkyellow] (3,8) rectangle (4,9);\fill[coindarkyellow] (5,0) rectangle (7,2);\fill[coindarkyellow] (6,2) rectangle (7,8);\fill[white] (4,0) rectangle (5,8);\fill[white] (5,8) rectangle (7,9);
\end{tikzpicture}}}
\def\YYYdie{\mbox{\begin{tikzpicture}[scale=0.85,x=1em,y=1em,radius=0.09]
\draw[rounded corners=1,line width=.25pt] (0,0) rectangle (1,1);\fill (0.275,0.275) circle;\fill (0.725,0.725) circle;\fill (0.5,0.5) circle;
\end{tikzpicture}}}
\newcommand{\getsr}{
	\ifnum\dieordollar=0
		\mathrel{\vbox{\offinterlineskip\ialign{
			\hfil##\hfil\cr
			\hspace{0.1em}$\scriptscriptstyle\$$\cr
			$\leftarrow$\cr
		}}}
	\fi
	\ifnum\dieordollar=1
		\mathrel{\vbox{\offinterlineskip\ialign{
			\hfil##\hfil\cr
			{\scalebox{0.5}{\hspace{0.4em}\YYYdie}}\cr
			\noalign{\kern0.05ex}
			$\leftarrow$\cr
		}}}
	\fi
	\ifnum\dieordollar=2
		\mathrel{\vbox{\offinterlineskip\ialign{
			\hfil##\hfil\cr
			\hspace{0.1em}$\YYYSMcoin$\cr
			$\leftarrow$\cr
		}}}
	\fi
}
\def\makeuppercase#1{
\expandafter\newcommand\csname sf#1\endcsname{\mathsf{#1}}
\expandafter\newcommand\csname frak#1\endcsname{\mathfrak{#1}}
\expandafter\newcommand\csname bb#1\endcsname{\mathbb{#1}}
\expandafter\newcommand\csname bf#1\endcsname{\textbf{#1}}
}
\def\makelowercase#1{
\expandafter\newcommand\csname frak#1\endcsname{\mathfrak{#1}}
\expandafter\newcommand\csname bf#1\endcsname{\textbf{#1}}
}
\newcounter{char}
	\edef\letter{\alph{char}}
	\edef\Letter{\Alph{char}}
\renewcommand*{\backref}[1]{}
\def\notxfont{1}
\DeclareMathAlphabet{\mathpzc}{OT1}{pzc}{m}{it}
\renewcommand{\subparagraph}{\paragraph}
\newcounter{game}
\def\newGames#1#2#3{%
  \xdef\gameNS{#1}\xdef\gamePrefix{#2}\setcounter{game}{#3}\addtocounter{game}{-1}%
  \immediate\write\@auxout{\string\expandafter\string\xdef\noexpand\csname game-prefix-#1\string\endcsname{#2}}%
}
\def\newGame#1{%
  \xdef\prevGame{\gamePrefix\arabic{game}}\stepcounter{game}\xdef\thisGame{\gamePrefix\arabic{game}}%
  \immediate\write\@auxout{\string\expandafter\string\xdef\noexpand\csname game-\gameNS-#1\string\endcsname{\arabic{game}}}%
}
\def\safecsname#1{\expandafter\ifx\csname#1\endcsname\relax\else\csname#1\endcsname\fi}
\renewcommand{\Game}[1][]{\mathcmd{\textrm{Game\if!#1!\else~\ensuremath{#1}\fi}}}
\newcommand{\td}{\mathsf{td}}
\newcommand{\sslgen}{\algo{Gen}}
\newcommand{\lessor}{\qalgo{Lessor}}
\newcommand{\lessee}{\qalgo{Lessee}}
\newcommand{\run}{\qalgo{Run}}
\newcommand{\sslcheck}{\qalgo{Check}}
\newcommand{\sft}{\qstate{sft}}
\newcommand{\ssl}{\mathsf{ssl}}
\newcommand{\adjmat}[1]{{#1}^{\dagger}}
\newcommand{\trdist}[1]{{\norm{#1}_{\tr}}}
\newcommand{\qb}{\qstate{b}}
\newcommand{\csft}{\keys{sft}}
\newcommand{\obligation}{\mathsf{obligation}}
\newcommand{\answer}{\mathsf{answer}}
\newcommand{\statelessee}{\qstate{st}_{\lessee}}
\newcommand{\clessor}{\mathsf{Lessor}}
\newcommand{\sslcert}{\qalgo{SSLCert}}
\newcommand{\trapgen}{\algo{TrapGen}}
\newcommand{\chosen}{\leftarrow}
\newcommand{\sample}{\leftarrow}
\renewcommand{\gets}{\leftarrow}
\newcommand{\out}{=}
\newcommand{\la}{\leftarrow}
\newcommand{\ra}{\rightarrow}
\newcommand{\seteq}{\coloneqq}
\newcommand{\tensor}{\otimes}
\newcommand{\concat}{\|}
\newcommand{\setbracket}[1]{\{#1\}}
\newcommand{\cA}{\mathcal{A}}
\newcommand{\cC}{\mathcal{C}}
\newcommand{\cD}{\mathcal{D}}
\newcommand{\cF}{\mathcal{F}}
\newcommand{\cH}{\mathcal{H}}
\newcommand{\cK}{\mathcal{K}}
\newcommand{\cM}{\mathcal{M}}
\newcommand{\cR}{\mathcal{R}}
\newcommand{\cS}{\mathcal{S}}
\newcommand{\cU}{\mathcal{U}}
\newcommand{\cX}{\mathcal{X}}
\newcommand{\cY}{\mathcal{Y}}
\def\tlC{\widetilde{C}}
\newcommand{\N}{\mathbb{N}}
\newcommand{\Z}{\mathbb{Z}}
\newcommand{\Zq}{\mathbb{Z}_q}
\newcommand{\M}{\cM}
\newcommand{\Params}{\algo{\Params}}
\newcommand{\sep}{\lambda}
\newcommand{\secp}{\lambda}
\newcommand{\crs}{\mathsf{crs}}
\newcommand{\aux}{\mathsf{z}}
\newcommand{\A}{\entity{A}}
\newcommand{\LWE}{\textrm{LWE}}
\newcommand{\SIS}{\textrm{SIS}}
\newcommand{\adva}[2]{\mathsf{Adv}_{#1}^{\mathsf{#2}}}
\newcommand{\expt}[2]{\mathsf{Expt}_{#1}^{\mathsf{#2}}}
\newcommand{\expa}[3]{\mathsf{Exp}_{#1}^{ \mathsf{#2} \mbox{-} \mathsf{#3}}}
\newcommand{\expb}[4]{\mathsf{Exp}_{#1}^{ \mathsf{#2} \mbox{-} \mathsf{#3} \mbox{-} \mathsf{#4}}}
\newcommand*{\sk}{\keys{sk}}
\newcommand*{\pk}{\keys{pk}}
\newcommand*{\pp}{\keys{pp}}
\newcommand*{\msg}{\keys{m}}
\newcommand{\prfkey}{\keys{K}}
\newcommand*{\keys}[1]{\mathsf{#1}}
\newcommand*{\qstate}[1]{\mathpzc{#1}}
\newcommand*{\algo}[1]{\ensuremath{\mathsf{#1}}}
\newcommand*{\qalgo}[1]{\ensuremath{\mathpzc{#1}}}
\newcommand*{\entity}[1]{\mathcal{#1}}
\newenvironment{boxfig}[2]{\begin{figure}[#1]\fbox{\begin{minipage}{0.97\linewidth}
                        \vspace{0.2em}
                        \makebox[0.025\linewidth]{}
                        \begin{minipage}{0.95\linewidth}
            {{
                        #2 }}
                        \end{minipage}
                        \vspace{0.2em}
                        \end{minipage}}}{\end{figure}}
\newcommand{\bit}{\{0,1\}}
\newcommand{\mm}[1]{\boldsymbol{#1}}
\newcommand{\mv}[1]{\boldsymbol{#1}}
\newcommand{\prf}{\algo{F}}
\newcommand{\Setup}{\algo{Setup}}
\newcommand{\Gen}{\algo{Gen}}
\newcommand{\gen}{\algo{Gen}}
\newcommand{\vrfy}{\algo{Vrfy}}
\newcommand{\Prove}{\algo{Prove}}
\newcommand{\prove}{\algo{Prove}}
\newcommand{\NIZK}{\mathsf{NIZK}}
\newcommand{\PRF}{\algo{PRF}}
\newcommand{\Eval}{\algo{Eval}}
\newcommand{\pEval}{\algo{pEval}}
\newcommand{\Puncture}{\algo{Puncture}}
\newcommand{\WM}{\mathsf{WM}}
\newcommand{\Mark}{\mathsf{Mark}}
\newcommand{\Extract}{\mathsf{Extract}}
\newcommand{\unmarked}{\mathsf{unmarked}}
\newcommand{\setup}{\mathsf{Setup}}
\newcommand{\negl}{{\mathsf{negl}}}
\newcommand{\zo}[1]{\{0,1\}^{#1}}
\DeclareMathOperator*{\Exp}{\mathbb{E}}
\newcommand{\xor}{\oplus}
\newcommand{\class}[1]{\mathsf{#1}}
\newcommand{\NP}{\class{NP}}
\newcommand{\Hmin}{\mathrm{H}_{\infty}}
\newcommand{\B}{\mathcal{B}}
\newcommand{\calK}{\mathcal{K}}
\newcommand{\calU}{\mathcal{U}}
\newcommand{\Sim}{\algo{Sim}}
\newcommand{\PuncPRF}{\algo{PPRF}}
\newcommand{\state}{\mathsf{st}}
\newcommand{\Dowf}{D_{\mathsf{owf}}}
\newcommand{\Rowf}{R_{\mathsf{owf}}}
\newcommand{\Fow}{\cF_{\mathsf{ow}}}
\newcommand{\owf}{f}
\newcommand{\cnc}[2]{\mathbf{C}\{#1,#2\}}
\newcommand{\Ccnc}{\cC_{\mathsf{cnc}}^{\inplen,\outlen}}
\newcommand{\Dcnc}[1]{\cD_{#1\textrm{-}\mathsf{cnc}}}
\newcommand{\prfkeyspace}{\calK}
\newcommand{\Dprf}{\zo{n}}
\newcommand{\Rprf}{\zo{m}}
\newcommand{\fksetup}{\mathsf{FkSetup}}
\newcommand{\MAC}{\mathsf{MAC}}
\newcommand{\Macgen}{\MAC.\mathsf{Gen}}
\newcommand{\Mactag}{\MAC.\mathsf{Tag}}
\newcommand{\Macvrfy}{\MAC.\mathsf{Vrfy}}
\newcommand{\mackey}{\mathsf{s}}
\newcommand{\Dmac}{D_{\mathsf{mac}}}
\newcommand{\mac}{\mathsf{tag}}
\newcommand{\event}[1]{\mathtt{#1}}
\newcommand{\qAnizk}{\qB}
\newcommand{\inplen}{n}
\newcommand{\outlen}{m}
\newcommand{\msglen}{k}
\newcommand{\first}{{(1)}}
\newcommand{\second}{{(2)}}
\newcommand{\bolt}{\qstate{bolt}}
\newcommand{\snum}{\keys{snum}}
\newcommand{\semivrfy}{\qalgo{SemiVrfy}}
\newcommand{\fullvrfy}{\qalgo{FullVrfy}}
\newcommand{\boltgen}{\qalgo{BoltGen}}
\newcommand{\boltcert}{\qalgo{BoltCert}}
\newcommand{\cert}{\keys{cert}}
\newcommand{\ttQL}{\mathsf{ttQL}}
\newcommand{\certvrfy}{\algo{CertVrfy}}
\begin{document}

\newcount\authorcounter
\newcommand{\provideauthors}{%
		\ifnum\authorcounter<\theauthorcount
			\csname\the\authorcounter name\endcsname
			\expandafter\ifx\csname\the\authorcounter thanks\endcsname\empty
			\else
				\thanks{\csname\the\authorcounter thanks\endcsname}
			\fi%
			\inst{\csname\the\authorcounter institute\endcsname} 
			\and 
			\global\advance\authorcounter by 1 
			\provideauthors
		\else
			\csname\the\authorcounter name\endcsname 
			\expandafter\ifx\csname\the\authorcounter thanks\endcsname\empty 
			\else
				\thanks{\csname\the\authorcounter thanks\endcsname} 
			\fi%
			\inst{\csname\the\authorcounter institute\endcsname} 
		\fi
}

\def\atleastoneauthorplaced{0}
\newcommand{\providerunning}{%
	\ifnum\authorcounter<\theauthorcount%
		\expandafter\ifx\csname\the\authorcounter running\endcsname\empty
		\else
			\ifnum\authorcounter>1
				\ifnum\atleastoneauthorplaced=1
					\and%
				\fi
			\fi
			\csname\the\authorcounter running\endcsname
			\def\atleastoneauthorplaced{1}
		\fi
		\global\advance\authorcounter by 1
		\providerunning%
	\else%
		\expandafter\ifx\csname\the\authorcounter running\endcsname\empty
		\else
			\ifnum\authorcounter>1
				\ifnum\atleastoneauthorplaced=1
					\and%
				\fi
			\fi
			\csname\the\authorcounter running\endcsname
		\fi
	\fi
}

\newcount\institutecounter

\newcommand{\provideinstitutes}{%
	\ifnum\institutecounter<\theinstitutecount%
		\ifnum\llncs=0
			$^{\csname\the\institutecounter number\endcsname}$
		\fi
		\csname\the\institutecounter instname\endcsname
		
		\email{
			\ifx\contactmail\empty
				\csname\the\institutecounter mail\endcsname
			\else
				\href{mailto:\contactmail}{\csname\the\institutecounter mail\endcsname}
			\fi
		}
		
		\and%
			\global\advance\institutecounter by 1
		\provideinstitutes%
	\else%
		\ifnum\llncs=0
			\ifcsname 1name\endcsname
				$^{\csname\the\institutecounter number\endcsname}$
			\fi
		\fi
		\csname\the\institutecounter instname\endcsname
		
		\email{
			\ifx\contactmail\empty
				\csname\the\institutecounter mail\endcsname
			\else
				\href{mailto:\contactmail}{\csname\the\institutecounter mail\endcsname}
			\fi
		}
	\fi
}

\title{
	\ifnum\stuffedtitlepage=1
		\ifnum\llncs=1
			\vspace*{-7ex}
		\else
		\vspace*{-3ex}
		\fi
		\textbf{\titletext}
		\ifnum\llncs=1
			\vspace*{-2ex}
		\else
			\vspace*{-1ex}
		\fi
	\else
		\textbf{\titletext}
	\fi
}
\ifnum\anonymous=1
	\author{}
\else
	\ifnum\llncs=0
		\newcommand{\inst}[1]{{
			\ifcsname 1name\endcsname
				$^{#1}$
			\fi
			}}
	\fi
	\ifcsname 1name\endcsname
		\author{
			\global\authorcounter 1
			\provideauthors
		}
	\fi
\fi

\ifnum\llncs=1
	\titlerunning{\runningtitle}
	\ifnum\anonymous=1
		\institute{}
		\authorrunning{}
	\else
		\ifcsname 1instname\endcsname{
			\institute{
				\global\institutecounter 1
				\provideinstitutes
			}
		\fi
		\ifcsname 1name\endcsname{
			\authorrunning{
				\global \authorcounter 1
				\providerunning
			}
		\fi
	\fi
\fi
\maketitle
\ifnum\stuffedtitlepage=1
	\ifnum\llncs=0
		\vspace{-4ex}
	\fi
\fi

\ifnum\llncs=0
	\ifnum\anonymous=0
		\newcommand{\email}[1]{
			\texttt{
				\ifx\contactmail\empty
					#1
				\else
					\href{mailto:\contactmail}{#1}
				\fi
			}
		}
		\newcommand{\and}{}
		\ifnum\stuffedtitlepage=1
			\ifnum\llncs=0
				\vspace{-2ex}
			\fi
		\fi
		\begin{small}
			\begin{center}
				\global \institutecounter 1
				\provideinstitutes
			\end{center}
		\end{small}
	\fi
\fi

\ifnum\stuffedtitlepage=1
	\ifnum\llncs=1
		\vspace*{-4ex}
	\else
		\vspace*{-2ex}
	\fi
\fi

\begin{abstract}

	\vspace{1ex}
\ifnum\llncs=1
\else

	\textbf{Keywords\ifnum\llncs=1{.}\else{:}\fi}\keywords
\fi
\end{abstract}
\ifnum\stuffedtitlepage=1
	\ifnum\llncs=1
		\vspace*{-2ex}
	\fi
\fi

\ifnum\llncs=0
	\vspace{1ex}
\fi

\ifnum\choosepubinfo=1
\def\pubinfo{
	\noindent An extended abstract of this paper will appear at
	\ifx\pubinfoCONFERENCE\empty \textcolor{red}{conference missing}\else \pubinfoCONFERENCE\fi.
}
\fi

\ifnum\choosepubinfo=2
	\def\pubinfo{
		\noindent \copyright\ IACR 
		\ifx\pubinfoYEAR\empty \textcolor{red}{year missing}\else \pubinfoYEAR\fi.
		This article is the final version submitted by the author(s) to the IACR and to Springer-Verlag on
		\ifx\pubinfoSUBMISSIONDATE\empty \textcolor{red}{submission date missing}\else \pubinfoSUBMISSIONDATE\fi.
		The version published by Springer-Verlag is available at
		\ifx\pubinfoDOI\empty \textcolor{red}{DOI missing}\else \pubinfoDOI\fi.
	}
\fi

\ifnum\choosepubinfo=3
	\def\pubinfo{
		\noindent \copyright\ IACR
		\ifx\pubinfoYEAR\empty \textcolor{red}{year missing}\else \pubinfoYEAR\fi.
		This article is a minor revision of the version published by Springer-Verlag available at
		\ifx\pubinfoDOI\empty \textcolor{red}{DOI missing}\else \pubinfoDOI\fi.
	}
\fi

\ifnum\choosepubinfo=4
	\def\pubinfo{
		\noindent This article is based on an earlier article:
		\ifx\pubinfoBIBDATA\empty \textcolor{red}{bibliographic data missing}\else \pubinfoBIBDATA\fi,
		\copyright\ IACR
		\ifx\pubinfoYEAR\empty \textcolor{red}{year missing}\else \pubinfoYEAR\fi,
		\ifx\pubinfoDOI\empty \textcolor{red}{DOI missing}\else \pubinfoDOI\fi.
	}
\fi

\ifnum\choosepubinfo=5
		\def\pubinfo{
			\noindent \pubinfoindividual
		}
	\fi

\textblockorigin{0.5\paperwidth}{0.9\paperheight}
\setlength{\TPHorizModule}{\textwidth}

\newlength{\pubinfolength}
\ifnum\choosepubinfo=0
\else
	\settowidth{\pubinfolength}{\pubinfo}
	\begin{textblock}{1}[0.5,0](0,.25)
		 \ifnum\pubinfolength<\textwidth
			\centering
		\fi
		\pubinfo
	\end{textblock}
\fi
\thispagestyle{empty}
		\ifnum\confvers=1
\else
\ifnum\submission=1
\else
\newpage
  \setcounter{tocdepth}{2}      
  \setcounter{secnumdepth}{2}   
  \setcounter{page}{0}          
  \tableofcontents
  \thispagestyle{empty}
\newpage
\fi
\fi

\section{Introduction}\label{sec:intro}

\subsection{Background}\label{sec:background}
Secure software leasing (SSL) introduced by Ananth and La Placa~\cite{EC:AnaLaP21} is a quantum cryptographic primitive that enables an authority (the lessor) to lease software\footnote{Software is modeled as (Boolean) circuits or functions.} to a user (the lessee) by encoding it into a quantum state. SSL prevents users from generating authenticated pirated copies of leased software, where authenticated copies indicate those run on the legitimate platforms.

More specifically, an SSL is the following protocol between the lessor and lessee.
The lessor generates a secret key $\sk$ used to create a leased version of a circuit $C$. The leased version is a quantum state and denoted by $\sft_C$. The lessor leases the functionality of $C$ to the lessee by providing $\sft_C$. The lessee can compute $C(x)$ for any input $x$ by using $\sft_C$. That is, there exists a quantum algorithm $\run$ and it holds that $\run(\sft_C,x) = C(x)$ for any $x$. The lessor can validate the states returned from the user by using the secret key. That is, there exists a quantum algorithm $\sslcheck$ and $\sslcheck(\sk,\sft_C)$ outputs whether $\sft_C$ is a valid leased state or not. Since users can create as many copies of classical information as they want, we need the power of quantum computing to achieve SSL.

Ananth and La Placa introduced two security notions for SSL, that is, infinite-term security and finite term security.
Infinite-term security guarantees that given a single leased state of a circuit $C$, adversaries cannot generate possibly entangled bipartite states $\sft_0^\ast$ and $\sft_1^\ast$ both of which can be used to compute $C$ with $\run$.
Finite-term security guarantees that adversaries cannot generate possibly entangled bipartite states $\sft_0^\ast$ and $\sft_1^\ast$ such that $\sslcheck(\sk,\sft_0^\ast)=\top$ (returning a valid leased state) and $\run(\sft_1^\ast,x)=C(x)$ (adversary still can compute $C$ by using $\sft_1^\ast$) in an SSL scheme.
Roughly speaking, finite-term security guarantees that adversaries cannot compute $C(x)$ via $\run$ after they return the valid leased state to the lessor.

\paragraph{SSL and copy-protection.}
Quantum software copy-protection~\cite{CCC:Aaronson09} is a closely related notion to SSL.    
Quantum copy-protection guarantees the following. When adversaries are given a copy-protected circuit for computing $C$, they cannot create two (possibly entangled) quantum states, both of which can be used to compute $C$. Here, adversaries are not required to output a quantum state that follows an honest evaluation algorithm $\run$ (they can use an arbitrary evaluation algorithm $\run^\prime$).
Software copy-protection can be crucial technology to prevent software piracy since users lose software if they re-distribute it. Quantum copy-protection for some circuits class is also known to yield public-key quantum money~\cite{CoRR:AarLiuZha20}.

Although SSL is weaker than copy-protection, SSL (with even finite-term security) has useful applications such as limited-time use software, recalling buggy software, preventing drain of propriety software from malicious employees~\cite{EC:AnaLaP21}.
SSL makes software distribution more controllable. In addition, achieving SSL could be a crucial stepping stone to achieve quantum software copy-protection.

\ryo{I added the following paragraph from the EC21 rebuttal.}
One motivative example of (finite-term secure) SSL is a video game platform. A user can borrow a video game title from a company and enjoy it on an appropriate platform (like Xbox of Microsoft). After the user returned the title, s/he cannot enjoy it on the appropriate platform. The title is not guaranteed to work on another (irregular) platform. Thus, SSL is a useful tool in this use case.

\paragraph{(Im)possibility of SSL and copy-protection.}
Although SSL and software copy-protection have many useful applications, there are few positive results on them.
Aaronson observed that learnable functions could not be copy-protected~\cite{CCC:Aaronson09}. He also constructed a copy-protection scheme for arbitrary unlearnable Boolean functions relative to a quantum oracle
and two \emph{heuristic} copy-protection schemes for point functions in the standard model~\cite{CCC:Aaronson09}.
Aaronson, Liu, and Zhang constructed a quantum copy-protection scheme for unlearnable functions relative to classical oracles~\cite{CoRR:AarLiuZha20}.
There is no secure quantum copy-protection scheme with a reduction-based proof \emph{without classical/quantum oracles}. We do not know how to implement such oracles under cryptographic assumptions in the previous works.

Ananth and La Placa constructed an infinite-term secure SSL scheme for a sub-class of evasive functions in the common reference string (CRS) model by using public-key quantum money~\cite{STOC:AarChr12,JC:Zhandry21} and the learning with errors (LWE) assumption~\cite{EC:AnaLaP21}. Evasive functions is a class of functions such that it is hard to find an accepting input (a function outputs $1$ for this input) only given black-box access to a function. They also prove that there exists an unlearnable function class such that it is impossible to achieve an SSL scheme for that function class even in the CRS model. The SSL scheme by Ananth and La Placa is the only one positive result without classical/quantum oracles on this topic before our work.\footnote{We will refer to a few concurrent works in~\cref{sec:concurrent_work}.}
\fuyuki{Should be delete the last sentence?}\ryo{I added a footnote.}

\paragraph{Motivation.} 
There are many fascinating questions about SSL/copy-protection. We focus on the following three questions in this study.

The first one is whether we can achieve SSL/copy-protection from standard assumptions. Avoiding strong assumptions is desirable in cryptography. It is not known whether public-key quantum money is possible under standard assumptions. Zhandry proves that post-quantum indistinguishability obfuscation (IO)~\cite{JACM:BGIRSVY12} implies public-key quantum money~\cite{JC:Zhandry21}. Several works~\cite{TCC:CHVW19,EC:AgrPel20,TCC:BGMZ18,mySTOC:GayPas21,EPRINT:BDGM20b,EC:WeeWic21} presented candidate constructions of post-quantum secure IO by using lattices.\footnote{Their constructions need heuristic assumptions related to randomness leakage and circular security~\cite{EPRINT:BDGM20b,mySTOC:GayPas21}, a heuristic construction of oblivious LWE sampling~\cite{EC:WeeWic21}, a heuristic construction of noisy linear functional encryption~\cite{EC:AgrPel20}, or an idealized model~\cite{TCC:BGMZ18,TCC:CHVW19}. Some heuristic assumptions~\cite{mySTOC:GayPas21,EC:WeeWic21,EPRINT:BDGM20b} were found to be false~\cite{C:HopJaiLin21}.} There are several other candidate constructions of public key quantum money ~\cite{ITCS:FGHLS12,JC:Zhandry21}. However, none of them has a  reduction to standard assumptions.

The second question is whether we can achieve SSL/copy-protection only with classical communication and local quantum computing as in the case of quantum money~\cite{AFT:RadSat19,STOC:AGKZ20}.   
Even if quantum computers are available, communicating only classical data is much easier than communicating quantum data over quantum channels. Communication infrastructure might not be updated to support quantum data soon, even after practical quantum computers are commonly used.

The third question is whether we can achieve SSL/copy-protection beyond for evasive functions. The function class is quite limited. For practical software protection, it is crucial to push the function class's boundaries where we can achieve SSL/copy-protection.

\subsection{Our Results}\label{sec:our_result}
We constructed finite-term secure SSL schemes from standard assumptions in this study. We prove the following theorems. 
\begin{theorem}[informal]\label{thm:informal_main_one}
Assuming the hardness of the LWE problem against polynomial time quantum adversaries, there is a finite-term secure SSL scheme and SSL scheme with classical communication for pseudorandom functions (PRFs) in the CRS model.
\end{theorem}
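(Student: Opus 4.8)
The plan is to build the SSL scheme for PRFs in two stages: first instantiate the two-tier quantum lightning primitive $\ttQL$ from $\LWE$, then combine it with a puncturable PRF so that the leased state becomes an unclonable ``token'' that is needed to unlock PRF evaluations, and finally replace the quantum-channel token by a classical-channel analogue to get the classical-communication variant.

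\textbf{Stage 1: two-tier quantum lightning from $\LWE$.} I would instantiate $\ttQL$ from (adaptive) noisy trapdoor claw-free functions $\antcf$, which follow from $\LWE$ against quantum adversaries. The CRS holds a public description of a claw-free function family; $\boltgen$ prepares a claw state of the form $\cfpsp$ together with its serial number $\snum=y$, the common image. The public $\semivrfy$ checks only a publicly verifiable condition tied to $\snum$ (membership of $\snum$ in the range and acceptance of a projective measurement onto the relevant subspace), whereas the private $\fullvrfy$ uses the $\antcf$ trapdoor to test that the state is genuinely supported on a claw above $\snum$. Unclonability in the two-tier sense --- no adversary outputs an entangled pair, one half passing $\semivrfy$ and the other $\fullvrfy$ with the \emph{same} $\snum$ --- would be reduced to the adaptive hardcore-bit property of $\antcf$: measuring the two halves in complementary bases would yield both a preimage and a hardcore-bit equation for the same $\snum$, which is infeasible.

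\textbf{Stage 2: SSL for PRFs.} The leased state $\sft$ for the PRF circuit $C=\PRF(K,\cdot)$ would consist of a bolt $\bolt$ with serial number $\snum$ together with a classical string that binds $K$ to $\snum$ in such a way that $K$ can be recovered only ``in superposition along the bolt'': concretely, encrypt $K$ under a key derived --- via a PRF whose key sits in the output of $\sslgen$ --- from $\snum$, using an encryption scheme for which the bolt is a superposition of two valid decryption keys. Then $\run(\sft,x)$ homomorphically evaluates ``decrypt, then apply $\PRF(\cdot,x)$'' on the superposition; the two branches yield the same classical value $\PRF(K,x)$, so the output disentangles and the bolt is left essentially intact, giving correctness. $\sslcheck(\sk,\sft)$ reads off $\snum$, recomputes the derived key, and runs $\fullvrfy$ on the bolt. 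For security, suppose an adversary returns $\sft_0^\ast$ passing $\sslcheck$ while keeping $\sft_1^\ast$ that still predicts $\PRF(K,\cdot)$ on a random challenge input $x^\ast$ via some $\run^\ast$. I would first pass to a hybrid in which $K$ is replaced by a key punctured at $x^\ast$ and $\PRF(K,x^\ast)$ is replaced by a uniform string; by puncturable-PRF security this is indistinguishable, and in this hybrid the semantic security of the $\snum$-derived encryption key implies that $\sft_1^\ast$ cannot predict the challenge value \emph{unless} it still contains a bolt usable for decryption, i.e.\ a $\semivrfy$-passing bolt for $\snum$. Together with the $\fullvrfy$-passing bolt inside $\sft_0^\ast$, this contradicts the two-tier unclonability of $\ttQL$. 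For the classical-communication theorem, I would replace the bolt-delivery step by an $\antcf$-based interactive protocol in which the lessee ends up holding the claw state while only classical messages pass over the channel, and re-run the same construction and proof on top of it.

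\textbf{Main obstacle.} The delicate point is the extraction step inside Stage 2's security proof: $\sft_1^\ast$ is only required to predict PRF outputs through an \emph{arbitrary} $\run^\ast$, not the honest algorithm, so converting ``$\sft_1^\ast$ is a good predictor'' into ``$\sft_1^\ast$ contains a $\semivrfy$-passing bolt'' must be done without destroying the state, and simultaneously with the event ``$\sft_0^\ast$ passes $\sslcheck$'' on the \emph{bipartite} state. I expect to handle this with projective-implementation / threshold-measurement machinery (the $\pisp$-style tools the paper sets up) together with a gentle-measurement argument, so that the two events can be treated as essentially simultaneous; a further subtlety is orchestrating the puncturing hybrid and this extraction so that the reduction never simultaneously needs the $\antcf$ trapdoor and the non-punctured key, which is what forces the specific layering (public $\semivrfy$ vs.\ private $\fullvrfy$) of the two-tier primitive in the first place.
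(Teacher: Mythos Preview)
Your Stage~1 is essentially right and matches the paper, but Stage~2 rests on a misreading of the SSL security model that sends you down a much harder road than necessary. In the paper's definition (inherited from Ananth--La Placa), the adversary's retained state $\sft_1^\ast$ is \emph{not} allowed to use an arbitrary evaluator $\run^\ast$: the winning condition is that the \emph{honest} $\run$ algorithm, applied to $\sft_1^\ast$, still computes $C$. This is precisely what separates SSL from copy-protection, and it makes your ``main obstacle'' disappear entirely. The paper exploits this by building $\run$ so that it first extracts a mark $\pk'\|\snum'$ from the classical part, runs $\semivrfy(\pk',\snum',\cdot)$ on the quantum part, and outputs $\bot$ unless that passes. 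Hence ``$\run$ outputs a non-$\bot$ value'' \emph{syntactically} implies ``the quantum part passes $\semivrfy$ for $(\pk',\snum')$''; no extraction, no gentle-measurement machinery, no projective-implementation tools are needed.

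The paper's actual construction is therefore quite different from your encryption-of-$K$ idea: it combines two-tier QL with a \emph{relaxed watermarking} scheme for PRFs (built from a key-injective puncturable PRF and a true-simulation-extractable NIZK, all from LWE) and an OT-MAC. The leased software is $(\bolt,\tlC,\mac)$ where $\tlC$ is a watermarked version of the PRF circuit with mark $\pk\|\snum$, and $\mac$ authenticates $\snum$. The security proof is a clean three-way case split on the adversary's output: if the extracted $\snum^{(1)}$ in the returned copy differs from $\snum$, the MAC is broken; if the extracted mark $\pk^{(2)}\|\snum^{(2)}$ in the kept copy differs from $\pk\|\snum$, watermarking unremovability is broken (this is where puncturing and the seNIZK enter, inside the watermarking proof); and if both equal the honest values, then the returned copy passes $\fullvrfy$ and the kept copy passes $\semivrfy$ for the same $\snum$, contradicting two-tier unclonability. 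Your encryption-based construction, by contrast, has a genuine gap even before the model issue: if both $x_0$ and $x_1$ decrypt to $K$, the lessee can simply measure the bolt, learn one $x_b$, decrypt to get $K$ in the clear, and thereafter predict $\PRF(K,\cdot)$ with no bolt at all --- so the puncturing hybrid you propose is distinguishable, and ``predicts PRF'' does not imply ``contains a $\semivrfy$-passing state''.
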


\begin{theorem}[informal]\label{thm:informal_main_two}
Assuming the hardness of the LWE problem against sub-exponential time quantum adversaries, there is a finite-term secure SSL scheme and SSL scheme with classical communication for a subclass of evasive functions in the CRS model.
\end{theorem}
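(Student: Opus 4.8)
The plan is to follow the same template that underlies \cref{thm:informal_main_one} --- build an SSL scheme out of the two-tier quantum lightning scheme $\ttQL=(\boltgen,\semivrfy,\fullvrfy)$ from \LWE{} constructed in the body, together with a ``watermarkable'' realization of the target functionality --- but to instantiate the watermarkable ingredient with \emph{lockable (compute-and-compare) obfuscation}, which is itself constructible from \LWE{}. Concretely, the ``subclass of evasive functions'' I would handle is the class of compute-and-compare functions $\mathsf{CC}[f,y]$ (with $\mathsf{CC}[f,y](x)=1$ iff $f(x)=y$), drawn from distributions in which $y$ retains super-logarithmic pseudo-entropy given $f$ --- exactly the regime in which lockable obfuscation is secure and exactly the sense in which such a function is evasive (an accepting input is an $x$ with $f(x)=y$, and $y$ is computationally hidden). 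In the scheme, $\sslgen$ generates the $\ttQL$ keys, publishes the semi-verification key together with the lockable-obfuscation CRS in the common reference string, and keeps the full-verification key inside the lessor's secret key $\sk$. To lease $\mathsf{CC}[f,y]$, the lessor samples $(\bolt,\snum)\gets\boltgen$ (concretely $\bolt$ may be taken to be a claw state $\cfpsp$ with image $\snum$, so that a valid witness for $\snum$ is a preimage $(b,x_b)$) and outputs $\sft_{\mathsf{CC}[f,y]}:=(\bolt,\widetilde{P})$, where $\widetilde{P}$ is a lockable obfuscation of the program that on input $(x,\sigma)$ releases the value of $\mathsf{CC}[f,y](x)$ only when $\sigma$ is a valid witness for $\snum$. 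Then $\run(\sft,x)$ uses $\semivrfy$ to certify the bolt component and coherently extract a witness $\sigma$ for its serial number, and feeds $(x,\sigma)$ to $\widetilde{P}$; since $\widetilde{P}$ is useless without a live bolt, the classical component computes nothing on its own. The algorithm $\sslcheck(\sk,\cdot)$ accepts iff $\fullvrfy$, run with $\sk$, accepts the returned bolt relative to the serial number the lessor recorded at lease time. Correctness of $\run$ follows from correctness of $\semivrfy$ on honest bolts and of the lockable obfuscation, provided the evaluation is coherent enough to leave the bolt essentially intact --- a routine point inherited from the PRF construction.

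For the leasing (anti-piracy) property, suppose an adversary, given $\sft_{\mathsf{CC}[f,y]}$, outputs a possibly entangled pair $(\sft_0^\ast,\sft_1^\ast)$ with $\sslcheck(\sk,\sft_0^\ast)=\top$ and $\run(\sft_1^\ast,x)=\mathsf{CC}[f,y](x)$ for every $x$. Passing $\sslcheck$ forces the bolt inside $\sft_0^\ast$ to pass $\fullvrfy$ relative to the original serial number $\snum$ (the adversary cannot mint a fresh bolt with a prescribed serial). Let $\snum_1^\ast$ be the serial number extracted when $\run$ succeeds on $\sft_1^\ast$. If $\snum_1^\ast=\snum$, then the bolt inside $\sft_1^\ast$ passes $\semivrfy$ relative to $\snum$, so from the single leased bolt we have produced two (entangled) states with the same serial number, one passing $\semivrfy$ and the other $\fullvrfy$ --- forbidden by the security of $\ttQL$. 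Otherwise $\snum_1^\ast\neq\snum$, and then the program embedded in $\sft_1^\ast$ computes $\mathsf{CC}[f,y]$ on all inputs while reacting to witnesses for a serial number the adversary was never handed, so the leased bolt is not driving that computation; the plan is to derive a contradiction with the security of the lockable obfuscation together with the pseudo-entropy of $y$, by replacing $\widetilde{P}$ with its simulator (making the adversary's entire view independent of $(f,y)$) and arguing that ``$\run(\sft_1^\ast,\cdot)$ agrees with $\mathsf{CC}[f,y]$'' would then require the adversary to have effectively recovered the hidden accepting set of $f$, which happens with probability at most $\max_x\Pr[f(x)=y]$, negligible by assumption.

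The step I expect to be the main obstacle is turning this last, informal statement into a sound reduction. To certify that $\sft_1^\ast$ has ``recovered an accepting input'' one has to locate such an $x^\ast$ by running $\run(\sft_1^\ast,\cdot)$, which entails searching (or guessing) over the input domain; hence the reduction to the hardness underlying the lockable obfuscation and the unpredictability of $y$ runs in time polynomial in $2^{n}$, where $n$ is the input length. This is precisely why \cref{thm:informal_main_two} assumes \LWE{} hard against \emph{sub-exponential}-time quantum adversaries, while \cref{thm:informal_main_one} needs only polynomial hardness --- no domain-sized search arises for PRFs. A secondary design point to get right is that $\run$ must genuinely depend on the live quantum bolt (so that neither $\widetilde{P}$ alone nor any classical transcript one could measure out of the bolt suffices), since otherwise the reduction to $\ttQL$ in the first case collapses.

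Finally, the classical-communication variant is obtained exactly as in \cref{thm:informal_main_one}: the only quantum object transmitted is the bolt, and the $\ttQL$ machinery from \LWE{} developed in the body comes with a classical-channel protocol by which the lessee ends up holding $\bolt$ while the lessor learns only $\snum$; the obfuscated program and all verification messages are already classical, so substituting that protocol yields an SSL scheme with purely classical communication and local quantum computation, and the analysis above carries over unchanged.
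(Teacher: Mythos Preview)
Your high–level template (two-tier quantum lightning plus a classical ``mark'' component, then a case split on whether the serial number attached to $\sft_1^\ast$ equals the original $\snum$) matches the paper, and you are right that the sub-exponential assumption enters because the reduction in the $\snum_1^\ast\neq\snum$ case must brute-force the input domain. But the way you propose to realise the classical component is not what the paper does, and your plan has a real gap.

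In the paper the bolt and the circuit are \emph{decoupled}: $\run$ merely runs $\semivrfy$ on the bolt (it does not extract anything from it) and then evaluates a self-contained marked circuit $\tlC$; the only link is that the mark embedded in $\tlC$ is $\pk\|\snum$. The $\snum_1^\ast\neq\snum$ case is then handled by a \emph{relaxed watermarking} scheme for compute-and-compare circuits built from an injective OWF and a \emph{true-simulation extractable NIZK}: the lessor publishes $(\msg,y{=}\owf(\alpha),C,\pi)$ where $\pi$ proves knowledge of an accepting input. If the adversary outputs a functional $\tlC^\ast$ with a different mark, then $\pi^\ast$ is a fresh valid proof on a new statement, and simulation-extractability lets the reduction \emph{extract} an accepting $x^\ast$, contradicting one-wayness of $\owf$. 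The $2^n$-time step is only the functional-equivalence check inside the NIZK reduction; the NIZK (unlike lockable obfuscation) can be made secure against $2^n$-time adversaries from sub-exponential LWE.

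Your plan instead couples the bolt to the program via a witness $\sigma$ and appeals to lockable obfuscation. Two problems arise. First, if the $\sigma$-check is a plain wrapper around $\widetilde P_{\mathrm{inner}}=\mathsf{LO}(\mathsf{CC}[f,y])$, the adversary can simply read off $\widetilde P_{\mathrm{inner}}$, mint a fresh $(\bolt',\snum')$, and set $\widetilde P'(x,\sigma')=\widetilde P_{\mathrm{inner}}(x)$ gated on $\snum'$; this breaks the scheme outright. Second, and more fundamentally, your ``replace $\widetilde P$ by its simulator'' step does not yield a usable distinguisher: lockable-obfuscation security gives you $(\mathsf{LO}\text{ or }\mathsf{Sim},\ f)$ but \emph{not} $y$, so the reduction cannot test whether $\run(\sft_1^\ast,\cdot)$ agrees with $\mathsf{CC}[f,y]$. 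If instead the reduction brute-forces an accepting input of $\sft_1^\ast$ and evaluates the challenge on it, you need lockable obfuscation secure against $2^n$-time adversaries—but $\mathsf{LO}(\mathsf{CC}[f,y])$ is itself trivially broken in $2^n$ time by enumerating inputs, so that assumption is vacuous. This is exactly why the paper uses a \emph{knowledge}-type tool (seNIZK) rather than an indistinguishability-type tool (lockable obfuscation): the former survives the $2^n$-time reduction, the latter cannot.
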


The notable features of our SSL schemes are the following.
\begin{itemize}
\item Constructed via a clean and unified framework.
\item Secure under standard assumptions (the LWE assumption).
\item Can be achieved only with classical communication.
\item Supporting functions other than a sub-class of evasive functions.
\end{itemize}

The crucial tools in our framework are two-tier quantum lighting, which we introduce in this study, and (a relaxed version of) software watermarking~\cite{JACM:BGIRSVY12,SIAMCOMP:CHNVW18}. Two-tier quantum lighting is a weaker variant of quantum lighting~\cite{JC:Zhandry21}. Interestingly, two-tier quantum lightning can be instantiated with standard assumptions, while quantum lightning is not so far. Another exciting feature is that software watermarking can be a building block of SSL. Our study gives a new application of software watermarking. By using these tools, our SSL constructions are modular, and we obtain a clean perspective to achieve SSL.
Our abstracted construction ensures that a relaxed watermarking scheme for any circuit class can be converted to SSL for the same class assuming the existence of two-tier QL.
As a bonus, our schemes are based on standard assumptions (i.e., do not rely on public-key quantum money). However, our schemes are \emph{finite-term} secure while the scheme by Ananth and La Placa~\cite{EC:AnaLaP21} is \emph{infinite-term} secure.\ryo{I added this sentence.} See~\cref{sec:overview} for an overview of our technique, (two-tier) quantum lightning, and software watermarking.

We can achieve SSL schemes with classical communication, where entities send only classical information to other entities (though they generate quantum states for their local computation). Our schemes are the first SSL schemes with classical communication.

We present the first SSL schemes for function classes other than evasive functions. Our schemes open the possibilities of software copy-protection for broader functionalities in the standard model.

\subsection{Related Work}\label{sec:related_work}
 Amos, Georgiou, Kiayias, and Zhandry presented many hybrid quantum cryptographic protocols, where we exchange only classical information and local quantum operation can yield advantages~\cite{STOC:AGKZ20}. Their constructions are secure relative to classical oracles. Radian and Sattath presented the notion of semi-quantum money, where both minting and verification protocols are interactive with classical communication~\cite{AFT:RadSat19}.
 Georgiou and Zhandry presented the notion of unclonable decryption keys~\cite{myEPRINT:GeoZha20}, which can be seen as quantum copy-protection for specific cryptographic tasks.

 \subsection{Concurrent Work}\label{sec:concurrent_work}
Aaronson et al.~\cite{CoRR:AarLiuZha20} significantly revised their paper in October 2020 and added new results in the revised version with additional authors~\cite{C:ALLZZ21}. They use a similar idea to ours to achieve their additional results. They achieved software copy-detection, which is a version of finite-term secure SSL, from public key quantum money and watermarking. They defined their copy detection so that it can provide natural security guarantee even if we consider leasing decryption or signing functionalities of cryptographic primitives.
As previously discussed in the context of watermarking~\cite{C:GKMWW19}, when considering those functionalities, we need to take a wider class of adversaries into consideration than considering just functions including PRF.
In fact, the reason why we focus only on PRF functionalities among cryptographic functionalities is that there was no definition of SSL that can handle decryption or signing functionalities.
We believe that by combining the work by Aaronson et al.~\cite{C:ALLZZ21} and our work, we can realize finite-term secure SSL for decryption and signing functionalities \ryo{for decryption and signing?}\fuyuki{Done.} based on the LWE assumption under a reasonable definition.

Coladangelo, Majenz, and Poremba~\cite{CoRR:ColMajPor20} realized finite-term secure SSL for the same sub-class of evasive functions as Ananth and La Placa~\cite{EC:AnaLaP21} using the quantum random oracle.
Based on their work, Broadbent, Jeffery, Lord, Podder, and Sundaram~\cite{CoRR:BJLPS21} showed that finite-term secure SSL for the class can be realized without any assumption.
We note that the definition of SSL used in these two works is different from the definition by Ananth and La Placa that we basically follow in this work.
Their definition has a nice property that their security notion captures any form of pirated copies rather than just authorized copies.
On the other hand, in their definition, not only the security notion, but also the correctness notion is parameterized by distributions on inputs to functions.
The security and correctness of the SSL schemes proposed in those works hold with respect to a specific distribution.

The advantage of our results over the above concurrent results is that we achieve SSL for functions beyond evasive functions, that is, PRF under standard lattice assumptions.
Moreover, our work is the first one that considers classical communication in the context of SSL.


\subsection{Technical Overview}\label{sec:overview}

\paragraph{Definition of SSL}
We review the definition of SSL given in \cite{EC:AnaLaP21}.
In this paper, we use a calligraphic font to represent quantum algorithms and calligraphic font or bracket notation to represent
quantum states following the notation of~\cite{STOC:AGKZ20}.
 
Formally, an SSL for a function class $\cC$ consists of the following algorithms. 
\begin{description}
\item[$\setup(1^\secp)\rightarrow \crs$:] This is a setup algorithm that generates a common reference string.  
\item[$\sslgen(\crs)\rightarrow \ssl.\sk$:] This is an algorithm supposed to be run by the lessor that generates lessor's secret key $\ssl.\sk$. The key is used to generate a leased software and verify the validity of a software returned by the lessee.
\item[$\lessor(\ssl.\sk,C)\rightarrow \sft_C$:] This is an algorithm supposed to be run by the lessor that generates a leased software $\sft_C$ that computes a circuit $C$.
\item[$\run(\crs,\sft_C,x)\rightarrow C(x)$:] This is an algorithm supposed to be run by the lessee to evaluate the software. As correctness, we require that the output should be equal to $C(x)$ with overwhelming probability if $\sft_C$ is honestly generated.\footnote{In the actual syntax, it also outputs a software, which is negligibly close to a software given as input.}
\item[$\sslcheck(\ssl.\sk,\sft_C)\rightarrow \top/\bot$:]
This is an algorithm supposed to be run by the lessor to check the validity of the software $\sft_C$ returned by the lessee.
As correctness, we require that this algorithm returns $\top$ (i.e., it accepts) with overwhelming probability if $\sft_C$ is an honestly generated one.
\end{description}

In this work, we focus on finite-term secure SSL.
Roughly speaking, the finite-term security of SSL requires that no quantum polynomial time (QPT) adversary given $\sft_C$ (for randomly chosen $C$ according to a certain distribution) can generate (possibly entangled) quantum states $\sft_0$ and $\sft_1$ such that 
$\sslcheck(\ssl.\sk,\sft_0)\rightarrow \top$ and $\run(\crs,\sft_1,\cdot)$ computes $C$ with non-negligible probability.
Thus, intuitively, the finite-term security ensures that finite-term security guarantees that adversaries cannot compute $C(x)$ via $\run$ after they return the valid leased state to the lessor.

\paragraph{Construction of SSL in \cite{EC:AnaLaP21}}
We review the construction of SSL in \cite{EC:AnaLaP21}.
Their construction is based on the following three building blocks:
\begin{itemize}
\item[\textit{Publicly verifiable unclonable state generator.}]
This enables us to generate a pair $(\pk,\sk)$ of public and secret keys in such a way that the following conditions are satisfied:
\begin{enumerate}
\item Given $\sk$, we can efficiently generate a quantum state $\ket{\psi_\pk}$.
\item Given $\pk$, we can efficiently implement a projective measurement $\{\ket{\psi_\pk}\bra{\psi_\pk},\allowbreak I-\ket{\psi_\pk}\bra{\psi_\pk}\}$.
\item Given $\pk$ and $\ket{\psi_\pk}$, no QPT algorithm can generate $\ket{\psi_\pk}^{\otimes 2}$ with non-negligible probability. 
\end{enumerate}
Aaronson and Christiano \cite{STOC:AarChr12} constructed a publicly verifiable unclonable state generator (under the name ``quantum money mini-scheme'') relative to a classical oracle, and Zhandry \cite{JC:Zhandry21} gave an instantiation in the standard model assuming post-quantum IO.
\item[\textit{Input-hiding obfuscator.}]
  This converts a circuit $C\in \cC$ (that is taken from a certain distribution) 
  to a functionally equivalent obfuscated circuit $\tlC$ in such a way that no QPT algorithm given $\tlC$ can find accepting point i.e., $x$ such that $C(x)=1$. 
  
 Ananth and La Placa \cite{EC:AnaLaP21} constructed an input-hiding obfuscator for a function class called compute-and-compare circuits under the LWE assumption.\footnote{A compute-and-compare circuit is specified by a circuit $C$ and a target value $\alpha$ and outputs $1$ on input $x$ if and only if $C(x)=\alpha$.}
 
  \item[\textit{Simulation-extractable non-interactive zero-knowledge.}]

A non-interactive zero-knowledge (NIZK) enables a prover to non-interactively prove an NP statement without revealing anything beyond the truth of the statement assuming a common reference string (CRS) generated by a trusted third party.
A simulation-extractable NIZK (seNIZK) additionally enables us to extract a witness from an adversary that is given arbitrarily many proofs generated by a zero-knowledge simulator and generates a new valid proof.
This property especially ensures that an seNIZK is an \emph{argument of knowledge} where a prover can prove not only truth of a statement but also that it knows a witness for the statement.

  Ananth and La Placa \cite{EC:AnaLaP21} showed that an seNIZK can be constructed from any (non-simulation-extractable) NIZK and CCA secure PKE, which can be instantiated under the LWE assumption~\cite{C:PeiShi19,SIAMCOMP:PeiWat11}.
\end{itemize}

Then their construction of SSL for $\cC$ is described as follows:
\begin{description}
\item[$\setup(1^\secp)$:] This just generates and outputs  a CRS $\crs$ of seNIZK.
\item[$\sslgen(\crs)$:] This generates a pair $(\pk,\sk)$ of  public and secret keys of the publicly verifiable unclonable state generator and outputs $\ssl.\sk:=(\pk,\sk)$.
\item[$\lessor(\ssl.\sk=(\pk,\sk),C)$:]  This obfuscates $C$ to generate an obfuscated circuit $\tlC$ by the input-hiding obfuscator and generates an seNIZK proof $\pi$ for a statement $(\pk,\tlC)$ that  it knows an accepting input $x$ of $\tlC$.\footnote{In the original construction in \cite{EC:AnaLaP21}, seNIZK also proves that $\pk$ and $\tlC$ was honestly generated. However, we found that this is redundant, and essentially the same security proof works even if it only proves the knowledge of an accepting input of $\tlC$. We note that it is important to include $\pk$ in the statement to bind a proof to $\pk$ even though the knowledge proven by the seNIZK has nothing to do with $\pk$.
In fact, this observation is essential to give our simplified construction of SSL.}
Then it outputs a leased software $\sft_C:=(\ket{\psi_\pk},\pk,\tlC,\pi)$.
We call $\ket{\psi_\pk}$ and $(\pk,\tlC,\pi)$ as quantum and classical parts of $\sft_C$, respectively.
\item[$\run(\crs,\sft_C,x)$:] This immediately returns $\bot$ if $\pi$ does not pass the verification of seNIZK.
It performs a projective measurement   $\{\ket{\psi_\pk}\bra{\psi_\pk},I-\ket{\psi_\pk}\bra{\psi_\pk}\}$ on the quantum part of $\sft_C$ by using $\pk$ and if the latter projection was applied, then it returns $\bot$.
Otherwise, it outputs $\tlC(x)$.
\item[$\sslcheck(\ssl.\sk,\sft_C)$:]
It performs a projective measurement   $\{\ket{\psi_\pk}\bra{\psi_\pk},I-\ket{\psi_\pk}\bra{\psi_\pk}\}$ on the quantum part  of $\sft_C$ and returns $\top$ if the former projection was applied and $\bot$ otherwise.
\end{description}

Intuitively, the finite-term security of the above SSL can be proven as follows.\footnote{
Note that Ananth and La Placa proved that the construction in fact satisfies infinite-term security that is stronger than finite-term security.
For ease of exposition of our ideas, we explain why the construction satisfies finite-term security.
}
Suppose that there exists an adversary that is given $\sft_C=(\ket{\psi_\pk},\pk,\tlC,\pi)$ and generates $\sft_0=(\qstate{psi}_0,\pk_0,\tlC_0,\pi_0)$ and $\sft_1=(\qstate{psi}_0,\pk_1,\tlC_1,\pi_1)$ such that $\sslcheck(\ssl.\sk,\sft_0)\rightarrow \top$ and $\run(\crs,\sft_1,\cdot)$ computes $C$ with non-negligible probability.
Then we  consider the following two cases:
\begin{enumerate}
\item[Case 1. $\pk_1=\pk$:] In this case, if $\run(\crs,\sft_1,\cdot)$ correctly computes $C$ (and especially outputs a non-$\bot$ value), then the quantum part of $\sft_1$ after the execution should be $\ket{\psi_\pk}$ by the construction of $\run$. 
On the other hand, if we have $\sslcheck(\ssl.\sk,\sft_0)\rightarrow \top$, then the quantum part of $\sft_0$ after the verification should also be $\ket{\psi_\pk}$ by the definition of the verification. 
Therefore, they can happen simultaneously only with a negligible probability due to the unclonability of  $\ket{\psi_\pk}$.
\item[Case 2. $\pk_1\neq\pk$:]   In this case, if $\run(\crs,\sft_1,\cdot)$ correctly computes $C$, then  $\pi_1$ is a valid proof for a statement $(\pk_1,\tlC_1)$ and $\tlC_1$ is functionally equivalent to $C$.
Since we have $(\pk_1,\tlC_1)\neq (\pk,\tlC)$,  by the simulation extractability of seNIZK, even if we replace $\pi$ with a simulated proof, we can extract a witness for $(\pk_1,\tlC_1)$, which contains an accepting input for $C$.
Since simulation of $\pi$ can be done only from the statement $(\pk,\tlC)$, this contradicts security of the input-hiding obfuscator, and thus this happens with a negligible probability.
\end{enumerate}
In summary, an adversary cannot win with a non-negligible probability in either case, which means that the SSL is finite-term secure.

\paragraph{Our idea for weakening assumptions.}
Unfortunately, their construction is based on a very strong assumption of post-quantum IO, which is needed to construct a publicly verifiable unclonable state generator.
Indeed, a publicly verifiable unclonable state generator implies public key quantum money by combining it with digital signatures \cite{STOC:AarChr12}.
Therefore, constructing a publicly verifiable unclonable state generator is as difficult as constructing a public key quantum money scheme, which is not known to exist under standard assumptions.

Our main observation is that we actually do not need the full power of public key quantum money for the above construction of SSL if we require only finite-term security since $\sslcheck$ can take a secret key, and thus it can run a private verification algorithm.
Then, does secret key quantum money suffice? Unfortunately, the answer is no. The reason is that 
even though $\sslcheck$ can take a secret key, 
$\run$ cannot since the secret key should be hidden from the lessee.
Based on this observation, we can see that what we actually need is something between public key quantum money and secret key quantum money.
We formalize this as \textit{two-tier quantum lightning}, which is a significant relaxation of quantum lightning introduced by Zhandry  \cite{JC:Zhandry21}.

\paragraph{Two-tier quantum lightning.}
Roughly speaking, quantum lightning (QL) is a special type of public key quantum money where anyone can generate a money state.
In QL, a public key $\pk$ is published by a setup algorithm and given $\pk$, anyone can efficiently generate a serial number $\snum$ along with a corresponding quantum state called bolt, which we denote by $\bolt$.   
We call this a \textit{bolt generation} algorithm.
As correctness, we require that given $\pk$, $\snum$, and any quantum state $\bolt$, anyone can verify if $\bolt$ is a valid state corresponding to the serial number $\snum$. Especially, if $\bolt$ is an honestly generated bolt, then the verification accepts with overwhelming probability. 
On the other hand, as security, we require that no QPT algorithm given $\pk$ can generate two (possibly entangled) quantum states $\bolt_0$ and $\bolt_1$ and a serial number $\snum$ such that both states pass the verification w.r.t. the serial number $\snum$ with non-negligible probability.

We introduce a weaker variant of QL which we call \textit{two-tier QL}.
In two-tier QL, a setup algorithm generates both a public key $\pk$ and a secret key $\sk$, 
and given $\pk$, anyone can efficiently generate a serial number $\snum$ along with a corresponding quantum state $\bolt$ similarly to the original quantum lightning.
The main difference from the original QL is that it has two types of verification: \textit{full-verification} and \textit{semi-verification}.
Full-verification uses a secret key $\sk$ while semi-verification only uses a public key $\pk$.
As correctness, we require that an honestly generated bolt passes both verifications with overwhelming probability.
On the other hand, as security, we require that no QPT algorithm given $\pk$ can generate two (possibly entangled) quantum states $\bolt_0$ and $\bolt_1$ and a serial number $\snum$ such that $\bolt_0$ passes the \textit{full-verification} w.r.t. the serial number $\snum$ and $\bolt_1$ passes the \textit{semi-verification} w.r.t. the serial number $\snum$ with non-negligible probability.
We note that this does not prevent an adversary from generating two states that pass semi-verification. Thus, we cannot use the semi-verification algorithm as a verification algorithm of the original QL.

We show that this two-tier verification mechanism is a perfect fit for finite-term secure SSL.
Specifically, based on the observation that $\sslcheck$ can take a secret key whereas $\run$ cannot as explained in the previous paragraph, we can use two-tier QL instead of publicly verifiable quantum state generators. This replacement is a slight adaptation of the construction in \cite{EC:AnaLaP21} by implementing verification by $\sslcheck$ and $\run$ with full- and semi-verification of two-tier QL, respectively.
We omit the detailed construction since that is mostly the same as that in  \cite{EC:AnaLaP21} except that we use two-tier QL.

\paragraph{Constructions of two-tier quantum lightning.}
Although no known construction of the original QL is based on a standard assumption, we give two two-tier QL schemes based on standard assumptions.

The first construction is based on the SIS assumption inspired by the recent work by Roberts and Zhandry \cite{RobZha21}.
The SIS assumption requires that no QPT algorithm given a matrix $\mm{A}\gets \Z_q^{n\times m}$ can find a short $\mv{s}\in \Z^m$ such that $\mm{A}\mv{s}=0 \mod q$.
Using this assumption, a natural approach to construct QL is as follows:\footnote{This approach was also discussed in the introduction of \cite{JC:Zhandry21}.}
Given a public key $\mm{A}$, a bolt generation algorithm generates a bolt of the form $\sum_{\mv{x}:\mm{A}\mv{x}=\mv{y} \text{~and~} \mv{x}\text{~is~``short''}}\alpha_{\mv{x}}\ket{\mv{x}}$ and a corresponding serial number $\mv{y}$.
This can be done by generating a superposition of short vectors in $\Z^m$, multiplying by $\mm{A}$ in superposition to write the result in an additional register, and measuring it.
The SIS assumption ensures that no QPT algorithm can generate two copies of a well-formed bolt for the same serial number with non-negligible probability. If it is possible, one can break the SIS assumption by measuring both bolts and returns the difference between them as a solution.
However, the fundamental problem is that we do not know how to publicly verify that a given state is a well-formed bolt for a given serial number.
Roughly speaking, Roberts and Zhandry showed that such verification is possible given a trapdoor behind the matrix $\mm{A}$, which yields a secretly verifiable version of QL (which is formalized as \textit{franchised quantum money} in \cite{RobZha21}).
We use this verification as the full-verification of our two-tier QL.
On the other hand, we define a semi-verification algorithm as an algorithm that just checks that a given state is a superposition of short preimages of $\snum=\mv{y}$ regardless of whether it is a well-formed superposition or not. This can be done by multiplying $\mm{A}$ in superposition, and especially can be done publicly.
Though a state that passes the semi-verification may collapse to a classical state, a state that passes the full-verification should not. 
Therefore, if we measure states that pass full- and semi- verification w.r.t. the same serial number, then the measurement outcomes are different with non-negligible probability. Thus the difference between them gives a solution to the SIS problem.
This implies that this construction of two-tier QL satisfies the security assuming the SIS assumption.

The second construction is based on the LWE assumption. The design strategy is based on a similar idea to the proof of quantumness by Brakerski et al. \cite{FOCS:BCMVV18}. We especially use a family of noisy trapdoor claw-free permutations constructed based on the LWE assumption in \cite{FOCS:BCMVV18}.
For simplicity, we describe the construction based on a family of clean  (non-noisy) trapdoor claw-free permutations in this overview.
A family of trapdoor claw-free permutations enables us to generate a function $f: \bit\times \bit^n \rightarrow \bit^n$ such that both $f(0,\cdot)$ and $f(1,\cdot)$ are permutations along with a trapdoor. 
As claw-free property, we require that no QPT algorithm given a description of $f$ can generate $x_0,x_1 \in \bit^n$  such that $f(0,x_0)=f(1,x_1)$ with non-negligible probability.
On the other hand, if one is given a trapdoor, then one can efficiently computes $x_0,x_1$ such that  $f(0,x_0)=f(1,x_1)=y$ for any $y\in \bit^n$.
Based on this, we construct two-tier QL as follows:
The setup algorithm generates $f$ and its trapdoor $\td$,  and sets a public key as the function $f$ and secret key as the trapdoor $\td$.
A bolt generation algorithm first prepares a uniform superposition $\sum_{b\in \bit,x\in \bit^n}\ket{b}\ket{x}$, applies $f$ in superposition to generate $\sum_{b\in \bit,x\in \bit^n}\ket{b}\ket{x}\ket{f(b,x)}$, measures the third register to obtain $y\in \bit^n$ along with a collapsed state $\frac{1}{\sqrt{2}}\left(\ket{0}\ket{x_0}+\ket{1}\ket{x_1}\right)$ where $f(0,x_0)=f(1,x_1)=y$.
Then it outputs a serial number $\snum:=y$ and a bolt $\bolt:=\frac{1}{\sqrt{2}}\left(\ket{0}\ket{x_0}+\ket{1}\ket{x_1}\right)$.
The full-verification algorithm given a trapdoor $\td$, a serial number $\snum=y$, and a (possibly malformed) bolt $\bolt$, computes $x_0,x_1$ such that $f(0,x_0)=f(1,x_1)=y$ using the trapdoor, and checks if $\bolt$ is $\frac{1}{\sqrt{2}}\left(\ket{0}\ket{x_0}+\ket{1}\ket{x_1}\right)$. More formally, it performs a projective measurement $\{\Pi, I-\Pi\}$ where $\Pi:=\frac{1}{2}\left(\ket{0}\ket{x_0}+\ket{1}\ket{x_1}\right)\left(\bra{0}\bra{x_0}+\bra{1}\bra{x_1}\right)$ 
and accepts if $\Pi$ is applied.
The semi-verification algorithm given $f$, $\snum=y$ and a (possibly malformed) bolt $\bolt$ just checks that $\bolt$ is a (not necessarily uniform) superposition of $(0,x_0)$ and $(1,x_1)$ by applying $f$ in superposition. 
Suppose that we are given states $\bolt_0$ and $\bolt_1$ that pass the full- and semi-verification respectively w.r.t. the same serial number $\snum=y$ with probability $1$.
Then after these verifications accept, if we measure $\bolt_0$, then we get $x_0$ or $x_1$ with equal probability and if we measure $\bolt_1$, we get either of $x_0$ and $x_1$. Therefore, with probability $1/2$, we obtain both $x_0$ and $x_1$, which contradicts the claw-free property.
This argument can be extended to show that the probability that $\bolt_0$ and $\bolt_1$ pass the full- and semi-verification respectively is at most $1/2 + \negl(\secp)$.  
By parallel repeating it many times, we can obtain two-tier QL. 

\paragraph{Abstracted construction of SSL via watermarking.}
Besides weakening the required assumption, we also give a slightly more abstracted SSL construction through the lens of watermarking. 
In general, a watermarking scheme enables us to embed a mark into a program so that the mark cannot be removed or modified without significantly changing the functionality.
 We observe that the classical part $(\pk,\tlC,\pi)$ of a leased software of \cite{EC:AnaLaP21} can be seen as a watermarked program of $C$ where $\pk$ is regarded as a mark.
 In this context, we only need to ensure that one cannot remove or modify the mark as long as one does not change the program's functionality \textit{when it is run on a legitimate evaluation algorithm} similarly to the security requirement for SSL. 
 We call a watermarking with such a weaker security guarantee a \textit{relaxed watermarking}.
With this abstraction along with the observation that two-tier QL suffices as already explained, we give a generic construction of SSL for $\cC$ based on two-tier QL and relaxed watermarking for $\cC$. This construction is in our eyes simpler than that in \cite{EC:AnaLaP21}.\footnote{Strictly speaking, our construction additionally uses message authentication code (MAC).}
 From this point of view, we can see that \cite{EC:AnaLaP21} essentially constructed a relaxed watermarking for compute-and-compare circuits based on seNIZK and input-hiding obfuscator for compute-and-compare circuits.
 We observe that an input-hiding obfuscator for compute-and-compare circuits can be instantiated from any injective one-way function, which yields a simpler construction of relaxed watermarking for compute-and-compare circuits without explicitly using input-hiding obfuscators.

\paragraph{SSL for PRF.}
Our abstracted construction ensures that a relaxed watermarking scheme for any circuit class can be converted to SSL for the same class assuming the existence of two-tier QL.
Here, we sketch our construction of a relaxed watermarking scheme for PRF.
Let  $F_{K}$ be a function that evaluates a PRF with a key $K$.
We assume that the PRF is a puncturable PRF. That is, one can generate a punctured key $K_{x^*}$ for any input $x^*$ that can be used to evaluate $F_{K}$ on all inputs except for $x^*$ but $F_{K}(x^*)$ remains pseudorandom even given $K_{x^*}$.
For generating a watermarked version of $F_{K}$ with a mark $\msg$, we generate $(K_{x^*},y^*:= F_{K}(x^*))$ for any fixed input $x^*$ and  an seNIZK proof $\pi$ for a statement $(\msg,K_{x^*},y^*)$ that  it knows $K$.
Then a watermarked program is set to be $(\msg,K_{x^*},y^*,\pi)$. A legitimate evaluation algorithm first checks if $\pi$ is a valid proof, and if so evaluates $F_K$ by using $K_{x^*}$ and $y^*$, and returns $\bot$ otherwise.
Roughly speaking, this construction satisfies the security of relaxed watermarking since if an adversary is given $(\msg,K_{x^*},y^*,\pi)$ can generate a program with a mark $\msg'\neq \msg$ that correctly computes $F_K$ on the legitimate evaluation algorithm. The program should contain a new valid proof of seNIZK that is different from $\pi$.
By the simulation extractability, we can extract  $K$ by using such an adversary.
Especially, this enables us to compute $K$ from  $(K_{x^*},y^*)$, which contradicts security of the puncturable PRF.
\ifnum\cameraready=1
\footnote{Strictly speaking, we need to assume the key-injectiveness for the PRF. See the full version of this paper~\cite{EPRINT:KitNisYam20} for the definition.}
\else
\footnote{Strictly speaking, we need to assume the key-injectiveness for the PRF as defined in \cref{def:key_injective}.}
\fi

By plugging the above relaxed watermarking for PRF into our generic construction, we obtain SSL for PRF. This would be impossible through the abstraction of \cite{EC:AnaLaP21}  since input-hiding obfuscator can exist only for evasive functions, whereas PRF is not evasive.

\paragraph{SSL with classical communication.}
As a final contribution, we give a construction of finite-term secure SSL where communication between the lessor and lessee is entirely classical. 
At a high level, the only quantum component of our SSL is two-tier QL, which can be seen as a type of quantum money. Thus we rely on techniques used for constructing semi-quantum money \cite{AFT:RadSat19}, which is a (secret key) quantum money with classical communication. More details are explained below.

In the usage scenario of finite-term secure SSL, there are two parts where the lessor and lessee communicate through a quantum channel. The first is when the lessor sends a software to the lessee. The second is when the lessee returns the software to the lessor.

For removing the first quantum communication, we observe that the only quantum part of a software is a bolt of two-tier QL in our construction, which can be generated publicly.
Then, our idea is to let the lessee generate the bolt by himself and only send the corresponding serial number to ask the lessor to generate a classical part of a software while keeping the bolt on lessee's side.
This removes the quantum communication at the cost of introducing an interaction. 
Though we let the lessor generate a bolt and a serial number by himself, 
 the security of SSL  is not affected because the security of two-tier QL ensures that an adversary cannot clone a bolt even if it is generated by himself.

For removing the second quantum communication, we assume an additional property for two-tier QL called \textit{bolt-to-certificate capability}, which was originally considered for (original) QL \cite{CoRR:ColSat20}.
Intuitively, this property enables us to convert a bolt to a classical certificate that certifies that the bolt was broken. Moreover, it certifies that one cannot generate any state that passes the semi-verification.
With this property, when returning the software, instead of sending the software itself, it can convert the bolt to a corresponding certificate and then send the classical certificate.
Security is still maintained with this modification since 
if the verification of the certification passes, then this ensures that the lessee no longer possesses a state that passes the semi-verification, and thus $\run$ always returns $\bot$.

Finally, we show that our LWE-based two-tier QL can be modified to have the bolt-to-certificate capability based on ideas taken from~\cite{FOCS:BCMVV18,AFT:RadSat19}.
Recall that in the LWE-based construction, a bolt is of the form $\frac{1}{\sqrt{2}}\left(\ket{0}\ket{x_0}+\ket{1}\ket{x_1}\right)$.
If we apply a Hadamard transform to the state and then measures both registers in the standard basis, then we obtain $(m,d)$ such that $m=d\cdot (x_0 \oplus x_1)$ as shown in \cite{FOCS:BCMVV18}. 
Moreover, Brakerski et al. \cite{FOCS:BCMVV18} showed that the LWE-based trapdoor claw-free permutation satisfies a nice property called \textit{adaptive hardcore property}, which roughly means that no QPT algorithm can output $(m,d,x',y)$ such that $d\neq 0$, $m=d\cdot (x_0 \oplus x_1)$ and $x'\in \{x_0,x_1\}$ with probability larger than $1/2+\negl(\secp)$
 where $x_0$ and $x_1$ are the unique values such that $f(0,x_0)=f(1,x_1)=y$.\footnote{More precisely, they prove an analogous property for a family of noisy trapdoor claw-free permutations.}
Since a quantum state that passes the semi-verification w.r.t. a serial number $y$ is a (not necessarily uniform) superposition of $x_0$ and $x_1$, we can see that $(m,d)$ works as a certificate with a weaker security guarantee that if one keeps a quantum state that passes the semi-verification, then one can generate $(m,d)$ that passes verification of $m=d\cdot (x_0 \oplus x_1)$ with probability at most $1/2+\negl(\secp)$.
 But this still does not suffice for our purpose since one can generate a certificate that passes the verification without discarding the original bolt with probability $1/2$ by just randomly guessing $(m,d)$. 
 To reduce this probability to negligible, we rely on an amplification theorem in~\cite{AFT:RadSat19} (which in turn is based on \cite{TCC:CanHalSte05}).
As a result, we can show that a parallel repetition to the above construction yields a two-tier QL with the bolt-to-certificate capability.

\ifnum\cameraready=1
\subsection{Organization}
In \cref{sec:prelim}, we provide definitions of cryptographic primitives used in this work.
In \cref{sec:abstraction_tool}, we introduce the notion of two-tier quantum lightning and provide concrete constructions of it.
In \cref{sec:relaxed_watermarking}, we define relaxed watermarking and provide concrete constructions of it.
In \cref{sec:SSL_ttQL}, we finally show how to construct SSL by combining two-tier quantum lightning and relaxed watermarking.
Due to the space limitation, some contents are omitted from this paper.
Especially, we omit the definition and construction of SSL with classical communication.
See the full version of this paper~\cite{EPRINT:KitNisYam20} for ommited contents.
\else\fi


\newcommand{\qIHO}{\algo{qIHO}}
\newcommand{\qSENIZK}{\algo{qSENIZK}}
\newcommand{\NTCF}{\algo{NTCF}}
\newcommand{\inv}{\algo{Inv}}
\newcommand{\support}{\algo{Supp}}
\newcommand{\samp}{\qalgo{Samp}}
\newcommand{\CHK}{\algo{Chk}}
\newcommand{\fk}{\keys{k}}
\newcommand{\Heldist}{\mathsf{H}^2}
\newcommand{\reg}{\mathsf{R}}
\newcommand{\qA}{\qalgo{A}}
\newcommand{\qB}{\qalgo{B}}

\section{Preliminaries}\label{sec:prelim}

\ifnum\cameraready=0
We review notations and definitions of cryptographic tools used in this paper.
\else
Due to the space limitation, some standard notations and definitions of cryptographic tools are omitted here and provided in the full version of this paper~\cite{EPRINT:KitNisYam20}.
\fi

\ifnum\submission=0

\ifnum\submission=0
\else
\section{Omitted Definitions}\label{sec:omitted-defs}
\fi

\subsection{Notations}\label{sec:notation}


In this paper, standard math or sans serif font stands for classical algorithms (e.g., $C$ or $\algo{Gen}$) and classical variables (e.g., $x$ or $\keys{pk}$).
Calligraphic font stands for quantum algorithms (e.g., $\qalgo{Gen}$) and calligraphic font and/or the bracket notation for (mixed) quantum states (e.g., $\qstate{sk}$ or $\ket{\psi}$).

In this paper, for a finite set $X$ and a distribution $D$,  $x \chosen X$ denotes selecting an element from $X$ uniformly at random, $x \chosen D$ denotes sampling an element $x$ according to $D$, and Let $y \gets \algo{A}(x)$ and $y \gets \qalgo{A}(\qstate{x})$ denote assigning to $y$ the output of a probabilistic or deterministic algorithm $\algo{A}$ and a quantum algorithm $\qalgo{A}$ on an input $x$ and $\qstate{x}$, respectively. When we explicitly show that $\algo{A}$ uses randomness $r$, we write $y \gets \algo{A}(x;r)$.
Let $[\ell]$ denote the set of integers $\{1, \cdots, \ell \}$, $\secp$ denote a security parameter, and $y \seteq z$ denote that $y$ is set, defined, or substituted by $z$.
PPT and QPT algorithms stand for probabilistic polynomial time algorithms and polynomial time quantum algorithms, respectively.
Let $\negl$ denote a negligible function.

Let $X$ be a random variable over a set $S$.
The \emph{min-entropy} of $X$, denoted by $\Hmin(X)$, is defined by
$
\Hmin(X) \seteq - \log_2 \max_{x \in S} \Pr[X = x] \enspace.
$
The conditional min-entropy of $X$ conditioned on a correlated variable $Y$, denoted by $\Hmin(X|Y)$, is defined as
$
\Hmin(X|Y) \seteq - \log_2\left(\textrm{E}_{y\la Y}\left[ \max_{x \in S} \Pr[X = x|Y=y]\right]\right) \enspace.
$

Let $\cH$ denote a finite-dimensional Hilbert space. For an operator $X$ on $\cH$, let $\norm{X}$ denote the operator norm of $X$, and $\trdist{X} \seteq \frac{1}{2} \norm{X}_1 = \frac{1}{2}\sqrt{X\adjmat{X}}$ for the trace norm.

\subsection{Distributions and Distance}\label{sec:distribution}

\begin{itemize}
\item $D$: a distribution over a finite domain $X$.
\item $f$: density on $X$. That is, a function $f: X \ra [0,1]$ such that $\sum_{x\in X} f(x)=1$.
\item $\cD_X$: the set of all densities on $X$.
\item For any $f\in \cD_X$, $\support(f) \seteq \setbracket{x \in X \mid f(x) >0}$.
\item For two densities $f_0$ and $f_1$ over the same finite domain $X$, the Hellinger distance between $f_0$ and $f_1$ is
\[
\Heldist(f_0,f_1) \seteq 1 - \sum_{x\in X}\sqrt{f_0(x)f_1(x)}.
\]
\item For density matrices $\qstate{X},\qstate{Y}$, the trace distance $\trdist{\qstate{X} - \qstate{Y}}$ is equal to
\[
\frac{1}{2} \Trace(\sqrt{(\qstate{X} -\qstate{Y})^2}).
\]
\end{itemize}
The following lemma relates Hellinger distance and the trace distance of superpositions.
\begin{lemma}\label{lem:Hellinger_trace_distance}
Let $X$ be a finite set, $f_0,f_1 \in \cD_X$, and
\[
\ket{\psi_b} \seteq \sum_{x\in X}\sqrt{f_b (x)}\ket{x}
\]
for $b\in\zo{}$. It holds that
\[
\trdist{\ket{\psi_0}\bra{\psi_0} - \ket{\psi_1}\bra{\psi_1}}= \sqrt{1-(1-\Heldist(f_0,f_1))^2}.
\]
\end{lemma}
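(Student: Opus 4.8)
The plan is to reduce the claim to the standard fact that for any two pure states $\ket{\phi_0},\ket{\phi_1}$ in a Hilbert space one has $\trdist{\ket{\phi_0}\bra{\phi_0}-\ket{\phi_1}\bra{\phi_1}}=\sqrt{1-\abs{\innerp{\phi_0}{\phi_1}}^2}$, and then to observe that for the specific states in the statement the overlap $\innerp{\psi_0}{\psi_1}$ is a nonnegative real number equal to $1-\Heldist(f_0,f_1)$.

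First I would prove (or cite, e.g.\ from Nielsen--Chuang) the pure-state formula. If $\ket{\phi_0}$ and $\ket{\phi_1}$ are parallel, both sides vanish, so assume otherwise. Up to an irrelevant global phase on $\ket{\phi_1}$ we may take $c\seteq\innerp{\phi_0}{\phi_1}\in[0,1)$ and write $\ket{\phi_1}=c\ket{\phi_0}+s\ket{\phi_0^\perp}$, where $s\seteq\sqrt{1-c^2}>0$ and $\ket{\phi_0^\perp}$ is a unit vector in $\mathrm{span}\{\ket{\phi_0},\ket{\phi_1}\}$ orthogonal to $\ket{\phi_0}$. In the orthonormal basis $\{\ket{\phi_0},\ket{\phi_0^\perp}\}$ the Hermitian operator $\Delta\seteq\ket{\phi_0}\bra{\phi_0}-\ket{\phi_1}\bra{\phi_1}$ is represented by the matrix $\left(\begin{smallmatrix}1-c^2 & -cs\\ -cs & -s^2\end{smallmatrix}\right)$, which has trace $0$ and determinant $-s^2$, hence eigenvalues $\pm s$. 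Since $\Delta$ is supported on this (at most two-dimensional) subspace, $\trdist{\Delta}=\tfrac12\norm{\Delta}_1=\tfrac12(s+s)=s=\sqrt{1-c^2}$, which gives the formula.

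Then I would apply this with $\ket{\phi_b}=\ket{\psi_b}$. Because $\sqrt{f_b(x)}\ge 0$ for every $x\in X$, the overlap $\innerp{\psi_0}{\psi_1}=\sum_{x\in X}\sqrt{f_0(x)f_1(x)}$ is a nonnegative real, so $\abs{\innerp{\psi_0}{\psi_1}}=\sum_{x\in X}\sqrt{f_0(x)f_1(x)}=1-\Heldist(f_0,f_1)$ by the definition of the Hellinger distance. Substituting into the pure-state formula yields $\trdist{\ket{\psi_0}\bra{\psi_0}-\ket{\psi_1}\bra{\psi_1}}=\sqrt{1-(1-\Heldist(f_0,f_1))^2}$, as claimed. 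There is no genuine obstacle here; the only point meriting attention is that the nonnegativity of the amplitudes $\sqrt{f_b(x)}$ lets us drop the absolute value and identify the overlap exactly with $1-\Heldist(f_0,f_1)$ (for arbitrary pure states one would only control $\abs{\innerp{\psi_0}{\psi_1}}$, which need not be real or nonnegative).
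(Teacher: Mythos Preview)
Your argument is correct and is exactly the standard derivation: reduce to the pure-state identity $\trdist{\ket{\phi_0}\bra{\phi_0}-\ket{\phi_1}\bra{\phi_1}}=\sqrt{1-\abs{\innerp{\phi_0}{\phi_1}}^2}$ and then compute $\innerp{\psi_0}{\psi_1}=\sum_x\sqrt{f_0(x)f_1(x)}=1-\Heldist(f_0,f_1)$ using nonnegativity of the amplitudes. The paper itself does not supply a proof of this lemma; it is stated in the preliminaries as a known fact and only used (e.g., in the NTCF range-superposition property and in the correctness proof of \cref{thm:cvttQL_aNTCF}), so there is no alternative approach to compare against.
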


\subsection{Lattices}\label{sec:lattice}

\begin{definition}[Learning with Errors]\label{def:LWE}
Let $n,m,q \in \N$ be integer functions of the security parameter $\secp$. Let $\chi = \chi(\secp)$ be a error distribution over $\Z$.
The LWE problem $\LWE_{n,m,q,\chi}$ is to distinguish the following two distributions.
\[
D_0 \seteq \setbracket{(\mm{A},\mm{A}\mv{s}+\mv{e}) \mid \mm{A} \chosen \Zq^{n\times m}, \mv{s}\chosen \Zq^{n}, \mv{e}\chosen \chi^m}  \text{ and } D_1 \seteq \setbracket{(\mm{A},\mv{u}) \mid \mm{A} \chosen \Zq^{n\times m}, \mv{u}\chosen \Zq^m}.
\]

When we say we assume the quantum hardness of the LWE problem, we assume that for any QPT adversary $\qalgo{A}$, it holds that
\[
\abs{ \Pr[\qalgo{A} (D_0)\out  1 ] - \Pr[\qalgo{A} (D_1)\out 1]} \le \negl(\secp).
\]
\end{definition}

\begin{definition}[Short Integer Solution]\label{def:SIS}
Let $n,m,q \in \N$ be integer functions of the security parameter $\secp$.
The SIS problem $\SIS_{n,m,q,\beta}$ is as follows.
Given $\mm{A}\chosen \Zq^{n\times m}$ and a positive real $\beta$, find a non-zero vector $\mv{s}\in \Z^m$ such that $\mm{A}\mv{s} = \mv{0}\bmod{q}$ and $\norm{s}\le \beta$.

When we say we assume the quantum hardness of the SIS problem, we assume that for any QPT adversary $\qalgo{A}$, it holds that
\[
 \Pr[\mm{A}\mv{s} = \mv{0}\bmod{q} \land \norm{\mv{s}}\le \beta \mid \mm{A}\chosen \Zq^{n\times m},\mv{s}\gets \qalgo{A} (\mm{A},\beta) ]  \le \negl(\secp).
\]
\end{definition}

\subsection{One-Way Functions}


We introduce the definition of a family of one-way functions (OWF) for high min-entropy sources.
\begin{definition}[OWF for High Min-Entropy Sources] \label{def-owf}
Let $\Fow=\{\owf:\Dowf \ra \Rowf\}$ be a family of efficiently computable deterministic functions.
Let $\gamma(\secp)$ be a function and $\cD$ a distribution $\cD=\{\cD_\secp\}_{\secp}$, where $(x,\aux)\la\cD_\secp$ outputs $x\in\Dowf$ and some auxiliary information $\aux$ such that $\Hmin(x|\aux)\ge\alpha(\lambda)$.
We say that $\Fow$ is a family of OWF for for $\alpha$-sources if for all QPT adversaries $\qA$, we have
\begin{align*}
\Pr\left[
f(x') = y~\middle|
\begin{array}{cc}
 &\owf\chosen\Fow \\
 &(x,\aux) \chosen \cD_\secp\\
 &x' \la \qA(1^\lambda, \owf,\aux,\owf(x))
\end{array}
\right] \leq \negl(\secp).
\end{align*}
\end{definition}

Alwen, Krenn, Pietrzak, and Wichs~\cite{C:AKPW13} prove that we can achieve deterministic encryption secure for any $\secp^\eta$ min-entropy source for any $\eta>0$ under the LWE assumption. Such deterministic encryption implies a family of injective OWF for $\secp^\eta$-sources.
This is the case when we consider QPT adversaries.
Formally, we have the following theorem.
\begin{theorem}\label{thm:inj_OWF_LWE}
Let $\eta>0$ be any constant.
Assuming the quantum hardness of the LWE problem, there exists a post-quantum injective OWF family for $\lambda^\eta$-sources.
\end{theorem}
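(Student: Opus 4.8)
The plan is to realise the claimed family directly from the deterministic public-key encryption (D-PKE) scheme for high-min-entropy sources that \cite{C:AKPW13} obtain from LWE, and then to observe that the only computational ingredient in its security proof is a black-box use of LWE, so the guarantee is inherited against QPT adversaries.

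First I would fix the D-PKE scheme $(\Gen,\Enc,\Dec)$ of \cite{C:AKPW13}, with parameters chosen (as a function of the constant $\eta$) so that it is secure for message samplers producing $(x,\aux)$ with $\Hmin(x\mid\aux)\ge\secp^\eta$. I then define $\Fow$ by taking a fresh public key $\pk\gets\Gen(1^\secp)$ as the function index and setting $\owf\seteq\Enc(\pk,\cdot)$, with $\Dowf$ the message space and $\Rowf$ the ciphertext space. Each $\owf$ is deterministic and efficiently computable since $\Enc$ is, and injectivity holds because correctness of decryption gives $\Dec(\sk,\Enc(\pk,x))=x$ for every $x$ (perfectly, or for an overwhelming fraction of keys $\pk$).

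Next I would establish one-wayness for $\secp^\eta$-sources by a black-box reduction to D-PKE security. Suppose a QPT $\qA$ on input $(1^\secp,\owf,\aux,\owf(x))$ with $(x,\aux)\gets\cD_\secp$ and $\Hmin(x\mid\aux)\ge\secp^\eta$ outputs $x'$ with $\owf(x')=\owf(x)$ with non-negligible probability $\eps$; by injectivity $x'=x$, so $\qA$ actually recovers the encrypted message. I build an adversary $\cB$ against the D-PKE that runs $\qA$ on the challenge ciphertext (forwarding $\aux$ verbatim) and re-encrypts the recovered message to test equality with the challenge. In the real world $\cB$ succeeds with probability $\ge\eps$, while against a uniform element of $\Rowf$ the injective map $\Enc(\pk,\cdot)$ is hit with probability $|\Dowf|/|\Rowf|=\negl(\secp)$ (the image is a negligible fraction of $\Rowf$), contradicting the pseudorandom-ciphertext form of D-PKE security. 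If \cite{C:AKPW13} is phrased instead via indistinguishability of high-min-entropy message sources or a PRIV/semantic-security style notion, the same idea works: a message-recovering $\qA$ computes a forbidden/distinguishing function of the plaintext, and $\secp^\eta=\omega(\log\secp)$ guarantees the plaintext is unpredictable given $\aux$, as the notion requires.

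Finally, for the post-quantum part I would walk through the security proof of \cite{C:AKPW13}: it is a sequence of hybrids in which each step is either statistical (leftover-hash / randomness-extraction and rounding analyses, which hold unconditionally and hence against quantum adversaries) or a single black-box invocation of LWE, equivalently LWR. Replacing classical LWE hardness by the quantum hardness of \cref{def:LWE} thus yields a D-PKE secure against QPT adversaries, and the reduction above is itself efficient and quantum-preserving, so $\Fow$ is one-way against QPT adversaries; taking the D-PKE parameters to depend on $\eta$ covers every constant $\eta>0$. The main thing to pin down is the precise D-PKE security notion of \cite{C:AKPW13} ---in particular how it incorporates the auxiliary information $\aux$--- and to confirm that its proof's only computational step is LWE used in a reduction-preserving, black-box manner; the rest is routine.
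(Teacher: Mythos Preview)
Your proposal is correct and matches the paper's approach: the paper does not give a standalone proof but simply cites \cite{C:AKPW13} for deterministic encryption secure for $\secp^\eta$-min-entropy sources under LWE, observes that such deterministic encryption yields an injective OWF family for $\secp^\eta$-sources, and notes that the argument carries over to QPT adversaries. You have spelled out in detail exactly the reduction the paper leaves implicit.
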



\subsection{Pseudorandom Functions and Related Notions}

We introduce the definitions of pseudorandom functions (PRF) and puncturable PRF.

\begin{definition}[Pseudorandom Functions]\label{def:prf}
For sets $\prfkeyspace$, $\Dprf$, and $\Rprf$, let $\{\prf_\prfkey(\cdot): \Dprf \ra \Rprf \mid \prfkey \in \prfkeyspace \}$ be a family of polynomially computable functions.
We say that $\prf$ is pseudorandom if for any QPT adversary $\qA$, it holds that
\begin{eqnarray*}
\adva{\prf, \qA}{prf}(\lambda)
&=&|\Pr[\qA^{\prf_\prfkey(\cdot)}(1^\lambda)=1:\prfkey \chosen \prfkeyspace]\\
& &~~~~~~~~~~-\Pr[\qA^{\algo{R}(\cdot)}(1^\lambda)=1:\algo{R} \chosen \calU]
|=\negl(\lambda)
\enspace,
\end{eqnarray*}
where $\mathcal{U}$ is the set of all functions from $\Dprf$ to $\Rprf$.

\end{definition}


\begin{definition}[Puncturable PRF]\label{def:pprf}
For sets $\Dprf$ and $\Rprf$, a puncturable PRF $\PuncPRF$ whose key space is $\prfkeyspace$ consists of a tuple of algorithms $(\PRF.\Eval,\allowbreak\Puncture,\PRF.\pEval)$ that satisfies the following two conditions.
\begin{description}
\item[Functionality preserving under puncturing:] For all polynomial size subset $\{x_i\}_{i\in[k]}$ of $\Dprf$, all $x\in \Dprf \setminus \{x_i\}_{i\in[k]}$, and all $\prfkey\in\prfkeyspace$, we have $\PRF.\Eval(\prfkey,x) = \PRF.\pEval(\prfkey^*,x)$, where $\prfkey^* \la \Puncture(\prfkey,\{x_i\}_{i\in[k]})$.

\item[Pseudorandomness at punctured points:]
For all polynomial size subset $\{x_i\}_{i\in[k]}$ of $\Dprf$,
and any QPT adversary $\qA$, it holds that
\[
\abs{\Pr[\qA(\prfkey^*,\{\PRF.\Eval(\prfkey,x_i)\}_{i\in[k]}) = 1] -\Pr[\qA(\prfkey^*, \cU^{k}) = 1]} = \negl(\lambda)\enspace,
\]
where $\prfkey \chosen \prfkeyspace$, $\prfkeyspace^* \la \Puncture(\prfkey,\{x_i\}_{i\in[k]})$, and $\cU$ denotes the uniform distribution over $\Rprf$.
\end{description}
\end{definition}


We recall the notion of key-injectiveness for puncturable PRF~\cite{SIAMCOMP:CHNVW18}.

\begin{definition}[Key-Injectiveness]\label{def:key_injective}
We say that a puncturable PRF $\PuncPRF$ is key-injective if we have
\[
\Pr_{\prfkey\chosen\prfkeyspace}[\exists x\in\Dprf,\prfkey'\in\prfkeyspace\textrm{~s.t.~}\prfkey\neq\prfkey'\land\PRF.\Eval(\prfkey,x)=\PRF.\Eval(\prfkey',x)]\leq\negl(\secp).
\]
\end{definition}

We can realize puncturable PRF based on any one-way function.
Moreover, even if we require key-injectiveness, we can realize it under the LWE assumption, as shown by Cohen et al.~\cite{SIAMCOMP:CHNVW18}.
This is the case when we consider QPT adversaries.
Formally, we have the following theorem.
\begin{theorem}\label{thm:keyinj_pPRF_LWE}
There exists a key-injective puncturable PRF secure against QPT adversaries assuming the quantum hardness of the LWE problem.
\end{theorem}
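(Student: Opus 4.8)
The plan is to recall the construction and analysis of Cohen et al.~\cite{SIAMCOMP:CHNVW18} and to check that each ingredient survives against QPT adversaries once the quantum hardness of LWE is assumed. Start from the GGM-style puncturable PRF built from a length-doubling pseudorandom generator $\prg\colon\zo{\secp}\to\zo{2\secp}$, written $\prg(s)=(\prg_0(s),\prg_1(s))$ with $\prg_b\colon\zo{\secp}\to\zo{\secp}$: the key space is $\prfkeyspace=\zo{\secp}$, and on key $\prfkey=s_0$ and input $x=x_1\cdots x_n\in\Dprf$ one sets $s_i\seteq\prg_{x_i}(s_{i-1})$ and outputs $\PRF.\Eval(\prfkey,x)\seteq s_n$. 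Puncturing at a polynomial-size set $\{x_i\}_{i\in[k]}\subseteq\Dprf$ hands out the generator values labelling the frontier of the union of the $k$ evaluation paths in the binary tree; from these $\PRF.\pEval$ can be run on every input outside $\{x_i\}_{i\in[k]}$, so functionality preservation under puncturing is immediate.

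For key-injectiveness, I would instantiate $\prg$ so that \emph{both} halves $\prg_0$ and $\prg_1$ are injective maps $\zo{\secp}\to\zo{\secp}$. Then for each fixed input $x=x_1\cdots x_n$ the map $\prfkey\mapsto\PRF.\Eval(\prfkey,x)$ is the composition $\prg_{x_n}\circ\cdots\circ\prg_{x_1}$, which is injective, so $\prfkey\neq\prfkey'$ implies $\PRF.\Eval(\prfkey,x)\neq\PRF.\Eval(\prfkey',x)$ for \emph{every} $x$ --- this is stronger than \cref{def:key_injective}, which only asks that a random key not collide with any other key at any point, and it holds with probability $1$. Such a generator follows from the quantum hardness of LWE, e.g.\ from a post-quantum injective one-way function (which exists under LWE by \cref{thm:inj_OWF_LWE}) together with a hardcore-bit extractor fixed as public parameters, or from an LWE-based injective trapdoor function used in its injective mode; injectivity is a structural property, hence unaffected by the distinguisher being quantum.

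It remains to argue pseudorandomness at the punctured points against QPT adversaries, and here I would invoke that the GGM security reduction is black-box and that the adversary's interface in \cref{def:pprf} is entirely classical: it receives a classical punctured key $\prfkey^*$ together with either $\{\PRF.\Eval(\prfkey,x_i)\}_{i\in[k]}$ or $k$ uniform strings of $\Rprf$, and outputs one bit. The standard hybrid replaces the generator outputs along the relevant parts of the punctured paths one node at a time; each step is detected by a reduction that runs the given PRF-distinguisher once on a classical challenge, so a QPT distinguisher with advantage $\eps$ yields a QPT distinguisher against $\prg$ with advantage $\eps/\poly(\secp)$, which is negligible since $\prg$ is post-quantum secure under LWE. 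For puncturing at a polynomial-size set rather than a single point, one uses the multi-point GGM hybrid of~\cite{SIAMCOMP:CHNVW18}, with the same structure. The one point needing care --- and the closest thing to an obstacle --- is precisely this bookkeeping: one must verify that no step of the classical argument secretly relies on rewinding the adversary or on inspecting its (possibly quantum) internal state, which it does not, since the reduction treats the distinguisher as a black box with classical input/output. Putting these together yields a construction satisfying \cref{def:pprf} and \cref{def:key_injective} against QPT adversaries under quantum LWE, as claimed.
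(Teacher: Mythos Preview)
The paper itself does not prove this theorem; it states the result and attributes it entirely to Cohen et al.~\cite{SIAMCOMP:CHNVW18}, adding only the remark that their construction remains secure against QPT adversaries. Your proposal goes beyond the paper by trying to reconstruct a proof. The pseudorandomness-at-punctured-points portion is fine: the GGM hybrid is a black-box reduction with a purely classical interface to the adversary, so it carries over to QPT distinguishers exactly as you say.

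The key-injectiveness step, however, has a genuine gap. You ask for a length-doubling generator $\prg:\zo{\secp}\to\zo{2\secp}$ whose two halves $\prg_0,\prg_1:\zo{\secp}\to\zo{\secp}$ are each injective. But an injective map from $\zo{\secp}$ to $\zo{\secp}$ is a bijection, so you are in effect asking for a PRG both of whose halves are permutations. Neither of your suggested instantiations delivers this: the injective-OWF-plus-hardcore-bits route gives $\prg(s)=(f(s),h(s))$ with $f$ injective but $h$ a short, non-injective hardcore string; the injective-mode trapdoor-function route produces a single injective function, not two permutations whose concatenation is pseudorandom. More fundamentally, once $\prg_0$ is a bijection the seed $s$ is information-theoretically determined by $\prg_0(s)$, hence so is $\prg_1(s)$; asking that $\prg_1(s)$ nonetheless look random given $\prg_0(s)$ forces $\prg_0$ to be hard to invert, i.e., $\prg_0$ is a one-way permutation, and one-way permutations are not known from LWE. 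So the sentence ``such a generator follows from the quantum hardness of LWE'' is exactly where your argument breaks. The construction in \cite{SIAMCOMP:CHNVW18} does not go through such a doubly-injective PRG; key-injectiveness there is obtained from the specific algebraic structure of lattice-based PRFs rather than from a generic GGM instantiation, and that is the argument you would need to invoke and check against quantum adversaries.
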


\subsection{One-Time Message Authentication Code}

We introduce the definition of one-time message authentication code (OT-MAC).
\begin{definition}[OT-MAC]\label{def-mac}
An OT-MAC $\MAC$ is a three tuple $(\Macgen, \Mactag,\allowbreak\Macvrfy)$ of PPT algorithms. 
Below, let $\Dmac$ be the domain of $\MAC$.
\begin{itemize}

\item $\Macgen(1^\secp):$ Given a security parameter $1^\secp$, outputs a key $\mackey$.

\item $\Mactag(\mackey,m):$ Given a key $\mackey$ and a message $m \in \Dmac$, outputs $\mac$.

\item $\Macvrfy(\mackey,m,\mac):$ Given a key $\mackey$, message $m\in\Dmac$, and $\mac$, outputs $\top$ or $\bot$.

\end{itemize}

We require the following properties.
\begin{description}
\item[Correctness:]
For every $m\in\Dmac$ and $\mackey \la \Macgen(1^\lambda)$, we have $\Macvrfy(\mackey,m,\Mactag(\mackey,m))=\top$.

\item[Security:]
For any QPT adversary $\qA$, it holds that
\[
\Pr\left[ \begin{array}{rl}
&\Macvrfy(\mackey,m,\mac)=\top \land\\
&m\ne m_1
\end{array} \ \middle |
\begin{array}{rl}
 &\mackey \gets \Macgen(1^\secp) \\
 &(m,\mac) \gets \qA(1^\secp)^{\Mactag(\mackey,\cdot)}
\end{array}
\right]\le \negl(\secp)
\]
where $\qA$ can access to the oracle only once and $m_1$ is the query from $\qA$.
\end{description}


\end{definition}

We have the following theorem.
\begin{theorem}\label{thm:MAC_information_theoretic}
There exists an information-theoretically secure OT-MAC.
\end{theorem}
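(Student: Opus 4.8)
The plan is to give the standard Carter–Wegman style construction of an information-theoretically secure one-time MAC from a family of pairwise-independent (or more precisely, strongly universal) hash functions, instantiated concretely over a finite field. First I would fix a prime power $Q \geq 2^\secp$ (chosen so that the message domain $\Dmac$ embeds into $\F_Q$, padding or using longer messages as a vector of field elements if needed) and define $\Macgen(1^\secp)$ to sample $\mackey = (a,b) \chosen \F_Q \times \F_Q$ uniformly. Then $\Mactag(\mackey,m) \seteq a\cdot m + b$ with arithmetic in $\F_Q$, and $\Macvrfy(\mackey,m,\mac)$ outputs $\top$ iff $\mac = a\cdot m + b$. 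Correctness is immediate from the definitions. (If one wants to support variable-length or long messages, replace the linear map by evaluation of a degree-$d$ polynomial with coefficients derived from the message blocks, keeping $(a,b)$ as the key; the analysis only changes by a factor $d/Q$.)

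The core of the argument is the security bound. Here I would argue as follows: an adversary $\qA$ makes a single query $m_0$ to $\Mactag(\mackey,\cdot)$, receives $\mac_0 = a m_0 + b$, and then (using arbitrary, possibly quantum, computation — but the analysis is purely information-theoretic, so quantumness is irrelevant) outputs a forgery $(m_1,\mac_1)$. A valid forgery requires $m_1 \neq m_0$ (otherwise it is not a forgery in the one-time model) and $\mac_1 = a m_1 + b$. Conditioned on the observed pair $(m_0,\mac_0)$, the key $(a,b)$ is uniform over the line $\{(a,b) : a m_0 + b = \mac_0\}$, which contains exactly $Q$ points, one for each value of $a$. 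For each fixed $a$, there is exactly one value $\mac_1 = a m_1 + b = a(m_1 - m_0) + \mac_0$ that makes the forgery accept, and since $m_1 - m_0 \neq 0$ in $\F_Q$, distinct $a$ give distinct $\mac_1$. Hence, whatever value $\mac_1$ the adversary commits to, at most one choice of $a$ (equivalently, at most one of the $Q$ equally likely keys consistent with the transcript) validates it, so the forgery succeeds with probability at most $1/Q \leq 2^{-\secp} = \negl(\secp)$. Since this holds for every fixed choice of the adversary's message $m_0$ and every fixed output value, it holds in expectation over the adversary's randomness and the key, giving the claimed bound.

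The one subtlety I would be careful about — and I expect this to be the only real obstacle — is making sure the probability statement is taken in the correct order for a \emph{quantum} adversary: $\qA$ may query the tagging oracle in superposition? No: by the definition in the excerpt, $\qA$ accesses the oracle only once and the oracle $\Mactag(\mackey,\cdot)$ is a classical randomized functionality, so I would argue that a single classical-input query suffices (any superposition query can be purified/measured without loss, or one simply notes the functionality is deterministic given $\mackey$ so a quantum query to a classical deterministic function is handled by the standard argument that the adversary's final-state success probability is a convex combination over measurement outcomes of the query register). The cleanest route is to observe that since the security guarantee is information-theoretic and $\mackey$ is classical, we may condition on $\mackey$ and on the (measured) query input, reducing to the elementary counting argument above; no computational assumption and in particular no bound on $\qA$'s running time is used. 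This is why the theorem is stated unconditionally.

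Concretely, then, the proof structure is: (1) present the construction over $\F_Q$ with $Q = $ smallest prime power $\geq 2^\secp$; (2) verify correctness; (3) fix the adversary's transcript, observe the posterior on $(a,b)$ is uniform on a $Q$-point affine line; (4) show each candidate forgery is consistent with at most one such key; (5) conclude the forging probability is $\leq 1/Q \leq \negl(\secp)$, and note the bound is independent of the adversary's computational power, establishing \cref{thm:MAC_information_theoretic}.
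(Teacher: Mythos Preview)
The paper does not prove this theorem at all; it is stated without proof as a standard fact. Your proposal --- the Carter--Wegman $a\cdot m + b$ construction over $\F_Q$ together with the pairwise-independence counting argument --- is precisely the textbook proof and is correct.

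One small remark worth noting: the paper's security definition for OT-MAC, as literally written, omits the freshness condition that the forged message must differ from the one queried (without which no scheme could be secure, since the adversary could just replay). You correctly read this condition in, and it matches how the MAC is actually used in the SSL security proofs (\cref{lem:SSL_MAC} and \cref{lem:SSLcc_MAC}), where the reduction invokes MAC security precisely under the hypothesis $\snum^{(1)} \neq \snum$. Your discussion of the quantum-adversary issue is also sound: since the bound is information-theoretic and the tagging oracle is a deterministic classical function of the key, conditioning on the key and the (measured) query input reduces everything to the elementary combinatorial bound, and no restriction on $\qA$'s running time is needed.
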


\subsection{Non-interactive Zero-Knowledge Systems}

We introduce the definition of a non-interactive zero-knowledge (NIZK) system and true-simulation extractability for it.

\begin{definition}[NIZK]\label{def:nizk}
Let $L$ be an NP language associated with the corresponding NP relation $R$.
A NIZK system for $L$ is a tuple of algorithms $(\NIZK.\setup,\allowbreak\NIZK.\prove,\allowbreak\NIZK.\vrfy)$.
\begin{itemize}
\item $\NIZK.\setup(1^\secp)$: The setup algorithm takes as input the security parameter $1^\secp$ and outputs a common reference string $\crs$.
\item $\NIZK.\prove(\crs,x,w)$: The prove algorithm takes as input common reference string $\crs$, NP instance $x$, and witness $w$, and outputs a proof $\pi$.
\item $\NIZK.\vrfy(\crs,x,\pi)$: The verification algorithm takes as input common reference string $\crs$, NP instance $x$, and proof $\pi$, and outputs $\top$ or $\bot$.
\end{itemize}
\end{definition}

\begin{definition}[Completeness]
A NIZK system for NP is said to be complete if we have $\NIZK.\vrfy(\crs, x,\allowbreak \NIZK.\prove(\crs, x,w)) = \top$ for all common reference string $\crs$ output by $\NIZK.\setup(1^\lambda)$ and all valid statement/witness pairs $(x,w) \in R$.
\end{definition}

\begin{definition}[True-Simulation Extractability]\label{def:tse_nizk}
Let $\NIZK$ be a NIZK system and $\qA$ a QPT adversary.
Let $\Sim=(\fksetup,\Sim_1,\Sim_2)$ be a tuple of PPT algorithms.
We define the following experiment $\expt{\qA,\NIZK}{se\textrm{-}real}$.
\begin{enumerate}
\item The challenger first generates $\crs \gets \NIZK.\setup(1^\secp)$ and sends $\crs$ to $\qA$.
\item $\qA$ sends $q$ statement/witness pairs $(x_i,w_i)_{i\in[q]}$ to the challenger.
The challenger responds with $\{\pi_i\}_{i\in[q]}$, where $\pi_i\gets\NIZK.\prove(\crs,x_i,w_i)$ for every $i\in[q]$.
\item Finally, $\qA$ outputs $(x',\pi')$.
The challenger outputs $1$ if $\NIZK.\vrfy(\crs,x',\pi')=\top$, $(x_i,w_i)\in R$ for every $i\in[q]$, and $x_i\neq x'$ for every $i\in[q]$ hold. 
Otherwise, the challenger outputs $0$.
\end{enumerate}
We also define the following experiment $\expt{\qA,\Sim,\NIZK}{se\textrm{-}sim}$.
\begin{enumerate}
\item The challenger first generates $(\crs,\td) \gets \fksetup(1^\secp)$ and sends $\crs$ to $\qA$.
\item $\qA$ sends $q$ statement/witness pairs $(x_i,w_i)_{i\in[q]}$ to the challenger.
The challenger computes $(\{\pi_i\}_{i\in[q]},\state_{\Sim})\gets\Sim_1(\crs,\td,\{x_i\}_{i\in[q]})$ and returns $\{\pi_i\}_{i\in[q]}$ to $\qA$.
\item Finally, $\qA$ outputs $(x',\pi')$.
The challenger computes $w'\gets\Sim_2(\state_{\Sim},x',\pi')$.
The challenger outputs $1$ if $\NIZK.\vrfy(\crs,x',\pi')=\top$, $(x_i,w_i)\in R$ for every $i\in[q]$, $(x',w')\in R$, and $x_i\neq x'$ for every $i\in[q]$ hold.
Otherwise, the challenger outputs $0$.
\end{enumerate}

%

A NIZK system is said to be true-simulation extractable if for any QPT adversary $\qA$, there exists a tuple of PPT algorithms $\Sim$ such that we have
\[\abs{\Pr[1\gets\expt{\qA,\NIZK}{se\textrm{-}real}]-\Pr[1\gets\expt{\qA,\Sim,\NIZK}{se\textrm{-}sim}]}\leq\negl(\secp).\]
\end{definition}

Ananth and La Placa~\cite{EC:AnaLaP21} showed the following theorem.
\begin{theorem}\label{thm:seNIZK_LWE}
There exists a true-simulation extractable NIZK system secure against polynomial (resp. sub-exponential) time quantum adversaries assuming the quantum hardness of the LWE problem against polynomial (resp. sub-exponential) time quantum adversaries.
\end{theorem}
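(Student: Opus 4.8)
The plan is to invoke the by-now-standard transformation that compiles an adaptively sound and (adaptive, multi-theorem) zero-knowledge NIZK for \NP, together with a CCA-secure public-key encryption scheme, into a true-simulation extractable NIZK, and then to verify that every ingredient and every reduction used in it survives against quantum adversaries (and in the sub-exponential regime). Concretely, I would instantiate the base NIZK with the LWE-based construction of~\cite{C:PeiShi19} and the CCA-secure (tag-based) PKE with the lossy-trapdoor-function construction of~\cite{SIAMCOMP:PeiWat11}; both are secure against polynomial-time (resp.\ sub-exponential-time) quantum adversaries under the correspondingly scaled LWE assumption, and the sub-exponential version of the theorem will then follow because every reduction below loses only a polynomial factor.

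Spelling out the construction: for an \NP\ relation $R$, $\NIZK.\setup(1^\secp)$ outputs $\crs=(\crs_0,\pk)$ with $\crs_0\gets\NIZK_0.\setup(1^\secp)$ and $(\pk,\dk)\gets\PKE.\Gen(1^\secp)$. To prove $(x,w)\in R$ one computes $c\gets\PKE.\Enc(\pk,w)$ using $x$ as a tag together with a base NIZK proof $\pi_0$ for the augmented statement that there exist $w,r$ with $c=\PKE.\Enc(\pk,w;r)$ under tag $x$ and $(x,w)\in R$, and outputs $\pi=(c,\pi_0)$; verification runs $\NIZK_0.\vrfy$ on the augmented statement. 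The trapdoor setup $\fksetup$ runs the zero-knowledge simulator's setup to obtain $(\crs_0,\td_0)$ and keeps $\dk$ as well; $\Sim_1$ answers a true query $(x_i,w_i)$ with $c_i\gets\PKE.\Enc(\pk,0)$ under tag $x_i$ and a simulated base proof $\pi_{0,i}$ produced from $\td_0$; and $\Sim_2$, on a verifying $(x',\pi'=(c',\pi_0'))$, outputs $w'\gets\PKE.\Dec(\dk,x',c')$.

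The analysis is a short hybrid argument in two parts. First, the probability that $\qA$ outputs a verifying fresh proof differs only negligibly between $\expt{\qA,\NIZK}{se\textrm{-}real}$ and $\expt{\qA,\Sim,\NIZK}{se\textrm{-}sim}$: pass from honest CRS and proofs to the simulated ones by adaptive multi-theorem zero-knowledge of $\NIZK_0$, then from encrypting the real $w_i$'s to encrypting $0$ by semantic security of $\PKE$; neither step touches $\dk$, and both reductions are straight-line and hence go through against quantum $\qA$. Second, one bounds the bad event that in $\expt{\qA,\Sim,\NIZK}{se\textrm{-}sim}$ the fresh proof verifies, $x'\neq x_i$ for all $i$, yet $\PKE.\Dec(\dk,x',c')$ is not a witness for $x'$: by the (perfect) decryption correctness of $\PKE$ the augmented statement about $c'$ is then false except with negligible probability, so a verifying $\pi_0'$ would break adaptive soundness of $\NIZK_0$ once we have moved back to an honestly generated $\crs_0$. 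The main obstacle — and the reason CCA rather than merely CPA security is needed — is making this last step rigorous while the adversary still holds simulated proofs whose ciphertexts may be correlated with its output $c'$: the reduction has to decrypt the adversarial $c'$ to reach the soundness contradiction, and tag-based CCA security, combined with the guarantee $x'\neq x_i$ that is built into the definition of true-simulation extractability, is exactly what lets it do so. Assembling these two bounds, and checking that no step anywhere relies on rewinding, yields the claimed $\negl(\secp)$ gap between the two experiments.
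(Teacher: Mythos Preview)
The paper does not give its own proof of this theorem; it simply attributes the result to Ananth and La Placa~\cite{CoRR:AnaLaP}, with the technical overview noting that an seNIZK ``can be constructed from any (non-simulation-extractable) NIZK and CCA secure PKE, which can be instantiated under the LWE assumption~\cite{C:PeiShi19,SIAMCOMP:PeiWat11}.'' Your proposal is exactly this construction, spelled out in detail, with the same building blocks (Peikert--Shiehian NIZK and Peikert--Waters CCA PKE from lossy trapdoor functions) and the standard encrypt-the-witness-and-prove compiler; so your approach is correct and coincides with what the paper invokes.
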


\else\fi

\subsection{Noisy Trapdoor Claw-Free Hash Function}\label{sec:NTCF}

We recall the notion of noisy trapdoor claw-free (NTCF) hash function~\cite{FOCS:BCMVV18}.




\begin{definition}[NTCF Hash Function~\cite{FOCS:BCMVV18}]\label{def:NTCF}
Let $\cX$, $\cY$ be finite sets, $\cD_\cY$ the set of probability densities over $\cY$, and $\cK_\cF$ a finite set of keys.
A family of functions
\[\cF \seteq \setbracket{f_{\fk,b} : \cX \ra \cD_\cY}_{\fk\in \cK_\cF ,b\in\zo{}}\]
is a NTCF family if the following holds.
\begin{description}
\item [Efficient Function Generation:] There exists a PPT algorithm $\NTCF.\gen_\cF$ which generates a key $\fk\in \cK_\cF$ and a trapdoor $\td$.
\item [Trapdoor Injective Pair:] For all keys $\fk\in \cK_\cF$, the following holds.
\begin{enumerate}
\item \emph{Trapdoor:} For all $b\in \zo{}$ and $x\ne x' \in \cX$, $\support(f_{\fk,b}(x)) \cap \support(f_{\fk,b}(x')) = \emptyset$. In addition, there exists an efficient deterministic algorithm $\inv_\cF$ such that for all $b\in\zo{}, x\in\cX$ and $y\in \support(f_{\fk,b}(x))$, $\inv_\cF(\td,b,y)=x$.
\item \emph{Injective pair:} There exists a perfect matching relation $\cR_\fk \subseteq \cX \times \cX$ such that $f_{\fk,0}(x_0)=f_{\fk,1}(x_1)$ if and only if $(x_0,x_1) \in \cR_\fk$.
\end{enumerate}
\item [Efficient Range Superposition:] For all keys $\fk\in \cK_\cF$ and $b\in \zo{}$, there exists a function $f'_{\fk,b}: \cX \ra \cD_\cY$ such that the following holds.
\begin{enumerate}
\item For all $(x_0,x_1)\in \cR_\fk$ and $y \in \support(f'_{\fk,b}(x_b))$, $\inv_\cF (\td,b,y)=x_b$ and $\inv_\cF(\td,b\xor 1,y)=x_{b\xor 1}$.
\item There exists an efficient deterministic procedure $\CHK_\cF$ that takes as input $k,b\in\zo{},x\in\cX$ and $y\in\cY$ and outputs $1$ if $y\in \support(f'_{\fk,b}(x))$ and $0$ otherwise. This procedure does not need the trapdoor $\td$.
\item For all $\fk \in \cK$ and $b\in\zo{}$,
\[
\Exp_{x \chosen \cX}[\Heldist(f_{\fk,b}(x),f'_{\fk,b}(x))] \le \negl(\secp).
\]
Here $\Heldist$ is the Hellinger distance
\ifnum\cameraready=1
(See~\cite{EPRINT:KitNisYam20}).
\else
(See~\cref{sec:distribution}).
\fi
In addition, there exists a QPT algorithm $\samp_\cF$ that takes as input $\fk$ and $b\in\zo{}$ and prepare the quantum state
\begin{align}
\ket{\psi'} = \frac{1}{\sqrt{\abs{\cX}}}\sum_{x\in \cX, y\in\cY}\sqrt{(f'_{\fk,b}(x))(y)}\ket{x}\ket{y}. \label{eq:NTCF_superposition}
\end{align}
This property and
\ifnum\cameraready=1
a lemma about trace and Hellinger distances (See~\cite{EPRINT:KitNisYam20})
\else
~\cref{lem:Hellinger_trace_distance}
\fi
immediately imply that
\[
\trdist{ \ket{\psi}\bra{\psi} - \ket{\psi'}\bra{\psi'}} \leq \negl(\secp),
\]
where $\ket{\psi} = \frac{1}{\sqrt{\abs{\cX}}}\sum_{x\in \cX, y\in\cY}\sqrt{(f_{\fk,b}(x))(y)}\ket{x}\ket{y}$.
\end{enumerate}
\item [Adaptive Hardcore Bit:] For all keys $\fk\in\cK_\cF$, the following holds. For some integer $w$ that is a polynomially bounded function of $\secp$,
\begin{enumerate}
\item For all $b\in\zo{}$ and $x\in\cX$, there exists a set $G_{\fk,b,x}\subseteq \zo{w}$ such that $\Pr_{d\chosen \zo{w}}[d \notin G_{\fk,b,x}] \le \negl(\secp)$. In addition, there exists a PPT algorithm that checks for membership in $G_{\fk,b,x}$ given $\fk,b,x$, and $\td$.
\item  There is an efficiently computable injection $J: \cX \ra \zo{w}$ such that $J$ can be inverted efficiently on its range, and such that the following holds. Let
\begin{align*}
 H_\fk  & \seteq \setbracket{(b,x_b,d, d\cdot(J(x_0)\xor J(x_1))) \mid b\in\zo{},(x_0,x_1)\in \cR_\fk, d \in G_{\fk,0,x_0} \cap G_{\fk,1,x_1}},\\
 \overline{H}_\fk & \seteq \setbracket{(b,x_b,d, c)\mid (b,x,d,c\xor 1)\in H_k},
\end{align*}
then for any QPT $\A$, it holds that
\begin{align}
\abs{\Pr_{(\fk,\td)\gets \NTCF.\gen_\cF (1^\secp)}[\A(\fk)\in H_\fk] - \Pr_{(\fk,\td)\gets \NTCF.\gen_\cF (1^\secp)}[\A(\fk)\in \overline{H}_\fk]} \le \negl(\secp).
\end{align}
\end{enumerate}
\end{description}
\end{definition}

Brakerski et al. showed the following theorem.

\begin{theorem}[\cite{FOCS:BCMVV18}]\label{thm:NTCF_LWE}
If we assume the quantum hardness of the LWE problem, then there exists an NTCF family.
\end{theorem}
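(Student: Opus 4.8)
The statement is due to Brakerski et al.~\cite{FOCS:BCMVV18}; the plan is to recall their LWE-based construction and to check that it satisfies each of the four items of \cref{def:NTCF}, invoking \cite{FOCS:BCMVV18} for the delicate parts of the analysis.

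\textbf{The construction.} Fix LWE parameters $n,m,q$ with $m=\Omega(n\log q)$, a discrete Gaussian width $\sigma$ for the function noise, and a much smaller width for the LWE error, chosen so that $\sigma$ dominates the LWE error norm while remaining small compared to $q$. Set $\cX=\Zq^n$ and $\cY=\Zq^m$. Key generation samples $\mm{A}\in\Zq^{m\times n}$ together with a lattice trapdoor $\td_{\mm{A}}$ (via a standard trapdoor sampler), samples $\mv{s}\chosen\Zq^n$ and a short $\mv{e}$ from the LWE error distribution, and outputs the key $\fk=(\mm{A},\mv{y}_0)$ with $\mv{y}_0=\mm{A}\mv{s}+\mv{e}$ and the trapdoor $\td=(\td_{\mm{A}},\mv{s})$. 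For $b\in\zo{}$, $f_{\fk,b}(\mv{x})$ is a width-$\sigma$ discrete Gaussian over $\cY$ centered at $\mm{A}\mv{x}+b\cdot\mv{y}_0$, suitably truncated so that its support is a small ball around the center; $f'_{\fk,b}$ is the efficiently sampleable approximation of it obtained by truncating/rounding so that the corresponding superposition can be prepared.

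\textbf{The easy items.} Item~1 (efficient function generation) is immediate from the trapdoor sampler. For item~2: with $\sigma$ small relative to $q$ and appropriate truncation the supports of $f_{\fk,b}(\mv{x})$ and $f_{\fk,b}(\mv{x}')$ are disjoint for $\mv{x}\ne\mv{x}'$, and $\inv_\cF(\td,b,\mv{y})$ recovers the unique preimage by LWE-inverting $\mv{y}-b\cdot\mv{y}_0\approx\mm{A}\mv{x}$ using $\td_{\mm{A}}$; the perfect matching is $\cR_\fk=\setbracket{(\mv{x}_0,\mv{x}_1)\mid\mv{x}_0=\mv{x}_1+\mv{s}}$, and for such pairs $f_{\fk,0}(\mv{x}_0)$ and $f_{\fk,1}(\mv{x}_1)$ have negligible Hellinger distance because $\mm{A}\mv{x}_0=\mm{A}\mv{x}_1+\mm{A}\mv{s}=\mv{y}_0+\mm{A}\mv{x}_1-\mv{e}$, i.e.\ the two Gaussians differ only by the tiny shift $\mv{e}$, which is absorbed by the Gaussian tails. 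For item~3: $\samp_\cF$ prepares a uniform superposition over $\mv{x}\in\cX$, appends a discrete-Gaussian superposition in an auxiliary register, and adds $\mm{A}\mv{x}+b\cdot\mv{y}_0$ in place, producing a state negligibly close to the one in \cref{eq:NTCF_superposition}; $\CHK_\cF(\fk,b,\mv{x},\mv{y})$ just tests whether $\mv{y}-\mm{A}\mv{x}-b\cdot\mv{y}_0$ is short and uses no trapdoor, and closeness of $f$ to $f'$ is a Gaussian tail bound.

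\textbf{The main obstacle.} The crux is item~4, the adaptive hardcore bit. Take $J$ to be the binary-decomposition embedding, so that $J(\mv{x}_0)\oplus J(\mv{x}_1)$ is a fixed affine encoding of $\mv{s}$ over $\zo{w}$, and let $G_{\fk,b,\mv{x}}$ be the set of $\mv{d}$ for which the binary carries relating $\mv{d}\cdot(J(\mv{x}_0)\oplus J(\mv{x}_1))$ to an inner product with $\mv{s}$ are controlled (so a uniform $\mv{d}$ lies in $G_{\fk,b,\mv{x}}$ except with negligible probability). The claim then reduces to: given only $\fk=(\mm{A},\mm{A}\mv{s}+\mv{e})$, no QPT algorithm outputs $(b,\mv{x}_b,\mv{d},\mv{d}\cdot(J(\mv{x}_0)\oplus J(\mv{x}_1)))$ with the correct last bit noticeably more often than with the flipped bit. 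The argument switches, under the quantum hardness of LWE, to an alternative ("lossy") distribution of $(\mm{A},\mv{y}_0)$ under which $\mv{s}$ retains high conditional min-entropy given everything the adversary sees, including the measured $\mv{x}_b$, and then a leftover-hash-lemma--style step forces the single bit $\mv{d}\cdot(J(\mv{x}_0)\oplus J(\mv{x}_1))$ to be statistically close to uniform and hence unpredictable. Making this rigorous is the hard part, since the adversary may hold $\mv{x}_b$ in superposition and the bound must hold adaptively over $\mv{d}$; this is precisely the content of the analysis of \cite{FOCS:BCMVV18}, which we invoke as given. Assembling items~1--4 yields the claimed NTCF family.
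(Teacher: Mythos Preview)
The paper does not prove this theorem at all: it simply cites \cite{FOCS:BCMVV18} and uses the existence of an NTCF family as a black box. Your proposal goes well beyond what the paper does by sketching the actual LWE-based construction and the verification of the four items of \cref{def:NTCF}; the sketch is broadly faithful to \cite{FOCS:BCMVV18} (the Gaussian-centered functions, the matching $\mv{x}_0=\mv{x}_1+\mv{s}$, the lossy-mode argument for the adaptive hardcore bit), so there is no gap relative to the paper's own treatment---you have supplied strictly more detail than the paper intends at this point.
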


\subsection{Secure Software Leasing}

We introduce the notion of secure software leasing (SSL) defined by Ananth and La Placa~\cite{EC:AnaLaP21}.

\newcommand{\runout}{\run_{\mathsf{out}}}

\begin{definition}[SSL with Setup~\cite{EC:AnaLaP21}]\label{def:ssl_crs}
Let $\cC=\setbracket{\cC_{\secp}}_{\secp}$ be a circuit class such that $\cC_{\secp}$ contains circuits of input length $n$ and output length $m$.
A secure software lease scheme with setup for $\cC$ is a tuple of algorithms $(\setup,\sslgen,\lessor,\allowbreak\run,\sslcheck)$.
\begin{itemize}
\item $\setup(1^\secp)$: The setup algorithm takes as input the security parameter $1^\secp$ and outputs a classical string $\crs$.
\item $\sslgen(\crs)$: The key generation algorithm takes as input $\crs$ and outputs a secret key $\sk$.
\item $\lessor(\sk,C)$: The lease algorithm takes as input $\sk$ and a polynomial-sized classical circuit $C\in \cC_\secp$ and outputs a quantum state $\sft_C$.
\item $\run(\crs,\sft_C,x)$: The run algorithm takes as input $\crs$, $\sft_C$, and an input $x \in \zo{n}$ for $C$, and outputs $y \in \zo{m}$ and some state $\sft'$.
We use the notation $\runout(\crs,\sft_C,x)=y$ to denote that $\run(\crs,\sft_C,x)$ results in an output of the form $(\sft',y)$ for some state $\sft'$.
\item $\sslcheck(\sk,\sft_{C}^{\ast})$: The check algorithm takes as input $\sk$ and $\sft_{C}^{\ast}$, and outputs $\top$ or $\bot$.
\end{itemize}
\end{definition}

\begin{definition}[Correctness for SSL]\label{def:correctness_SSL}
An SSL scheme $(\setup,\sslgen,\lessor,\run,\allowbreak\sslcheck)$ for $\cC = \setbracket{\cC_\secp}_\secp$ is correct if for all $C\in \cC_\secp$, the following two properties hold:
\begin{itemize}
\item Correctness of $\run$:
\ifnum\submission=0
\[
\Pr\left[ \forall x\ \Pr[\runout(\crs,\sft_C,x) = C(x)]\ge 1-\negl(\secp) \ \middle |
\begin{array}{rl}
 &\crs \gets \setup(1^\secp) \\
 &\sk \gets \sslgen(\crs) \\
 &\sft_C \gets \lessor(\sk,C)
\end{array}
\right]\ge 1-\negl(\secp).
\]
\else
\begin{align*}
&\Pr\left[ \forall x\ \Pr[\runout(\crs,\sft_C,x) = C(x)]\ge 1-\negl(\secp) \ \middle |
\begin{array}{rl}
 &\crs \gets \setup(1^\secp) \\
 &\sk \gets \sslgen(\crs) \\
 &\sft_C \gets \lessor(\sk,C)
\end{array}
\right]\\
&\ge1-\negl(\secp).
\end{align*}
\fi
\item Correctness of $\sslcheck$:
\[
\Pr\left[ \sslcheck (\sk,\sft_C) = \top \ \middle |
\begin{array}{rl}
 &\crs \gets \setup(1^\secp) \\
 &\sk \gets \sslgen(\crs) \\
 &\sft_C \gets \lessor(\sk,C)
\end{array}
\right]\ge 1-\negl(\secp).
\]
\end{itemize}
\end{definition}

\begin{definition}[Reusability for SSL]\label{def:reusability_SSL}
An SSL scheme $(\setup,\sslgen,\lessor,\run,\allowbreak\sslcheck)$ for $\cC = \setbracket{\cC_\secp}_\secp$ is reusable if for all $C\in \cC_\secp$ and for all $x\in\zo{n}$, it holds that
\[\trdist{\sft'_{C,x} - \sft_C} \le \negl(\secp),\]
\end{definition}
where $\sft'_{C,x}$ is the quantum state output by $\run(\crs,\sft_C,x)$.

\begin{lemma}[\cite{EC:AnaLaP21}]\label{lem:reusability_SSL}
If an SSL scheme $(\setup,\sslgen,\lessor,\run,\sslcheck)$ for $\cC = \setbracket{\cC_\secp}_\secp$ is correct, then there exists a QPT algorithm $\run'$ such that $(\setup,\sslgen,\allowbreak\lessor,\run',\sslcheck)$ is a reusable SSL scheme for $\cC = \setbracket{\cC_\secp}_\secp$.
\end{lemma}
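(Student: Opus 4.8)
The plan is to leave $\setup$, $\sslgen$, $\lessor$, and $\sslcheck$ exactly as in the given scheme — so that correctness of $\sslcheck$ (\cref{def:correctness_SSL}) carries over verbatim — and to obtain $\run'$ from $\run$ by the standard \emph{compute--copy--uncompute} transformation, so that evaluating $\run'$ perturbs $\sft_C$ only negligibly. Concretely, let $U$ be the unitary dilation of the QPT algorithm $\run(\crs,\cdot,x)$ acting on the software register together with a fresh output register and a fresh ancilla register both initialized to $\ket{0}$; running $\run$ amounts to applying $U$, measuring the output register in the computational basis, and discarding the ancilla. We define $\run'(\crs,\sft_C,x)$ to instead (i) apply $U$; (ii) coherently copy the output register into an extra fresh register by transversal CNOTs; (iii) apply $U^{\dagger}$ to the original three registers; and (iv) measure the copied register to obtain $y$, then output $y$ together with the (now essentially restored) software register, discarding all auxiliary registers.

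For the analysis I would first invoke correctness of $\run$: with probability $\ge 1-\negl(\secp)$ over $\crs\gets\setup(1^{\secp})$, $\sk\gets\sslgen(\crs)$, $\sft_C\gets\lessor(\sk,C)$ we land in a ``good'' event under which, for \emph{every} input $x$, $\Pr[\runout(\crs,\sft_C,x)=C(x)]\ge 1-\negl(\secp)$. Fixing such a good $\sft_C$ and passing to a purification, write $U\bigl(\ket{\sft_C}\ket{0}\ket{0}\bigr)=\ket{\Phi}=\sum_{y}\alpha_{y}\ket{\Phi_{y}}\ket{y}$, where the last register is the output register; correctness says $\sum_{y\neq C(x)}\abs{\alpha_{y}}^{2}\le\negl(\secp)$. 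Hence after step (ii) the state lies within Euclidean distance $\negl(\secp)$ of $\ket{\Phi}\otimes\ket{C(x)}$ (using $\bigl\|\,\sum_{y}\alpha_{y}\ket{\Phi_{y}}\ket{y}(\ket{y}-\ket{C(x)})\,\bigr\|^{2}\le 2\sum_{y\neq C(x)}\abs{\alpha_{y}}^{2}$), and applying the unitary $U^{\dagger}\otimes I$ in step (iii) preserves this, so the resulting state is within $\negl(\secp)$ of $\ket{\sft_C}\ket{0}\ket{0}\otimes\ket{C(x)}$. Therefore the measurement in step (iv) returns $C(x)$ up to probability $\negl(\secp)$, and tracing out every register but the software one yields an output state $\sft'_{C,x}$ with $\trdist{\sft'_{C,x}-\sft_C}\le\negl(\secp)$ — this is exactly the ``Almost As Good As New''/gentle-measurement phenomenon, here obtained directly from the amplitude bound. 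Averaging over the good event and its negligible-probability complement gives $\trdist{\sft'_{C,x}-\sft_C}\le\negl(\secp)$ for all $C\in\cC_{\secp}$ and all $x\in\zo{n}$, i.e.\ reusability (\cref{def:reusability_SSL}), while the same chain of estimates shows that $\run'$ is still correct in the sense of \cref{def:correctness_SSL}.

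The genuinely routine parts are the CNOT copy and the elementary norm inequality above. The one place that needs care is the bookkeeping of the two chained ``$\ge 1-\negl$'' guarantees — the good event for $\sft_C$, and, conditioned on it, the correctness probability for each fixed $x$ — together with the fact that $\sft_C$ is mixed and $\run$ may perform internal measurements; I would handle all of this uniformly by working with purifications and a fixed unitary dilation $U$ of $\run$, so that ``measuring the output register of $\ket{\Phi}$'' faithfully reproduces the output distribution of $\run$ on $\sft_C$. I expect this dilation/purification bookkeeping, rather than any new idea, to be the only (mild) obstacle. Finally, polynomially many sequential reuses follow from the single-shot bound by a routine hybrid argument, since running $\run'$ on a state $\negl(\secp)$-close to $\sft_C$ behaves $\negl(\secp)$-closely to running it on $\sft_C$; but the lemma as stated requires only the single-shot statement.
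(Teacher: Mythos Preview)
Your proposal is correct and follows essentially the same approach as the paper: the paper does not spell out a proof but cites \cite{CoRR:AnaLaP} (their Proposition~35) and, in the analogous \cref{rem:reusability}, points to the Almost As Good As New Lemma~\cite{Aar05}, which is precisely the compute--copy--uncompute / gentle-measurement argument you carry out explicitly. Your handling of the nested ``$\ge 1-\negl$'' guarantees and the purification/dilation bookkeeping is the right way to make this rigorous.
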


Below, we introduce a security notion called finite-term lessor security for SSL.
We can also consider a stronger security notion called infinite-term lessor security for SSL.
For the definition of infinite-term lessor security, see the paper by Ananth and La Placa~\cite{EC:AnaLaP21}.

In the security experiment of SSL, an adversary outputs a bipartite state $\sft^\ast$ on the first and second registers. Let $\sft_0^\ast \seteq \Trace_2[\sft^\ast]$ and $\sft_0^\ast$ is verified by $\sslcheck$.\footnote{$\Trace_i[\qstate{X}]$ is the partial trace of $\qstate{X}$ where the $i$-th register is traced out.} In addition, $P_2(\sk,\sft^\ast)$ denotes the resulting post-measurement state on the second register (after the check on the first register).
We write
\[
P_2(\sk,\sft^\ast) \propto \Trace_1[\Pi_1[(\sslcheck(\sk,\sft^\ast)_1 \tensor I_2)(\sft^\ast)]]
\]
for the state that $\qA$ keeps after the first register has been returned and verified. Here, $\Pi_1$ denotes projecting the output of $\sslcheck$ onto $\top$, and where $(\sslcheck(\sk,\sft^\ast)_1 \tensor I_2)(\sft^\ast)$ denotes applying $\sslcheck$ on to the first register, and the identity on the second register of $\sft^\ast$.
\begin{definition}[Perfect Finite-Term Lessor Security]\label{def:perfect_lessor_security}
Let $\beta$ be any inverse polynomial of $\lambda$ and $\cD_{\cC}$ a distribution on $\cC$.
We define the $(\beta,\cD_{\cC})$-perfect finite-term lessor security game $\expt{\qA,\cD_\cC}{pft\textrm{-}lessor}(\secp,\beta)$ between the challenger and adversary $\qA$ as follows.
\begin{enumerate}
\item The challenger generates $C \gets \cD_{\cC}$, $\crs \gets \setup(1^\secp)$, $\sk\gets \sslgen(\crs)$, and $\sft_C \gets \lessor(\sk,C)$, and sends $(\crs,\sft_C)$ to $\qA$.
\item $\qA$ outputs a bipartite state $\sft^\ast$. Below, we let $\sft_0^\ast \seteq \Trace_2 [\sft^\ast]$.
\item If $\sslcheck(\sk,\sft_0^\ast) = \top$ and $\forall x\ \Pr[\runout(\crs,P_2(\sk,\sft^\ast),x) = C(x)]\ge \beta$ hold, where the probability is taken over the choice of the randomness of $\run$, then the challenger outputs $1$. Otherwise, the challenger outputs $0$.
\end{enumerate}

We say that an SSL scheme $(\setup,\sslgen,\lessor,\run,\sslcheck)$ is $(\beta,\cD_{\cC})$-perfect finite-term lessor secure,
if for any QPT $\qA$ that outputs a bipartite (possibly entangled) quantum state on the first and second registers, the following holds.
\begin{align*}
\Pr[\expt{\qA,\cD_\cC}{pft\textrm{-}lessor}(\secp,\beta) \out 1] \le \negl(\secp).
\end{align*}
\end{definition}

In addition to the above perfect finite-term lessor security, we also introduce a new security notion \emph{average-case finite-term lessor security}.
For an SSL scheme for a family of PRF, we consider average-case finite-term lessor security.
This is because when we consider cryptographic functionalities, the winning condition ``$\forall x\ \Pr[\runout(\crs,P_2(\sk,\sigma^\ast),x) = C(x)]\ge \beta$'' posed to the adversary in the definition of perfect finite-term lessor security seems to be too strong.
In fact, for those functionalities, adversaries who can generate a bipartite state $\sft^*$ such that $\runout(\crs,P_2(\sk,\sft^\ast),x) = C(x)$ holds for some fraction of inputs $x$ should be regarded as successful adversaries.
Average-case finite-term lessor security considers those adversaries.

\begin{definition}[Average-Case Finite-Term Lessor Security]\label{def:average_lessor_security}
Let $\epsilon$ be any inverse polynomial of $\lambda$ and $\cD_{\cC}$ a distribution on $\cC$.
We define the $(\epsilon,\cD_{\cC})$-average-case finite-term lessor security game $\expt{\qA,\cD_\cC}{aft\textrm{-}lessor}(\secp,\epsilon)$ between the challenger and adversary by replacing the third stage of $\expt{\qA,\cD_\cC}{pft\textrm{-}lessor}(\secp,\beta)$ with the following.
\begin{enumerate}
\setcounter{enumi}{2}
\item If $\sslcheck(\sk,\sft_0^\ast) = \top$ and $\Pr[\runout(\crs,P_2(\sk,\sft^\ast),x) = C(x)]\ge \epsilon$ hold, where the probability is taken over the choice of $x\la\zo{n}$ and the random coin of $\run$, then the challenger outputs $1$. Otherwise, the challenger outputs $0$.
\end{enumerate}

We say that an SSL scheme $(\setup,\sslgen,\lessor,\run,\sslcheck)$ is $(\epsilon,\cD_{\cC})$-average-case finite-term lessor secure,
if for any QPT $\qA$ that outputs a bipartite (possibly entangled) quantum state on the first and second registers, the following holds.
\begin{align*}
\Pr[\expt{\qA,\cD_\cC}{aft\textrm{-}lessor}(\secp,\epsilon) \out 1] \le \negl(\secp).
\end{align*}
\end{definition}

\section{Two-Tier Quantum Lightning}\label{sec:abstraction_tool}

In this section, we present definitions of our new tools and their instantiations.

\subsection{Two-Tier Quantum Lightning}\label{sec:tt_QL}
We define two-tier QL, which is a weaker variant of QL~\cite{JC:Zhandry21}.
A big difference from QL is that we have two types of verification called semi-verification and full-verification.
We need a secret key for full-verification while we use a public key for semi-verification.

\begin{definition}[Two-Tier Quantum Lightning (syntax)]\label{def:tt_QL}
A two-tier quantum lightning scheme is a tuple of algorithms $(\setup,\boltgen,\semivrfy,\fullvrfy)$.
\begin{itemize}
\item $\setup(1^\secp)$: The setup algorithm takes as input the security parameter $1^\secp$ and outputs a key pair $(\pk,\sk)$.
\item $\boltgen(\pk)$: The bolt generation algorithm takes as input $\pk$ and outputs a classical string $\snum$ (called a serial number) and a quantum state $\bolt$ (called a bolt for the serial number).
\item $\semivrfy(\pk,\snum,\bolt)$: The semi-verification algorithm takes as input $\pk$, $\snum$, and $\bolt$ and outputs $(\top,\bolt')$ or $\bot$.
\item $\fullvrfy(\sk,\snum,\bolt)$: The full-verification algorithm takes as input $\sk$, $\snum$, and $\bolt$ and outputs $\top$ or $\bot$.
\end{itemize}
\end{definition}

\begin{definition}[Correctness for Two-Tier Quantum Lightning]\label{def:ttQL_correctness}

There are two verification processes. We say that a two-tier quantum lightning with classical verification is correct if it satisfies the following two properties.
\begin{description}
	\item[Semi-verification correctness:]
	\begin{align*}
\Pr\left[
(\top,\bolt') \gets \semivrfy(\pk,\snum,\bolt)
\ \middle |
\begin{array}{rl}
 &(\pk,\sk) \gets \setup(1^\secp) \\
 &(\snum,\bolt) \gets \boltgen(\pk)
\end{array}
\right] > 1-\negl(\secp).
\end{align*}

	\item[Full-verification correctness:]
\begin{align*}
\Pr\left[
\top\gets \fullvrfy(\sk,\snum,\bolt)\ \middle |
\begin{array}{rl}
 &(\pk,\sk) \gets \setup(1^\secp) \\
 &(\snum,\bolt) \gets \boltgen(\pk)
\end{array}
\right] > 1-\negl(\secp).
\end{align*}
\end{description}
\end{definition}

\begin{definition}[Reusability for Two-Tier Quantum Lightning]\label{def:ttQL_reusability}
A two-tier quantum lightning scheme $(\setup,\boltgen,\semivrfy,\fullvrfy)$ is reusable if for all $(\pk,\sk)\gets \setup(1^\secp)$, $(\snum,\bolt)\gets \boltgen(\pk)$, and $(\bolt',\top)\gets \semivrfy(\pk,\allowbreak\snum,\bolt)$, it holds that
\[\trdist{\bolt' - \bolt} \le \negl(\secp).\]
\end{definition}
\begin{remark}\label{rem:reusability}
We can show that any two-tier QL scheme that satisfies semi-verification correctness can be transformed into one that satisfies reusability by using the Almost As Good As New Lemma \cite{Aar05} similarly to an analogous statement for SSL shown in \cite{EC:AnaLaP21}.   
Therefore, we focus on correctness.
\end{remark}

\begin{definition}[Two-Tier Unclonability]\label{def:tt_unclonability}
We define the two-tier unclonability game between a challenger and an adversary $\A$ as follows.
\begin{enumerate}
\item The challenger generate $(\pk,\sk)\gets \setup(1^\secp)$ and sends $\pk$ to $\qA$.
\item $\qA$ outputs possibly entangled quantum states $\qstate{L_0}$ and $\qstate{L_1}$ and a classical string $\snum^\ast$, and sends them to the challenger.
\item The challenger runs $\fullvrfy(\sk,\snum^\ast,\qstate{L_0})$ and $\semivrfy(\pk,\snum^\ast,\qstate{L_1})$. If both the outputs are $\top$, then this experiments outputs $1$. Otherwise, it outputs $0$.
\end{enumerate}
This game is denoted by $\expa{\A,\Sigma}{tt}{unclone}(1^\secp)$. A two-tier quantum lightning scheme is two-tier unclonable if for any QPT adversary $\qA$, it holds that
\[
\Pr[\expa{\qA,\Sigma}{tt}{unclone}(1^\sep)=1] \leq \negl(\secp).
\]
\end{definition}

\begin{definition}[Secure Two-Tier Quantum Lightning]\label{def:secure_ttQL}
A two-tier quantum lightning scheme is secure if it satisfies~\cref{def:tt_QL,def:ttQL_correctness,def:ttQL_reusability,def:tt_unclonability}.
\end{definition}

\ifnum\submission=0

\newcommand{\FQM}{\mathsf{FQM}}
\newcommand{\fqmver}{\qalgo{FQMVrfy}}
\newcommand{\fqmgen}{\qalgo{FQMGen}}

\ifnum\submission=0
\subsection{Two-Tier Quantum Lightning from SIS}\label{sec:ttQL_SIS}
\else
\section{Two-Tier Quantum Lightning from SIS}\label{sec:ttQL_SIS}
\fi
We show how to construct a two-tier quantum lightning scheme from the SIS assumption.
The construction is based on the franchised quantum money scheme by Roberts and Zhandry~\cite{RobZha21}.
They (implicitly) proved the following lemma holds by appropriately setting parameters $n,m,q,\beta$ in such a way that $\SIS_{n,m,q,\beta}$ is believed to be hard:
\begin{lemma}[{\cite{RobZha21}}]\label{lem:fqm}
There exist PPT algorithm $\trapgen$ and QPT algorithms $(\fqmgen,\fqmver)$ that work as follows:\footnote{$\trapgen$ is by now a standard  algorithm to sample a matrix with its trapdoor \cite{STOC:GenPeiVai08,EC:MicPei12}.}
\begin{description}
\item[$\trapgen(1^\secp)$:]
This algorithm generates a matrix $\mm{A}\in \Zq^{n\times m}$ and its trapdoor $\td$.
\item[$\fqmgen(\mm{A})$:]
Given a matrix $\mm{A}\in \Zq^{n\times m}$, it outputs a vector $\mv{y}\in \Zq^n$ along with a quantum state 
\begin{align*}
\ket{\Sigma} & = \sum_{\mv{x} \in \Zq^m : \mm{A}\mv{x} = \mv{y} \bmod{q}} \sqrt{p(\mv{x})} \ket{\mv{x}}.
\end{align*}
for a certain probability density function $p$ over $\{\mv{x} \in \Zq^m : \mm{A}\mv{x} = \mv{y} \bmod{q}\}$
such that if we take $\mv{x}$ according to $p$, we have $\Pr[\|\mv{x}\|> \beta/2]=\negl(\secp)$.
\footnote{Specifically, $p$ is proportional to discrete Gaussian.}
\item[$\fqmver(\td,\mv{y},\ket{\Sigma})$:]
It outputs $\top$ or $\bot$.
\end{description}
Moreover, the following is satisfied:
\begin{enumerate}
\item \label{item:correctness}
If we generate $(\mm{A},\td)\gets \trapgen(1^\secp)$ and $(\mv{y},\ket{\Sigma})\gets \fqmgen(\mm{A})$, we have 
\[
\Pr[\fqmver(\td,\mv{y},\ket{\Sigma})=\bot]=\negl(\secp).
\]  
\item \label{item:security}
For any $(\mm{A},\td)\gets \trapgen(1^\secp)$, $\mv{y}\in \Zq^n$ and (possibly malformed) quantum state $\qstate{sigma}$ such that $\Pr[\fqmver(\td,\mv{y},\qstate{sigma})=\top]$ is non-negligible,  
if we measure $\qstate{sigma}$, then the outcome $\mv{x}$ satisfies $\mm{A}\mv{x} = \mv{y} \bmod{q}$ and $\|\mv{x}\|\leq \beta/2$ with a non-negligible probability, and no value for $\mv{x} \in \Zq^m$ has overwhelming probability of being measured conditioned on that the above holds. 


\end{enumerate}
\end{lemma}
\begin{proof}(sketch.)
$\fqmver(\td,\mv{y},\qstate{sigma})$ first checks if the value $\mv{x}$ in the register of $\qstate{sigma}$ satisfies $\mm{A}\mv{x}=\mv{y}$ and $\|\mv{x}\|\leq \beta/2$ in superposition by writing the result into another register and measuring it.  If that is not satisfied, it immediately outputs $\bot$ and halts. 
Otherwise, it applies the quantum Fourier transform on $\qstate{sigma}$ and measures the state to get a vector $\mv{z}\in \Zq^m$. 
If it is an LWE instance, i.e., $\mv{z}^{T}=\mv{s}^{T}\mm{A}+\mv{e}^{T}$ for some $\mv{s}\in \Zq^n$ and a ``small'' error $\mv{e}\in \Zq^m$, it outputs $\top$, and otherwise outputs $\bot$. 
Note that it can check that because it knows the trapfoor $\td$ for $\mm{A}$. 

\cref{item:correctness} follows from the fact that the quanutm Fourier transform of the honestly generated $\ket{\Sigma}$ results in a superposition of LWE instances (e.g., see \cite[Proposition 20]{CLZ21}). 
\cref{item:security} holds because if $\qstate{sigma}$ (almost) collapses to a single $\mv{x}$ after the first check of $\fqmver$, the masurement outcome of its quantum Fourier transform is (almost) uniformly distributed over $\Zq^m$, which is an LWE instance with a negligible probability (under an appropriate parameter setting).
Therefore, if it has a non-negligible chance of being accepted, it should not have an overwheling amplitude on a single $\mv{x}$. This means that \cref{item:security} holds.   
\end{proof}

\begin{construction}\label{const:ttQL_SIS}
Our two-tier quantum lightning scheme is described as follows.
\begin{itemize}
\item $\setup(1^\secp)$: Run $(\mm{A},\td)\gets \trapgen(1^\secp)$ and outputs $\pk\seteq \mm{A}$ and $\sk\seteq \td$.
\item $\boltgen(\pk=\mm{A})$: 
Run $(\mv{y},\ket{\Sigma})\gets \fqmgen(\mm{A})$ and 
 outputs $(\snum,\bolt) \seteq (\mv{y},\ket{\Sigma})$.
\item $\fullvrfy(\sk=\td,\snum=\mv{y},\bolt)$: 
This is exactly the same algorithm as $\fqmver(\td,\mv{y},\bolt)$.
 \item $\semivrfy(\pk=\mm{A},\snum=\mv{y},\bolt)$: 
 This algorithm checks if the value $\mv{x}$ in the register of $\bolt$ satisfies $\mm{A}\mv{x}=\mv{y}$ and $\|\mv{x}\|\leq \beta/2$ in superposition by writing the result into another register and measuring it.  
 If that is satisfied, then it outputs $\top$ along with a resulting state $\bolt'$ in the register that stored $\bolt$. Otherwise, it outputs $\bot$. 
\end{itemize}
\end{construction}

Full- and semi-verification correctness directly follows from~\cref{lem:fqm}.
Security is stated as follows:
\begin{theorem}\label{thm:ttQL_SIS}
If we assume the quantum hardness of the $\SIS_{n,m,q,\beta}$, then the above two-tier quantum lightning satisfies two-tier unclonability.
\end{theorem}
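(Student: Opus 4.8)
The plan is to reduce two-tier unclonability of \cref{const:ttQL_SIS} to the quantum hardness of $\SIS_{n,m,q,\beta}$. Suppose toward a contradiction that a QPT adversary $\qA$ wins the two-tier unclonability game with non-negligible probability $\delta$, and build a QPT $\qB$ that solves $\SIS_{n,m,q,\beta}$ with probability $\delta-\negl(\secp)$. On input a uniformly random $\mm{A}\in\Zq^{n\times m}$, $\qB$ sets $\pk\seteq\mm{A}$ and runs $\qA(\pk)$. Since the matrix output by $\trapgen$ is statistically close to uniform, $\qA$'s view is statistically indistinguishable from the real game, so with probability $\geq\delta-\negl(\secp)$ it outputs a (possibly entangled) bipartite state $(\qstate{L_0},\qstate{L_1})$ and $\snum^\ast=\mv{y}$ for which the first register passes $\fullvrfy$ and the second passes $\semivrfy$ with respect to $\mv{y}$. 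The crucial point is that $\qB$ does \emph{not} hold a trapdoor for $\mm{A}$, and cannot obtain one --- a trapdoor for $\mm{A}$ would itself break $\SIS_{n,m,q,\beta}$ via Gaussian preimage sampling of $\mv{0}$ --- so $\qB$ cannot run $\fullvrfy=\fqmver$. Instead $\qB$ runs the \emph{public} $\semivrfy$ on \emph{both} registers: it runs $\semivrfy(\pk,\mv{y},\qstate{L_1})$ and, if this accepts, measures the returned register in the computational basis to get $\mv{x}_1$; it then runs $\semivrfy(\pk,\mv{y},\qstate{L_0})$ and, if this accepts, measures to get $\mv{x}_0$; it outputs $\mv{s}\seteq\mv{x}_0-\mv{x}_1$.

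By the description of $\semivrfy$ in \cref{const:ttQL_SIS}, an accepting semi-verification leaves its register supported entirely on $\{\mv{x}:\mm{A}\mv{x}=\mv{y}\bmod q,\ \norm{\mv{x}}\leq\beta/2\}$. Hence, on the event that both semi-verifications accept, $\mm{A}\mv{x}_0=\mm{A}\mv{x}_1=\mv{y}\bmod q$ and $\norm{\mv{x}_0},\norm{\mv{x}_1}\leq\beta/2$, so whenever $\mv{x}_0\neq\mv{x}_1$ the output $\mv{s}$ is a valid $\SIS_{n,m,q,\beta}$ solution ($\mv{s}\neq\mv{0}$, $\mm{A}\mv{s}=\mv{0}\bmod q$, $\norm{\mv{s}}\leq\beta$ by the triangle inequality). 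The two $\semivrfy$ invocations act on disjoint registers and commute, so they may be analyzed in either order. It remains to lower-bound the probability that both semi-verifications accept \emph{and} $\mv{x}_0\neq\mv{x}_1$.

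This is the technical heart. Analyze the second-register semi-verification first. Condition on $\qA$ winning: then $\semivrfy(\pk,\mv{y},\qstate{L_1})$ accepts, $\qB$ obtains a fixed short preimage $\mv{x}_1$ of $\mv{y}$, and the joint state collapses to some state $\rho_{\mv{x}_1}$ on the first register on which $\fullvrfy$ accepts with probability $q_{\mv{x}_1}$; the weighted average $\sum_{\mv{x}_1}\Pr[\text{outcome }\mv{x}_1]\,q_{\mv{x}_1}$ equals $\delta$ up to $\negl(\secp)$. Using that full-verification certifies an essentially well-formed bolt (\cref{lem:fqm}) --- concretely, $\fqmver(\td,\mv{y},\cdot)$ is, up to negligible error, the projective verification against the honest bolt state $\ket{\Sigma_{\mv{y}}}$ reconstructed from $\td$~\cite{QC:RobZha20} --- the probability $q_{\mv{x}_1}$ equals, up to $\negl(\secp)$, the overlap of $\rho_{\mv{x}_1}$ with $\ket{\Sigma_{\mv{y}}}$. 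Since $\ket{\Sigma_{\mv{y}}}$ is supported inside the short-preimage subspace $S_{\mv{y}}$ cut out by $\semivrfy$, we have the operator inequality $\ket{\Sigma_{\mv{y}}}\!\bra{\Sigma_{\mv{y}}}\preceq\Pi_{S_{\mv{y}}}$, whence $\semivrfy(\pk,\mv{y},\rho_{\mv{x}_1})$ accepts with probability at least $q_{\mv{x}_1}-\negl(\secp)$. Moreover, for any $\mv{y}$ on which full-verification can accept, $\ket{\Sigma_{\mv{y}}}$ is a discrete-Gaussian superposition over short preimages of $\mv{y}$ whose computational-basis measurement distribution has negligible maximum probability (for $\SIS$-hard parameters the relevant coset of $\{\mv{x}:\mm{A}\mv{x}=\mv{0}\}$ contains super-polynomially many short vectors). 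Writing $\ket{\Sigma_{\mv{y}}}=a\ket{\mv{x}_1}+b\ket{\Sigma^\perp}$ with $\abs{a}^2=\negl(\secp)$ and $\ket{\Sigma^\perp}$ in $S_{\mv{y}}$ and orthogonal to $\ket{\mv{x}_1}$, a short Cauchy--Schwarz estimate shows that the accepting mass $\semivrfy(\pk,\mv{y},\rho_{\mv{x}_1})$ leaves exactly on $\ket{\mv{x}_1}$ cannot exceed its total accepting mass minus the $\ket{\Sigma_{\mv{y}}}$-overlap, up to $\abs{a}^2$; hence $\Pr[\mv{x}_0\neq\mv{x}_1\ \wedge\ \semivrfy(\pk,\mv{y},\rho_{\mv{x}_1})=\top]\geq q_{\mv{x}_1}-\negl(\secp)$. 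Averaging over $\mv{x}_1$, $\qB$ outputs a valid $\SIS_{n,m,q,\beta}$ solution with probability $\geq\delta-\negl(\secp)$, which contradicts the quantum hardness of $\SIS_{n,m,q,\beta}$.

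The main obstacle is conceptual, not computational. The winning event is defined through the \emph{private} full-verification, whose key property --- \cref{lem:fqm}(property 2), that a full-verified state cannot have collapsed to a single classical value --- is precisely what the reduction needs; yet the reduction holds only the public key, so it cannot run full-verification itself (and must not, since a trapdoor trivially solves $\SIS$). Bridging this gap --- showing that running the publicly computable $\semivrfy$ on the first register and measuring is ``at least as good'' as the thought-experiment that full-verifies it first --- is what forces the operator inequality $\ket{\Sigma_{\mv{y}}}\!\bra{\Sigma_{\mv{y}}}\preceq\Pi_{S_{\mv{y}}}$ together with the (slightly stronger than literally stated) fact that the honest bolt distribution has negligible maximum probability; carefully tracking the non-commuting measurements and their post-measurement states is the delicate step, while the lattice bookkeeping (the triangle inequality and the near-uniformity of $\trapgen$'s output) is routine.
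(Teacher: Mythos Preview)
Your proposal follows essentially the same reduction as the paper: hand $\pk=\mm{A}$ to $\qA$, run the public $\semivrfy$ on \emph{both} output registers (since $\qB$ lacks the trapdoor needed for $\fullvrfy$), measure to obtain short preimages $\mv{x}_0,\mv{x}_1$ of $\snum^\ast$, and output $\mv{x}_0-\mv{x}_1$. The paper dispatches the key step---that $\mv{x}_0\neq\mv{x}_1$ with non-negligible probability---in one line by citing \cref{lem:fqm}, whereas you supply a careful operator-inequality argument (using that $\fqmver$ is essentially a rank-one projection onto $\ket{\Sigma_{\mv{y}}}\in S_{\mv{y}}$ and that the bolt's computational-basis distribution has negligible maximum mass, both slightly stronger than \cref{lem:fqm} literally asserts but true of the Roberts--Zhandry construction); your analysis is sound and in fact makes explicit a point the paper glosses over.
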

\begin{proof}
We show that if two-tier unclonability of \cref{const:ttQL_SIS} is broken, then the SIS problem is also broken.
We construct a QPT adversary $\qB$ for SIS by using a QPT adversary $\qA$ against two-tier QL.
$\qB$ is given a matrix $\mm{A}$ and sends $\pk \seteq \mm{A}$ to $\qA$.
When $\qA$ outputs $(\snum^\ast, \qstate{L}_0,\qstate{L}_1)$, 
 $\qB$ measures $\qstate{L}_0$ and $\qstate{L}_1$. Let the results of the measurement $\mv{x}_0$ and $\mv{x}_1$, respectively. 
Then $\qB$ outputs $\mv{x}_0-\mv{x}_1$.

 Since $\qA$ breaks the security of two-tier QL, $\qstate{L}_0$ and $\qstate{L}_1$ pass $\semivrfy$ and $\fullvrfy$ respectively with non-negligible probability. 
 Thus, by  definitions of $\fullvrfy$ and $\semivrfy$  and \cref{lem:fqm},  we have $\mm{A}\mv{x}_b=\snum^\ast$ and $\|\mv{x}_b\|\leq \beta/2$ for both $b\in \bit$ with non-negligible probability. 
 Therefore, $\B$ succeeds in solving the SIS problem as long as $\mv{x}_0\neq\mv{x}_1$.
 Again, \cref{lem:fqm} ensures that we have $\mv{x}_0\neq\mv{x}_1$ with non-negligible probability.
This completes the proof.  
 
\end{proof}
\else
We can construct a two-tier quantum lightning scheme from the SIS assumption.
The construction is based on the franchised quantum money scheme by Roberts and Zhandry~\cite{RobZha21}.
We provide it in
\ifnum\cameraready=1
the full version~\cite{EPRINT:KitNisYam20}.
\else
\cref{sec:ttQL_SIS}.
\fi
\fi

\subsection{Two-Tier Quantum Lightning with Classical Verification}\label{sec:ttQL_Classical_Vrfy}
We extend two-tier QL to have an algorithm that converts a bolt into a classical certificate which certifies that the bolt was collapsed.
This bolt-to-certificate capability was introduced by Coladangelo and Sattath~\cite{CoRR:ColSat20} for the original QL notion.
We can consider a similar notion for two-tier QL.

\begin{definition}[Two-tier Quantum Lightning with Classical Verification (syntax)]\label{def:ttQL_classical_vrfy}
A two-tier quantum lightning scheme with classical semi-verification is a tuple of algorithms $(\setup,\boltgen,\boltcert,\allowbreak \semivrfy,\certvrfy)$.
\begin{itemize}
\item $\setup(1^\secp)$: The setup algorithm takes as input the security parameter $1^\secp$ and outputs a key pair $(\pk,\sk)$.
\item $\boltgen(\pk)$: The bolt generation algorithm takes as input $\pk$ and outputs a classical string $\snum$ (called a serial number) and a quantum state $\bolt$ (called a bolt for the serial number).
\item $\semivrfy(\pk,\snum,\bolt)$: The semi-verification algorithm takes as input $\pk$, $\snum$, and $\bolt$ and outputs $(\top,\bolt')$ or $\bot$.
\item $\boltcert(\bolt)$: The bolt certification algorithm takes as input $\bolt$ and outputs a classical string $\cert$ (called a certification for collapsing a bolt).
\item $\certvrfy(\sk,\snum,\cert)$: The certification-verification algorithm takes as input $\sk$ and $\cert$ and outputs $\top$ or $\bot$.
\end{itemize}
\end{definition}

\begin{definition}[Correctness for Two-Tier Quantum Lighting with Classical Verification]\label{def:cvttQL_correctness}
There are two verification processes. We say that a two-tier quantum lightning with classical verification is correct if it satisfies the following two properties.
\begin{description}
\item[Semi-verification correctness:] It holds that
\begin{align*}
\Pr\left[
\begin{array}{rl}
(\top,\bolt') \gets \semivrfy(\pk,\snum,\bolt)
\end{array}
\ \middle |
\begin{array}{rl}
 &(\pk,\sk) \gets \setup(1^\secp) \\
 &(\snum,\bolt) \gets \boltgen(\pk)
\end{array}
\right] > 1-\negl(\secp).
\end{align*}

\item[Certification-verification correctness:] It holds that
\begin{align*}
\Pr\left[
\top\gets \certvrfy(\sk,\snum,\cert)\ \middle |
\begin{array}{rl}
 &(\pk,\sk) \gets \setup(1^\secp) \\
 &(\snum,\bolt) \gets \boltgen(\pk)\\
 &\cert \gets \boltcert(\bolt)
\end{array}
\right] > 1-\negl(\secp).
\end{align*}
\end{description}
\end{definition}

\begin{definition}[Reusability for Two-Tier Quantum Lighting with Classical Verification]\label{def:cvttQL_reusability}
A two-tier quantum lightning scheme with classical verification $(\setup,\boltgen,\allowbreak\semivrfy,\boltcert,\certvrfy)$ is reusable if for all $(\pk,\sk)\gets \setup(1^\secp)$, $(\snum,\bolt)\gets \boltgen(\pk)$, and $(\bolt',\top)\gets \semivrfy(\pk,\snum,\bolt)$, it holds that
\[\trdist{\bolt - \bolt'}\le \negl(\secp).\]
\end{definition}
\begin{remark}
Similarly to \cref{rem:reusability}, any two-tier QL scheme with classical verification that satisfies semi-verification correctness can be transformed into one that satisfies reusability. 
Therefore, we focus on correctness.
\end{remark}

\begin{definition}[Two-Tier Unclonability with Classical Verification]\label{def:classical_vrfy_tt_unclonability}
We define the two-tier unclonability 
game between a challenger and an adversary $\A$ in the classical verification setting as follows.

\begin{enumerate}
\item The challenger generates $(\pk,\sk)\gets \setup(1^\secp)$ and $(\snum,\bolt)\gets \boltgen(\pk)$ and sends $\pk$ to $\A$.
\item $\A$ outputs a classical string $\snum$, a quantum state $\qstate{L}$, and a classical string $\keys{CL}$ and sends them to the challenger.
\item The challenger runs $\certvrfy(\sk,\snum,\keys{CL})$ and $\semivrfy(\pk,\snum,\qstate{L})$. If both the outputs are $\top$, then this experiments outputs $1$. Otherwise, it outputs $0$.
\end{enumerate}
This game is denoted by $\expb{\A,\Sigma}{tt}{unclone}{cv}(1^\secp)$.

We say that $\Sigma=(\setup,\boltgen,\semivrfy,\boltcert,\certvrfy)$ is two-tier unclonable if the following holds.
For any QPT adversary $\A$, it holds that
\[
\Pr[\expb{\cA,\Sigma}{tt}{unclone}{cv}(1^\sep)=1] \leq \negl(\secp).
\]
\end{definition}

\begin{definition}[Secure Two-Tier Quantum Lightning with Classical Verification]\label{def:secure_ttQLcv}
A two-tier quantum lightning with classical verification is secure if it satisfies~\cref{def:ttQL_classical_vrfy,def:cvttQL_correctness,def:cvttQL_reusability,def:classical_vrfy_tt_unclonability}.
\end{definition}

Note that a two-tier quantum lightning scheme with classical verification can be easily transformed into an ordinary two-tier quantum lightning scheme.
This is done by setting the latter's full-verification algorithm as the combination of the bolt certification algorithm and the certification-verification algorithm of the former.
Namely, we have the following theorem.

\begin{theorem}\label{thm:ttQLcv_implies_ttQL}
If there exists two-tier quantum lightning with classical verification, then there also exists ordinary two-tier quantum lightning.
\end{theorem}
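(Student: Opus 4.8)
The plan is to carry out the natural syntactic transformation sketched in the paragraph preceding the statement. Starting from a secure two-tier quantum lightning scheme with classical verification $\Sigma' = (\setup', \boltgen', \semivrfy', \boltcert', \certvrfy')$, I would define an ordinary two-tier quantum lightning scheme $\Sigma = (\setup, \boltgen, \semivrfy, \fullvrfy)$ by setting $\setup \seteq \setup'$, $\boltgen \seteq \boltgen'$, and $\semivrfy \seteq \semivrfy'$, and by defining the full-verification algorithm $\fullvrfy(\sk, \snum, \bolt)$ to first run $\cert \gets \boltcert'(\bolt)$ and then output whatever $\certvrfy'(\sk, \snum, \cert)$ outputs. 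Since $\boltcert'$ takes only the bolt as input (never the serial number), this composition is well-defined and efficient, and $\Sigma$ trivially has the syntax required by \cref{def:tt_QL}.

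Next I would check correctness (\cref{def:ttQL_correctness}). Semi-verification correctness of $\Sigma$ is literally semi-verification correctness of $\Sigma'$, as those three algorithms are inherited unchanged. For full-verification correctness, note that on $(\pk,\sk)\gets\setup(1^\secp)$ and $(\snum,\bolt)\gets\boltgen(\pk)$, running $\fullvrfy(\sk,\snum,\bolt)$ means computing $\cert\gets\boltcert'(\bolt)$ and then $\certvrfy'(\sk,\snum,\cert)$, which returns $\top$ with overwhelming probability by certification-verification correctness of $\Sigma'$ (\cref{def:cvttQL_correctness}). Reusability (\cref{def:ttQL_reusability}) refers only to $\semivrfy$, which equals $\semivrfy'$, so it is inherited from the reusability of $\Sigma'$ (\cref{def:cvttQL_reusability}); alternatively one may simply invoke \cref{rem:reusability}.

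The only step with any content is two-tier unclonability (\cref{def:tt_unclonability}). Given a QPT adversary $\qA$ against the two-tier unclonability of $\Sigma$, I would build a QPT adversary $\qB$ against the classical-verification two-tier unclonability of $\Sigma'$ (\cref{def:classical_vrfy_tt_unclonability}): $\qB$ receives $\pk$, forwards it to $\qA$, obtains from $\qA$ a possibly entangled bipartite state on registers $\qstate{L_0},\qstate{L_1}$ together with a serial number $\snum^\ast$, then runs $\keys{CL}\gets\boltcert'(\qstate{L_0})$ on the first register, keeps the second register $\qstate{L_1}$, and outputs $(\snum^\ast,\qstate{L_1},\keys{CL})$. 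The challenger of the classical-verification game then evaluates $\certvrfy'(\sk,\snum^\ast,\keys{CL})$ and $\semivrfy'(\pk,\snum^\ast,\qstate{L_1})$; since $\fullvrfy$ was defined as exactly the composition $\certvrfy'\circ\boltcert'$ and $\semivrfy=\semivrfy'$, the joint distribution of these two verdicts coincides with that of $\fullvrfy(\sk,\snum^\ast,\qstate{L_0})$ and $\semivrfy(\pk,\snum^\ast,\qstate{L_1})$ in the original game. Hence $\qB$ wins whenever $\qA$ wins, so $\Pr[\qA \text{ wins}]$ is negligible.

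I do not anticipate a genuine obstacle: the one point requiring a little care is verifying that applying $\boltcert'$ to the first half of the (possibly entangled) pair inside the reduction faithfully reproduces what $\fullvrfy$ does in the real experiment, so that the correlation between the first-register verdict and the leftover second-register state is preserved — this is immediate once $\fullvrfy$ is written as ``run $\boltcert'$, then run $\certvrfy'$'' and one observes that $\boltcert'$ acts only on the bolt register. Combining the three parts shows $\Sigma$ satisfies \cref{def:secure_ttQL}, which completes the proof.
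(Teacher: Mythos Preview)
Your proposal is correct and is exactly the approach the paper takes: the paper does not give a formal proof but only remarks, immediately before the theorem, that one obtains ordinary two-tier quantum lightning by setting $\fullvrfy$ to be the composition of $\boltcert$ and $\certvrfy$, which is precisely your construction and argument.
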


\subsection{Two-Tier Quantum Lightning with Classical Verification from LWE}\label{sec:cvttQL_LWE}



In this section, we show how to construct a two-tier QL scheme with classical verification from the LWE assumption. 
First, we define an amplified version of the adaptive hardcore bit property of an NTCF family. 

\begin{definition}[Amplified Adaptive Hardcore Property]\label{def:amplified_adaptive_hardcore}
 We say that a NTCF family $\cF$ (defined in \cref{def:NTCF}) satisfies the amplified adaptive hardcore property if  for any QPT $\A$ and $n=\omega(\log \secp)$, it holds that
 \begin{align*}
\Pr\left[
 \begin{array}{ll}
 \forall i\in[n]~
 x_i=x_{i,b_i},\\
d_i\in  G_{\fk,0,x_{i,0}} \cap G_{\fk,1,x_{i,1}},\\
m_i=d_i\cdot (J(x_{i,0})\oplus J(x_{i,1}))
 \end{array}
 \middle | 
 \begin{array}{ll}
 (\fk_i,\td_i)\gets \NTCF.\gen_\cF(1^\secp) \text{~for~} i\in[n] \\
 (\{(b_i,x_i,y_i,d_i,m_i)\}_{i\in[n]})\gets \A(\{\fk_i\}_{i\in[n]})\\
 x_{i,\beta} \gets \inv_\cF (\td_i,\beta,y_i)\text{~for~}(i,\beta)\in[n]\times \bit
 \end{array}
 \right]=\negl(\secp).
 \end{align*}
\end{definition}
As implicitly shown in \cite{AFT:RadSat19}, any NTCF family satisfies the amplified adaptive hardcore property.\footnote{\cite{AFT:RadSat19} proved essentially the same lemma through an abstraction which they call \emph{1-of-2 puzzle}.}
\begin{lemma}[Implicit in \cite{AFT:RadSat19}]\label{lem:amplified_adaptive_hardcore}
Any NTCF family  satisfies the amplified adaptive hardcore property. 
\end{lemma}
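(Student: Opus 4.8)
The plan is to recognize the single-instance adaptive hardcore bit property (item 4 of \cref{def:NTCF}) as a computationally hard \emph{weakly verifiable puzzle} — equivalently, a hard \emph{1-of-2 puzzle} in the sense of \cite{AFT:RadSat19} — and then to invoke the corresponding parallel-repetition (hardness-amplification) theorem, which turns the single-shot soundness error $1/2+\negl(\secp)$ into a soundness error that decays like $2^{-n}$ over $n$ independent instances and is therefore negligible once $n=\omega(\log\secp)$.

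The first step is to extract from the adaptive hardcore bit property the single-instance bound: for any QPT $\A$, if $(\fk,\td)\gets\NTCF.\gen_\cF(1^\secp)$, $(b,x,y,d,m)\gets\A(\fk)$, and $x_\beta := \inv_\cF(\td,\beta,y)$ for $\beta\in\zo{}$, then
\[
\Pr\!\left[\, x=x_b \;\wedge\; d\in G_{\fk,0,x_0}\cap G_{\fk,1,x_1} \;\wedge\; m=d\cdot(J(x_0)\oplus J(x_1)) \,\right] \le \tfrac{1}{2}+\negl(\secp).
\]
Indeed, identifying the output tuple with an element of $\zo{}\times\cX\times\zo{w}\times\zo{}$ via the trapdoor, the event above is exactly $\A(\fk)\in H_\fk$, while flipping the last bit turns it into $\A(\fk)\in\overline{H}_\fk$. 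These two events are disjoint, so their probabilities sum to at most $1$; the adaptive hardcore bit property says they differ by at most $\negl(\secp)$; hence each is at most $1/2+\negl(\secp)$. (Outputs $y$ that lie in the support of neither $f_{\fk,0}$ nor $f_{\fk,1}$ — so that $\inv_\cF$ is undefined and the tuple is not of the required form — simply fail the predicate and can be counted as losing.)

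The second step is to observe that this winning predicate is efficiently checkable \emph{given $\td$} (compute $x_0,x_1$ with $\inv_\cF$, test membership in $G_{\fk,0,x_0}\cap G_{\fk,1,x_1}$ using the PPT membership tester guaranteed by \cref{def:NTCF}, and verify the linear relation) but $\td$ is withheld from the adversary; hence the key generator together with this predicate is precisely a weakly verifiable puzzle whose value against QPT solvers is $1/2+\negl(\secp)$ by the first step. The game defining the amplified adaptive hardcore property (\cref{def:amplified_adaptive_hardcore}) is then exactly the $n$-fold parallel repetition of this puzzle: the keys $\fk_1,\dots,\fk_n$ are sampled independently and $\A$ must output a correct tuple in every coordinate. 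Applying the hardness-amplification theorem for 1-of-2 puzzles of Radian and Sattath \cite{AFT:RadSat19} (a quantum-adversary adaptation of Canetti--Halevi--Steiner \cite{TCC:CanHalSte05}), the winning probability of any QPT $\A$ in the $n$-fold game is negligible whenever $n=\omega(\log\secp)$, which is the claim.

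The step I expect to be the main obstacle — and the reason the cleanest write-up cites \cite{AFT:RadSat19} rather than reproving the amplification — is the quantum soundness of the parallel-repetition argument. The classical reduction plants the real challenge in a uniformly random coordinate, simulates the remaining $n-1$ coordinates using freshly generated trapdoors, and recursively conditions on the solver succeeding on the coordinates it can verify; one must check that this does not require rewinding the quantum solver in a state-destroying manner. The Canetti--Halevi--Steiner reduction is straight-line in the relevant sense (it runs the solver once and then merely checks the coordinates whose trapdoors it holds), so the argument goes through, which is exactly what Radian and Sattath verified. A secondary, routine point is the bookkeeping for malformed outputs $y_i$, handled as indicated above.
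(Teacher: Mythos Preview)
Your proposal is correct and takes essentially the same approach as the paper: identify the adaptive hardcore bit game as a weakly verifiable puzzle with QPT value at most $1/2+\negl(\secp)$, then invoke the quantum parallel-repetition theorem of \cite{AFT:RadSat19} (adapting \cite{TCC:CanHalSte05}) to conclude negligibility for $n=\omega(\log\secp)$. You supply slightly more detail than the paper's sketch, in particular the disjointness argument showing the single-instance bound is $1/2+\negl(\secp)$, which the paper asserts without spelling out.
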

\begin{proof}(sketch.)
This proof sketch is a summary of the proof in \cite{AFT:RadSat19}. 
Canetti et al. \cite{TCC:CanHalSte05} proved that a parallel repetition exponentially decreases hardness of \emph{weakly verifiable puzzle}, which is roughly a computational problem whose solution can be verified by a secret verification key generated along with the problem.
Though Canetti et al. only considered hardness against classical algorithms, Radian and Sattath \cite{AFT:RadSat19} observed that a similar result holds even for quantum algorithms. 
Then we consider a weakly verifiable puzzle described below: 
\begin{enumerate}
\item A puzzle generation algorithm runs  $(\fk,\td)\sample  \NTCF.\gen_\cF(1^\secp)$  and publishes  $\fk$ as a puzzle while keeping $\td$ as a secret verification key.
\item We say that $(b,x,y,d,m)$ is a valid solution to the puzzle $\fk$ if  
it holds that  $x=x_{b}$, 
 $d\in  G_{\fk,0,x_0} \cap G_{\fk,1,x_1}$, and
$m=d \cdot (J(x_{0})\oplus J(x_{1}))$ where  $x_{\beta} \gets \inv_\cF (\td,\beta,y)$ for $\beta\in \bit$. 
\end{enumerate}
We can see that the adaptive hardcore property implies that a QPT algorithm can find a valid solution of the above weakly verifiable puzzle  with probability at most $\frac{1}{2}+\negl(\secp)$. 
By applying the amplification theorem of \cite{TCC:CanHalSte05,AFT:RadSat19}  as explained above, $n=\omega(\log(\secp))$-parallel repetition version of the above protocol is hard for any QPT algorithm to solve with non-negligible probability.
This is just a rephrasing of amplified adaptive hardcore property. 
\end{proof}

\paragraph{Two-Tier Quantum Lightning from NTCF.}
We show how to construct a two-tier QL scheme with classical verification from an NTCF family.

\begin{construction}\label{const:cvttQL_aNTCF}
Let $n=\omega(\log \secp)$. 
Our two-tier QL with classical verification scheme is described as follows.
\begin{itemize}
\item $\setup(1^\secp)$: 
Generate $(\fk_i,\td_i)\gets \NTCF.\gen_\cF (1^\secp)$ for $i\in[n]$ and set $(\pk,\sk)\seteq (\{\fk_i\}_{i\in[n]},\{\td_i\}_{i\in[n]})$.
\item $\boltgen(\pk)$: Parse $\pk = \{\fk_i\}_{i\in[n]}$.
For each $i\in[n]$, generate a quantum state
\[
\ket{\psi'_{i}} = \frac{1}{\sqrt{\abs{\cX}}}\sum_{x\in \cX, y\in\cY, b\in\zo{}}\sqrt{(f'_{\fk_i,b}(x))(y)}\ket{b,x}\ket{y}
\]
by using $\samp_\cF$, measure the last register to obtain $y_i\in\cY$, and let $\ket{\phi'_i}$ be the post-measurement state where the measured register is discarded.
Output $(\snum,\bolt)\seteq (\{y_i\}_{i\in[n]},\left\{\ket{{\phi'}_{i}}\right\}_{i\in[n]})$.
\item $\semivrfy(\pk,\snum,\bolt)$: 
Parse $\pk = \{\fk_i\}_{i\in[n]}$, $\snum=\{y_i\}_{i\in[n]}$, $\bolt=\{\bolt_i\}_{i\in[n]}$.
For each $i\in[n]$,  check if the value $(b_i,x_i)$ in the register of $\bolt_i$ satisfies $y\in\support(f'_{\fk_i,b_i}(x_i))$ in superposition by writing the result to another register and measuring it. We note that this procedure can be done efficiently without using $\td_i$ since $y\in\support(f'_{\fk_i,b_i}(x_i))$ can be publicly checked by using $\CHK_\cF$  as defined in \cref{def:NTCF}.   
If the above verification passes for all $i\in [n]$, then output $\top$ and the post-measurement state (discarding measured registers).   
Otherwise, output $\bot$. 
\item $\boltcert(\bolt)$: 
Parse $\bolt=\{\bolt_i\}_{i\in[n]}$.
For each $i\in [n]$, do the following:
Evaluate the function $J$ on the second register of $\bolt_i$. That is, apply a unitary that maps $\ket{b,x}$ to $\ket{b,J(x)}$ to $\bolt_i$. (Note that this can be done efficiently since $J$ is injective and efficiently invertible.) Then, apply Hadamard transform and  measure both registers to obtain $(m_i,d_i)$. Output $\cert\seteq \{(d_i,m_i)\}_{i\in[n]}$.
\item $\certvrfy(\sk,\snum,\cert)$: Parse $\sk = \{\td_i\}_{i\in[n]}$, $\snum =\{y_i\}_{i\in[n]}$, and $\cert =\{(d_i,m_i)\}_{i\in[n]}$.
For each $i\in[n]$ and $\beta\in\bit$, compute $x_{i,\beta}\gets \inv_\cF(\td_i,\beta,y_i)$. 
Output $\top$ if and only if it holds that  
$d_i\in  G_{\fk,0,x_{i,0}} \cap G_{\fk,1,x_{i,1}}$ and 
$m_i=d_i\cdot (J(x_{i,0})\oplus J(x_{i,1}))$ for all $i\in[n]$.  
\end{itemize}
\end{construction}

\begin{theorem}\label{thm:cvttQL_aNTCF}
If there exists an NTCF family, there exists a two-tier QL with classical verification.
\end{theorem}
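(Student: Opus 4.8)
The plan is to check that \cref{const:cvttQL_aNTCF} meets every clause of \cref{def:secure_ttQLcv}: the syntax is immediate, and the subroutines $\NTCF.\gen_\cF,\inv_\cF,\CHK_\cF,\samp_\cF$ and the injection $J$ used in the construction are all efficient once an NTCF family exists, so the scheme is well defined. Reusability (\cref{def:cvttQL_reusability}) will cost nothing: by the remark following \cref{def:cvttQL_reusability}, any scheme satisfying semi-verification correctness can be upgraded to a reusable one via the Almost As Good As New Lemma, so it suffices to establish the two correctness properties of \cref{def:cvttQL_correctness} and two-tier unclonability (\cref{def:classical_vrfy_tt_unclonability}) for the construction itself.

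Semi-verification correctness is essentially by design: $\boltgen$ outputs, for each $i$, the state obtained from $\ket{\psi'_i}$ by measuring the last register and getting $y_i$, which is supported precisely on the pairs $(b,x)$ with $y_i\in\support(f'_{\fk_i,b}(x))$; since $\semivrfy$ merely checks this membership coherently using $\CHK_\cF$ (no trapdoor needed), it accepts with probability $1$ on an honest bolt. For certification-verification correctness, first I would argue that, up to negligible trace distance coming from the $\Heldist(f_{\fk_i,b},f'_{\fk_i,b})$ bound in \cref{def:NTCF}, the honest $i$-th bolt equals $\tfrac{1}{\sqrt2}(\ket{0,x_{i,0}}+\ket{1,x_{i,1}})$ with $x_{i,\beta}=\inv_\cF(\td_i,\beta,y_i)$; this uses the Trapdoor Injective Pair property and property~1 of Efficient Range Superposition, together with $\cR_{\fk_i}$ being a perfect matching. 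Then $\boltcert$ is exactly the Hadamard test of~\cite{FOCS:BCMVV18}: applying $J$ to the preimage register, then $H^{\otimes(w+1)}$, then measuring yields $(m_i,d_i)$ with $d_i$ statistically uniform over $\zo{w}$ and $m_i=d_i\cdot(J(x_{i,0})\oplus J(x_{i,1}))$. Because the Adaptive Hardcore Bit property guarantees $\Pr_{d\chosen\zo{w}}[d\notin G_{\fk_i,b,x}]\le\negl(\secp)$, a uniform $d_i$ lies in $G_{\fk_i,0,x_{i,0}}\cap G_{\fk_i,1,x_{i,1}}$ with overwhelming probability, and a union bound over the $n=\omega(\log\secp)$ coordinates shows $\certvrfy$ accepts with overwhelming probability.

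The heart of the proof is two-tier unclonability, which I would reduce directly to the amplified adaptive hardcore property (\cref{lem:amplified_adaptive_hardcore}). Given a QPT $\A$ winning $\expb{\A,\Sigma}{tt}{unclone}{cv}$ with non-negligible probability, construct $\qB$ that receives $\{\fk_i\}_{i\in[n]}$ from the amplified-adaptive-hardcore challenger, sets $\pk\seteq\{\fk_i\}_{i\in[n]}$, runs $\A(\pk)$ to obtain $(\snum^\ast,\qstate{L},\keys{CL})$, parses $\snum^\ast=\{y_i\}_{i\in[n]}$ and $\keys{CL}=\{(d_i,m_i)\}_{i\in[n]}$, runs $\semivrfy(\pk,\snum^\ast,\qstate{L})$ (which needs only $\pk$), aborts if it rejects, and otherwise measures each post-verification register in the computational basis to obtain $(b_i,x_i)$; finally $\qB$ outputs $\{(b_i,x_i,y_i,d_i,m_i)\}_{i\in[n]}$. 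Whenever $\A$ wins, $\semivrfy$ accepted, so each measured $(b_i,x_i)$ satisfies $y_i\in\support(f'_{\fk_i,b_i}(x_i))$, which forces $x_i=x_{i,b_i}$ by property~1 of Efficient Range Superposition (using that $\cR_{\fk_i}$ is a perfect matching); and $\certvrfy(\sk,\snum^\ast,\keys{CL})=\top$ supplies $d_i\in G_{\fk_i,0,x_{i,0}}\cap G_{\fk_i,1,x_{i,1}}$ and $m_i=d_i\cdot(J(x_{i,0})\oplus J(x_{i,1}))$ for all $i$. Hence $\qB$'s output satisfies exactly the event bounded in \cref{def:amplified_adaptive_hardcore} with non-negligible probability, contradicting \cref{lem:amplified_adaptive_hardcore}; combined with the two correctness properties and the reusability remark, this proves \cref{const:cvttQL_aNTCF} is a secure two-tier QL with classical verification.

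I expect the main obstacle to be the bookkeeping in certification-verification correctness: propagating the negligible trace-distance error between the honest bolt and the idealized state $\tfrac{1}{\sqrt2}(\ket{0,x_{i,0}}+\ket{1,x_{i,1}})$ through the $J$/Hadamard test and combining it with the $G$-membership loss and the union bound over $i$, together with the small but easy-to-misstate point that a computational-basis measurement of a semi-verified bolt returns a genuine $\inv_\cF$-preimage of $y_i$ rather than arbitrary data. The unclonability reduction, by contrast, is essentially forced once \cref{lem:amplified_adaptive_hardcore} is in hand, since \cref{const:cvttQL_aNTCF} was engineered so that a valid certificate together with a surviving semi-verifiable bolt jointly constitute exactly an amplified-adaptive-hardcore witness.
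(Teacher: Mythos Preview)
Your proposal is correct and follows essentially the same approach as the paper's proof: semi-verification correctness by design, certification-verification correctness via trace-distance replacement of $\ket{\psi'_i}$ by the ideal superposition followed by the Hadamard computation and the $G$-membership bound, and two-tier unclonability by a direct reduction to \cref{lem:amplified_adaptive_hardcore} in which the reduction measures the semi-verified bolt in the computational basis and reads off the certificate. The only cosmetic difference is that you have $\qB$ explicitly run $\semivrfy$ before measuring, whereas the paper measures $\qstate{L}_i$ directly and then argues conditionally on $\semivrfy$ accepting; your ordering is arguably cleaner but the substance is identical.
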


\begin{proof}[Proof of~\cref{thm:cvttQL_aNTCF}]
We prove correctness and two-tier unclonability below:
\paragraph{Correctness of certification-verification.}
We need to prove that if 
$\cert$ is generated by $\boltcert(\bolt)$ for an honestly generated $\bolt$ corresponding a serial number $\snum$, $\certvrfy(\sk,\snum,\cert)$ returns $\top$ with overwhelming probability. 

For each $i\in[n]$, if we define a quantum state 
\[
\ket{\psi_{i}} = \frac{1}{\sqrt{\abs{\cX}}}\sum_{x\in \cX, y\in\cY, b\in\zo{}}\sqrt{(f_{\fk_i,b}(x))(y)}\ket{b,x}\ket{y},
\]
then we have 
\[
\trdist{ \ket{\psi_i}\bra{\psi_i} - \ket{\psi'_i}\bra{\psi'_i}} \leq \negl(\secp),
\]
\ifnum\cameraready=1
as observed in \cref{def:NTCF}.
\else
as observed in \cref{def:NTCF} (where we used~\cref{lem:Hellinger_trace_distance}).
\fi
Therefore, even if we replace $\ket{\psi'_i}$ with $\ket{\psi_i}$ for each $i\in[n]$ in the execution of  $\boltgen(\pk)$ to generate $\bolt$, the probability that $\certvrfy(\sk,\snum,\cert)$ returns $\top$ only negligibly changes.
Therefore, it suffices to prove that $\certvrfy(\sk,\snum,\cert)$ returns $\top$ with overwhelming probability in a modified experiment where $\ket{\psi'_i}$ is replaced with $\ket{\psi_i}$ for each $i\in[n]$.\footnote{Of course, such a replacement cannot be done efficiently. We consider such an experiment only as a proof tool.}
In this experiment, if we let $\bolt_i$ be  the $i$-th component of $\bolt$, then we have 
\[
\bolt_i=\frac{1}{\sqrt{2}}(\ket{0,x_{i,0}}+\ket{1,x_{i,1}})
\]
for each $i\in[n]$ 
where  $x_{i,\beta} \gets \inv_\cF (\td_i,\beta,y_i)$ for $\beta \in \bit$ 
by the injective property of $\cF$.   
If we apply $J$ to the second register of $\bolt_i$ and then apply Hadamard transform for both registers as in $\boltcert$, then the resulting state can be written as 
\begin{align*}
&2^{- \frac{w+2}{2}}\sum_{d,b,m} (-1)^{d\cdot J(x_{i,b})\xor m b }\ket{m}\ket{d}\\
& = 2^{-\frac{w}{2}}\sum_{d \in \zo{w}}(-1)^{d\cdot J(x_{i,0})}\ket{d\cdot (J(x_{i,0})\xor J(x_{i,1}))}\ket{d}.
\end{align*}
Therefore, the measurement result is $(m_i,d_i)$ such that 
$m_i=d_i\cdot (J(x_{i,0})\xor J(x_{i,1}))$ 
for a uniform $d_i\sample \bit^w$. 
By the adaptive hardcore bit property (the first item) in~\cref{def:NTCF}, it holds that $d_i \in G_{\fk_i,0,x_{i,0}}\cap G_{\fk_i,1,x_{i,1}}$ except negligible probability.
Therefore, the certificate $\cert=\{(d_i,m_i)\}_{i\in[n]}$ passes the verification by  $\certvrfy$ with overwhelming probability.

\paragraph{Correctness of semi-verification.}
Let $\bolt=\{\phi'_{i}\}_{i\in[n]}$ be an honestly generated bolt. 
By the definition of $\boltgen$, $\ket{\phi_i}$ is a superposition of $(b,x)$ such that $y\in \support(f'_{\fk_i,b}(x))$. 
This clearly passes the verification by $\semivrfy$.  
\paragraph{Two-tier unclonability.}
As shown in \cref{lem:amplified_adaptive_hardcore}, any NTCF family satisfies the amplified adaptive hardcore property.
We show that if there exists a QPT adversary $\qA$ that breaks the two-tier unclonability with classical verification of~\cref{const:cvttQL_aNTCF} with probability $\epsilon$, we can construct a QPT adversary $\qB$ that breaks the amplified adaptive hardcore property the NTCF with probability $\epsilon$.

$\qB$ is given $\{\fk_i\}_{i\in[n]}$ and sends $\pk \seteq \{\fk_i\}_{i\in[n]}$ to $\qA$ this implicitly sets $\sk \seteq \{\td_i\}_{i\in[n]}$). 
When $\qA$ outputs $(\snum,\qstate{L},\cert)$, $\qB$ parses 
$\snum=\{y_i\}_{i\in[n]}$, 
$\qstate{L}=\{\qstate{L}_i\}_{i\in[n]}$, and $\cert=\{(d_i,m_i)\}_{i\in[n]}$, 
measures $\qstate{L}_i$ to obtain $(b_i,x_i)$ for each $i\in[n]$, and outputs $\{(b_i,x_i,y_i,d_i,m_i)\}_{i\in[n]}$.

By assumption on $\qA$, it holds that $\semivrfy(\pk,\snum,\qstate{L}) = \top$ and $\certvrfy(\sk,\allowbreak\snum,\cert)=\top$ with probability $\epsilon$.
If  $\semivrfy(\pk,\snum,\qstate{L}) = \top$ holds, we have $y_i\in\support(f'_{\fk_i,b_i}(x_{i}))$ for each $i\in[n]$ by the construction of $\semivrfy$.
We note that  $y_i\in\support(f'_{\fk_i,b_i}(x_{i}))$  implies $x_{i}=x_{i,b_i}$  by the efficient range superposition property of \cref{def:NTCF} where $x_{i,\beta}\gets \inv_\cF (\td_i,\beta,y_i)$ for $\beta\in\bit$.
If $\certvrfy(\sk,\snum,\cert)=\top$ we have $d_i\in  G_{\fk,0,x_{i,0}} \cap G_{\fk,1,x_{i,1}}$ and 
$m_i=d_i\cdot (J(x_{i,0})\oplus J(x_{i,1}))$ for all $i\in[n]$.
Clearly, $\qB$ wins the amplified adaptive hardcore game when both of them happen, which happens with probability $\epsilon$ by the assumption. 
This completes the proof.
\end{proof}

By combining~\cref{thm:NTCF_LWE,thm:cvttQL_aNTCF}, the following corollary immediately follows.
\begin{corollary}\label{thm:ttQLcv_LWE}
If we assume the quantum hardness of the LWE problem, there exists a secure two-tier QL with classical verification.
\end{corollary}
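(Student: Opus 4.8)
This corollary is a pure composition of two results already in hand, so the plan is simply to chain them. First I would invoke \cref{thm:NTCF_LWE}: under the quantum hardness of the LWE problem there exists an NTCF family $\cF$ in the sense of \cref{def:NTCF} (with all four properties, including the adaptive hardcore bit). Then I would feed this $\cF$ into \cref{const:cvttQL_aNTCF} and appeal to \cref{thm:cvttQL_aNTCF}, which states that from any NTCF family one obtains a two-tier QL with classical verification. The word ``secure'' in the statement refers to \cref{def:secure_ttQLcv}, i.e.\ the conjunction of the syntax (\cref{def:ttQL_classical_vrfy}), semi-verification and certification-verification correctness (\cref{def:cvttQL_correctness}), reusability (\cref{def:cvttQL_reusability}), and two-tier unclonability with classical verification (\cref{def:classical_vrfy_tt_unclonability}). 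Correctness and two-tier unclonability are exactly what \cref{thm:cvttQL_aNTCF} establishes, and reusability follows from the semi-verification correctness via the Almost As Good As New Lemma as recorded in the remark following \cref{def:cvttQL_reusability}; hence all four conditions hold for \cref{const:cvttQL_aNTCF} instantiated with the LWE-based $\cF$.

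\textbf{Main obstacle.} There is essentially none at this level: all the genuine work — in particular, the reduction of two-tier unclonability to the amplified adaptive hardcore property (\cref{lem:amplified_adaptive_hardcore}, which in turn relies on the parallel-repetition amplification of \cite{TCC:CanHalSte05,AFT:RadSat19}), and the correctness-of-certification argument passing through the Hellinger-to-trace-distance bound of \cref{lem:Hellinger_trace_distance} — is already carried out in the proof of \cref{thm:cvttQL_aNTCF}. The only thing to be careful about is that the hardness assumption needed by \cref{thm:cvttQL_aNTCF} (via \cref{lem:amplified_adaptive_hardcore}) is only the ordinary polynomial-time quantum hardness of LWE, which is precisely what the hypothesis of the corollary supplies, so no sub-exponential strengthening is required here. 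I would therefore present the proof as a two-line application: ``By \cref{thm:NTCF_LWE}, LWE yields an NTCF family; by \cref{thm:cvttQL_aNTCF}, this yields a secure two-tier QL with classical verification.''

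\begin{proof}[Proof of \cref{thm:ttQLcv_LWE}]
Immediate from \cref{thm:NTCF_LWE} and \cref{thm:cvttQL_aNTCF}: the former gives an NTCF family under the quantum hardness of the LWE problem, and the latter converts it into a secure two-tier quantum lightning scheme with classical verification.
\end{proof}
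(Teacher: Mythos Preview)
Your proposal is correct and matches the paper's approach exactly: the paper itself simply states that the corollary ``immediately follows'' by combining \cref{thm:NTCF_LWE} and \cref{thm:cvttQL_aNTCF}, which is precisely the two-line chaining you give. Your added remark that reusability is obtained from semi-verification correctness via the Almost As Good As New Lemma (the remark after \cref{def:cvttQL_reusability}) is a nice clarification that the paper leaves implicit.
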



\section{Relaxed Watermarking}\label{sec:relaxed_watermarking}

In this section, we introduce the notion of relaxed watermarking and concrete constructions of relaxed watermarking.
\ifnum\cameraready=0
\else
Due to the space limitation, we only provide the construction for PRF.
For the construction for compute-and-compare circuits, see \cite{EPRINT:KitNisYam20}.
\fi

\subsection{Definition of Relaxed Watermarking}\label{sec:definition_relaxed_watermarking}

We introduce the definition of relaxed watermarking.
The following definition captures publicly markable and extractable watermarking schemes.
After the definition, we state the difference between relaxed watermarking and classical cryptographic watermarking~\cite{SIAMCOMP:CHNVW18}.

\begin{definition}[Relaxed Watermarking Syntax]\label{def:relaxed_watermarking_syntax}
Let $\cC=\setbracket{\cC_{\secp}}_{\secp}$ be a circuit class such that $\cC_{\secp}$ contains circuits of input length is $n$ and output length $m$.
A relaxed watermarking scheme for the circuit class
$\cC$ and a message space $\M = \{\M_\secp\}_{\secp}$ consists of four PPT algorithms $(\Gen,\Mark,\Extract,\Eval)$.

\begin{description}
 \item[Key Generation:] $\Gen(1^\secp)$ takes as input the security parameter and outputs a public parameter $\pp$. 
\item[Mark:] $\Mark (\pp,C, \msg)$ takes as input a public parameter, an arbitrary circuit $C\in\cC_{\secp}$ and a message $\msg \in \M_\secp$ and outputs a marked circuit $\tlC$.
\item[Extract:] $\msg' \gets \Extract (\pp,C^{\prime})$ takes as input a public parameter and an arbitrary circuit $C^\prime$, and outputs a message $\msg'$, where $\msg' \in \M_\secp \cup \{\unmarked\}$.
\item[Honest Evaluation:] $\Eval(\pp,C^\prime,x)$ takes as input a public parameter, an arbitrary circuit $C^\prime$, and an input $x$, and outputs $y$.
\end{description}
\end{definition}

We define the required correctness and security properties of a watermarking scheme.
\begin{definition}[Relaxed Watermarking Property]\label{def:relaxed_watermarking_property}
A watermarking scheme  $(\Gen,\allowbreak\Mark,\Extract,\Eval)$ for circuit family $\setbracket{\cC_{\secp}}_{\secp}$ and with message space $\M = \{\M_\secp\}_{\lambda}$ is required to satisfy the following properties.
\begin{description}
 \item[Statistical Correctness:] For any circuit $C \in \cC_{\secp}$, any message $\msg \in \M_{\secp}$, it holds that
\[
 \Pr\left[\forall x\ \Eval(\pp,\tlC,x)=C(x)~\left|~ 
 \begin{array}{c}
 \pp \gets \Gen(1^\secp)\\
  \tlC \gets \Mark    (\pp,C,\msg)
  \end{array}\right.\right] \geq 1 - \negl(\secp).
\]
\item[Extraction Correctness:]  For every $C \in \cC_{\secp}$, $\msg \in \M_\secp$ and $\pp \gets \Gen(1^\secp)$:
\[
 \Pr[\msg' \ne \msg ~\left|~ \msg' \gets \Extract (\pp,\Mark (\pp,C,\msg)) \right.] \leq \negl(\secp) .
\]
\item[Relaxed $(\epsilon,\cD_{\cC})$-Unremovability:] For every QPT $\qA$, we have
\[
 \Pr[\expa{\qA,\cD_{\cC}}{r}{urmv}(\secp,\epsilon) \out 1] \leq  \negl(\secp)
\]
where $\epsilon$ is a parameter of the scheme called the \emph{approximation factor}, $\cD_{\cC}$ is a distribution over $\cC_{\secp}$, and $\expa{\qA,\cD_{\cC}}{r}{urmv}(\secp,\epsilon)$
	   is the game defined next.
\end{description}
We say a watermarking scheme is relaxed $(\epsilon,\cD_{\cC})$-secure if it satisfies these properties.
\end{definition}
\begin{definition}[Relaxed $(\epsilon,\cD_{\cC})$-Unremovability Game]\label{def:relaxed_unremovability}
The game $\expa{\qA,\cD_{\cC}}{r}{urmv}(\secp,\epsilon)$  is
 defined as follows.
\begin{enumerate}
\item The challenger generates $\pp \gets \Gen(1^\secp)$ and gives $\pp$ to the adversary $\qA$.
\item At some point, $\qA$ sends a message $\msg \in \M_\secp$ to the challenger.
The challenger samples a circuit $C \chosen \cD_{\cC}$ and responds with $\tlC  \gets \Mark (\pp,C,\msg)$.
\item Finally, the adversary outputs a circuit $C^{\ast}$. If it holds that 
\[
\Pr_{x\la\zo{n}}[\Eval(\pp,C^{\ast},x) = C(x)]\ge\epsilon
\] and $\Extract (\pp, C^{\ast}) \ne \msg$, then the challenger outputs $1$, otherwise $0$.
\end{enumerate}
\end{definition}

Differently from the definition by Cohen et al.~\cite{SIAMCOMP:CHNVW18}, the above definition requires a watermarking scheme has an honest evaluation algorithm for running programs. In the unremovability game above, adversaries must output a circuit whose behavior is close to the original circuit when it is executed using the honest evaluation algorithm.

Relaxed watermarking is clearly weaker than classical watermarking.
However, in this work, watermarking is just an intermediate primitive, and relaxed watermarking is sufficient for our goal of constructing SSL schemes.
Moreover, this relaxation allows us to achieve a public extractable watermarking scheme for a PRF family under the LWE assumption, as we will see in \cref{sec:const_relaxed_watermarking_prf}.
For classical watermarking, we currently need IO to achieve such a scheme~\cite{SIAMCOMP:CHNVW18}.

\subsection{Relaxed Watermarking for PRF}\label{sec:const_relaxed_watermarking_prf}
We construct a relaxed watermarking scheme for PRFs from puncturable PRFs and true-simulation extractable NIZK.

\begin{construction}[Relaxed Watermarking for PRF]\label{const:relaxed_watermarking_prf}
Let $\PuncPRF=(\PRF.\Eval,\Puncture,\allowbreak\PRF.\pEval)$ be a puncturable PRF whose key space, domain, and range are $\prfkeyspace$, $\Dprf$, and $\Rprf$, respectively.
Also, let $\NIZK=(\NIZK.\setup,\NIZK.\prove\allowbreak,\NIZK.\vrfy)$ be a NIZK system for $\NP$.
Using these building blocks, we construct a relaxed watermarking scheme for the PRF family $\{\prf_\prfkey(\cdot)=\PRF.\Eval(\prfkey,\cdot) \mid \prfkey\in\prfkeyspace\}$ as follows.
Its message space is $\bit^\msglen$ for some polynomial $\msglen$ of $\secp$.
In the construction, $\mv{0}$ is some fixed point in $\Dprf$.
\begin{description}
\item[$\gen(1^\secp)$:] Compute $\crs \gets \NIZK.\setup(1^\secp)$ and Output $\pp \seteq \crs$. 
\item[$\Mark(\pp,\prf_\prfkey,\msg)$:] Compute $y_{\mv{0}} \gets \PRF.\Eval(\prfkey,\mv{0})$ and $\prfkey_{\setbracket{\mv{0}}} \gets \Puncture(\prfkey,\setbracket{\mv{0}})$.
Let an NP relation $\cR_L$ be as follows.
\[
\cR_L \seteq \left\{\left((\msg,y_{\mv{0}},\prfkey_{\setbracket{\mv{0}}}),\prfkey \right) \mid
y_{\mv{0}} = \PRF.\Eval(\prfkey,\mv{0}),
\prfkey_{\setbracket{\mv{0}}} = \Puncture(\prfkey,\setbracket{\mv{0}}),\textrm{~and~}\prfkey\in\prfkeyspace
\right\}.
\]
Compute $\pi \gets \NIZK.\Prove(\crs,(\msg,y_{\mv{0}},\prfkey_{\setbracket{\mv{0}}}),\prfkey)$.
Output $\tlC \seteq (\msg,y_{\mv{0}},\prfkey_{\setbracket{\mv{0}}},\pi)$.
\item[$\Extract(\pp,C^\prime)$:] Parse $C^\prime = (\msg^\prime,y^\prime,K^\prime,\pi^\prime)$ and output $\msg^\prime$.
\item[$\Eval(\pp,C^\prime,x)$:] Parse $C^\prime = (\msg^\prime,y^\prime,K^\prime,\pi^\prime)$ and run $\NIZK.\vrfy(\crs,(\msg^\prime,y',K^\prime),\pi)$. If the output is $\bot$, output $\bot$. Otherwise, output $\PRF.\pEval(K^\prime,x)$ for $x \ne \mv{0}$ and $y^\prime$ for $x=\mv{0}$.
\end{description}
\end{construction}

\begin{theorem}\label{thm:relaxed_watermarking_prf}
Let $\epsilon$ be any inverse polynomial of $\secp$ and $\cU_\prfkeyspace$ the uniform distribution over $\prfkeyspace$.
If $\PuncPRF$ is a puncturable PRF with key-injectiveness and $\NIZK$ is a true-simulation extractable NIZK system for $\NP$, then \cref{const:relaxed_watermarking_prf} is a relaxed $(\epsilon,\cU_\prfkeyspace)$-secure watermarking scheme for the PRF family $\{\prf_\prfkey(\cdot)=\PRF.\Eval(\prfkey,\cdot) \mid \prfkey\in\prfkeyspace\}$.
\end{theorem}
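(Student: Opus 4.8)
The plan is to verify the three requirements of \cref{def:relaxed_watermarking_property} in turn. \textbf{Statistical correctness} and \textbf{extraction correctness} are immediate from \cref{const:relaxed_watermarking_prf}: a marked program is $\tlC=(\msg,y_{\mv{0}},\prfkey_{\setbracket{\mv{0}}},\pi)$ with $\pi$ honestly generated, so $\pi$ always verifies (completeness), and then $\Eval(\pp,\tlC,x)$ returns $\PRF.\pEval(\prfkey_{\setbracket{\mv{0}}},x)=\PRF.\Eval(\prfkey,x)$ for $x\neq\mv{0}$ (functionality preservation, \cref{def:pprf}) and $y_{\mv{0}}=\PRF.\Eval(\prfkey,\mv{0})$ for $x=\mv{0}$, matching $\prf_\prfkey$ on every input; and $\Extract$ simply outputs the first component $\msg$, hence is never wrong. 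The real work is \textbf{relaxed $(\epsilon,\cU_\prfkeyspace)$-unremovability}.

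So suppose for contradiction that a QPT $\qA$ wins $\expa{\qA,\cU_\prfkeyspace}{r}{urmv}(\secp,\epsilon)$ with non-negligible probability $\delta$, and write the circuit it outputs as $C^\ast=(\msg^\ast,y^\ast,K^\ast,\pi^\ast)$. On the winning event $\pi^\ast$ must verify (otherwise $\Eval(\pp,C^\ast,\cdot)\equiv\bot$, which agrees with $\prf_\prfkey$ on no input, violating $\epsilon$-agreement) and $\msg^\ast\neq\msg$, so the statement $(\msg^\ast,y^\ast,K^\ast)$ differs from the statement $(\msg,y_{\mv{0}},\prfkey_{\setbracket{\mv{0}}})$ for which the lone proof was given. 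I then move to a hybrid in which $\crs$ and the proof $\pi$ inside $\tlC$ come from the NIZK simulator on the true statement $(\msg,y_{\mv{0}},\prfkey_{\setbracket{\mv{0}}})$; by the (true-simulation) zero-knowledge guarantee of $\NIZK$ — i.e.\ that honest and simulated $(\crs,\pi)$ are indistinguishable for true statements — this changes $\qA$'s winning probability by only $\negl(\secp)$, and then true-simulation extractability (\cref{def:tse_nizk}) lets me run $\Sim_2$ on $(\msg^\ast,y^\ast,K^\ast,\pi^\ast)$ to obtain a witness $\prfkey^\ast$ that, except with negligible error, is valid whenever $\qA$ wins, meaning $K^\ast=\Puncture(\prfkey^\ast,\setbracket{\mv{0}})$ and $y^\ast=\PRF.\Eval(\prfkey^\ast,\mv{0})$.

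The crucial step is to show $\prfkey^\ast=\prfkey$. Since $\qA$ wins, $\Eval(\pp,C^\ast,\cdot)$ agrees with $\prf_\prfkey$ on an $\epsilon$-fraction of inputs; combined with $K^\ast=\Puncture(\prfkey^\ast,\setbracket{\mv{0}})$ and functionality preservation, this forces $\PRF.\Eval(\prfkey^\ast,x)=\PRF.\Eval(\prfkey,x)$ for at least one $x\neq\mv{0}$, whence key-injectiveness (\cref{def:key_injective}) gives $\prfkey^\ast=\prfkey$ except with negligible probability over $\prfkey$. Thus, with probability $\geq\delta-\negl(\secp)$, the whole procedure recovers the full key $\prfkey$ while only ever touching $\prfkey$ through $(\prfkey_{\setbracket{\mv{0}}},\PRF.\Eval(\prfkey,\mv{0}))$ and $\prfkey$-independent randomness. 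I convert this into a distinguisher $\cB$ for pseudorandomness at the punctured point $\setbracket{\mv{0}}$ (\cref{def:pprf}): given $(\prfkey_{\setbracket{\mv{0}}},z)$, $\cB$ runs the recovery procedure with $z$ played in the role of $\PRF.\Eval(\prfkey,\mv{0})$ (in particular inside the simulated statement), obtains $\prfkey^\ast$, and outputs ``real'' iff $\Puncture(\prfkey^\ast,\setbracket{\mv{0}})=\prfkey_{\setbracket{\mv{0}}}$ and $\PRF.\Eval(\prfkey^\ast,\mv{0})=z$. When $z=\PRF.\Eval(\prfkey,\mv{0})$, both hold with probability $\geq\delta-\negl(\secp)$ by the above; when $z$ is uniform, the first equality again forces $\prfkey^\ast=\prfkey$ (functionality preservation plus key-injectiveness), so the second would require $\PRF.\Eval(\prfkey,\mv{0})=z$, which has probability $\le 1/\card{\Rprf}=\negl(\secp)$ (using, WLOG, that the PRF range is super-polynomial) since $z$ is independent of $\prfkey$. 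Hence $\cB$ has advantage $\geq\delta-\negl(\secp)$, contradicting \cref{def:pprf}, so $\delta$ is negligible.

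The main obstacle is the chain of reasoning that pins the extracted witness $\prfkey^\ast$ to the real key $\prfkey$ and makes the reduction survive \emph{both} branches of the pseudorandomness game: I have to run the NIZK simulator on a possibly-false statement in the uniform branch (leaning on the simulator's behavior, not on soundness), and I must chain functionality preservation of puncturing with key-injectiveness both to identify $\prfkey^\ast$ with $\prfkey$ and to rule out $\cB$ spuriously answering ``real'' on uniform $z$. Everything else — correctness, extraction correctness, and the observation that a non-verifying $\pi^\ast$ cannot meet the $\epsilon$-agreement bound — is routine.
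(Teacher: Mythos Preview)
Your overall strategy matches the paper's: correctness is immediate from completeness and functionality preservation; for unremovability you pass to a simulated-$\crs$ hybrid, extract a witness $\prfkey^\ast$ via true-simulation extractability, pin $\prfkey^\ast$ to $\prfkey$ via key-injectiveness, and finish with the puncturable-PRF reduction. Your explicit distinguisher $\cB$ and its uniform-branch analysis (condition $1$ forces $\prfkey^\ast=\prfkey$, so condition $2$ hits a fresh uniform value) are correct and actually more detailed than the paper's one-line ``from the security of $\PuncPRF$.''

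There is one genuine gap in how you pass to the simulated hybrid. You write that zero-knowledge ``changes $\qA$'s winning probability by only $\negl(\secp)$,'' and then use this to conclude that in the simulated game the extracted $\prfkey^\ast$ agrees with $\prf_\prfkey$ on some input. But ``$\qA$ wins'' includes the condition $\Pr_{x}[\Eval(\pp,C^\ast,x)=\prf_\prfkey(x)]\ge\epsilon$, which is \emph{not} efficiently checkable, so computational indistinguishability of real and simulated $(\crs,\pi)$ does not preserve it; and the paper's combined true-simulation-extractability definition (\cref{def:tse_nizk}) only lets you transfer events that the QPT NIZK-adversary can fold into its output behavior. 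Without $\epsilon$-agreement surviving into the simulated game you cannot argue that the extracted witness---which is valid for the \emph{adversary's} statement $(\msg^\ast,y^\ast,K^\ast)$, not for yours---collides with $\prf_\prfkey$ on any input, so key-injectiveness does not fire and your lower bound on $\Pr[\prfkey^\ast=\prfkey]$ in the real branch of $\cB$ is unsupported. The paper closes exactly this hole by replacing $\epsilon$-agreement with an efficient sampling test: draw $\omega=\secp/\epsilon$ random inputs $x_1,\dots,x_\omega$ and check whether $\Eval(\pp,C^\ast,x_j)=\PRF.\Eval(\prfkey,x_j)$ for some $j$. This test is QPT (the NIZK adversary holds $\prfkey$), is implied by $\epsilon$-agreement except with probability $(1-\epsilon)^{\secp/\epsilon}=\negl(\secp)$, and can be built into the NIZK adversary so that \cref{def:tse_nizk} applies directly; in the simulated experiment the test then supplies the single collision point needed for key-injectiveness. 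Once you insert this step, your argument is complete.
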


\ifnum\cameraready=0

\ifnum\submission=0
\begin{proof}[Proof of~\cref{thm:relaxed_watermarking_prf}]
\else
\section{Proof of \cref{thm:relaxed_watermarking_prf}}\label{sec:proof_relaxed_watermarking_prf}
\fi
The statistical correctness of \cref{const:relaxed_watermarking_prf} follows from the completeness of $\NIZK$ and the functionality preserving under puncturing of $\PuncPRF$.
Also, the extraction correctness of \cref{const:relaxed_watermarking_prf} immediately follows from the construction.
Below, we prove the relaxed $(\epsilon,\cU_\prfkeyspace)$-unremovability of \cref{const:relaxed_watermarking_prf}.

Let $\qA$ be a QPT adversary attacking relaxed $(\epsilon,\cU_\prfkeyspace)$-unremovability.
We prove this theorem using hybrid games.

\begin{description}
\item[Game $1$:]This is $\expa{\qA,\cU_\prfkeyspace}{r}{urmv}(\secp,\epsilon)$ for \cref{const:relaxed_watermarking_prf}.

\begin{enumerate}
\item The challenger generates $\crs \gets \NIZK.\setup(1^\secp)$ and gives $\pp \seteq \crs$ to the adversary $\qA$.
\item At some point, $\qA$ queries a message $\msg \in \bit^\msglen$ to the challenger.
The challenger first samples $\prfkey \gets\cU_\prfkeyspace$.
Next, the challenger computes $y_{\mv{0}} \gets \PRF.\Eval(\prfkey,\mv{0})$, $\prfkey_{\setbracket{\mv{0}}} \gets \Puncture(\prfkey,\setbracket{\mv{0}})$, and $\pi \gets \NIZK.\Prove(\crs,(\msg,y_{\mv{0}},\prfkey_{\setbracket{\mv{0}}}),\prfkey)$.
Then, the challenger returns $\tlC \seteq (\msg,y_{\mv{0}},\prfkey_{\setbracket{\mv{0}}},\pi)$ to $\qA$.

\item Finally, $\qA$ outputs a circuit $C^{\ast}=(\msg^*,y^*,\prfkey^*,\pi^*)$. If $\Pr_{x\gets \Dprf}[\Eval(\pp,C^{\ast},x) = \PRF.\Eval(\prfkey,x)]\ge\epsilon$ and $\Extract (\pp, C^{\ast})=\msg^* \ne \msg$ hold, then the challenger outputs $1$ as the output of this game. Otherwise, the challenger outputs $0$ as the output of this game.
\end{enumerate}
\end{description}

We define the following three conditions.

\begin{itemize}
\item[$(a)$] $\Pr_{x\gets \Dprf}[\Eval(\pp,C^{\ast},x) = \PRF.\Eval(\prfkey,x)]\ge\epsilon$.
\item[$(b)$] $\NIZK.\vrfy(\crs,(\msg^*,y^*,\prfkey^*),\pi^*)=\top$.
\item[$(c)$] $\msg^*\neq \msg$.
\end{itemize}

It is clear that if all of the above conditions are satisfied, the output of Game $1$ is $1$.
In the opposite direction, it is clear that the conditions $(a)$ and $(c)$ are satisfied whenever the output of Game $1$ is $1$ from the definition of Game $1$.
Also, we see that if the condition $(b)$ is not satisfied, $\Pr_{x\gets \Dprf}[\Eval(\pp,C^{\ast},x) = \PRF.\Eval(\prfkey,x)]=0$ holds and thus the output of Game $1$ is $0$.
Therefore, the conditions $(b)$ is satisfied whenever the output of Game $1$ is $1$.
Overall, the output of Game $1$ is $1$ if and only if the above three conditions hold in Game $1$.

We define $\event{S}$ as the event that the above conditions $(b)$ and $(c)$, and the following condition hold.
\begin{itemize}
\item[$(a')$]  Let $\omega=\secp/\epsilon$. $\Eval(\pp,C^{\ast},x_j)=\PRF.\Eval(\prfkey,x_j)$ holds for some $j\in[\omega]$, where $x_j$ is randomly chosen from $\Dprf$ for every $j\in[\omega]$. 
\end{itemize}
When the condition $(a)$ is satisfied, the probability that $(a')$ is not satisfied is bounded by $(1-\epsilon)^{\lambda/\epsilon}\le e^{-\lambda}=\negl(\secp)$.
Thus, we have $\Pr[\textrm{Output of Game~} 1 \textrm{~is~} 1]\leq \Pr[\event{S}]+\negl(\secp)$.

We next consider the following adversary $\qAnizk$ attacking the true-simulation extractability of $\NIZK$ using $\qA$.

\begin{enumerate}
\item Given $\crs$, $\qAnizk$ gives $\pp \seteq \crs$ to $\qA$.
\item When $\qA$ queries a message $\msg \in \bit^\msglen$, $\qAnizk$ first samples $\prfkey \gets \cU_\prfkeyspace$.
Next, $\qAnizk$ computes $y_{\mv{0}} \gets \PRF.\Eval(\prfkey,\mv{0})$ and $\prfkey_{\setbracket{\mv{0}}} \gets \Puncture(\prfkey,\setbracket{\mv{0}})$.
Then, $\qAnizk$ sends a statement/witness pair $((\msg,y_{\mv{0}},\prfkey_{\setbracket{\mv{0}}}),\prfkey)$ to the challenger.

\item Given $\pi$, $\qAnizk$ sends $\tlC \seteq (\msg,y_{\mv{0}},\prfkey_{\setbracket{\mv{0}}},\pi)$ to $\qA$.

\item When $\qA$ outputs $C^{\ast}=(\msg^*,y^*,\prfkey^*,\pi^*)$, $\qAnizk$ first randomly chooses $x_j$ from $\Dprf$ for every $j\in[\omega]$ and checks whether $\Eval(\pp,C^{\ast},x_j)=\PRF.\Eval(\prfkey,x_j)$ holds for some $j\in[\omega]$.
If so, $\qAnizk$ outputs a statement/proof pair $((\msg^*,y^*,\prfkey^*),\pi^*)$. Otherwise, $\qAnizk$ outputs $\bot$.
\end{enumerate}
When we execute $\expt{\qAnizk,\NIZK}{se\textrm{-}real}$, the output of it is $1$ if and only if the following conditions hold.
\begin{itemize}
\item  Let $\omega=\secp/\epsilon$. $\Eval(\pp,C^{\ast},x_j)=\PRF.\Eval(\prfkey,x_j)$ holds for some $j\in[\omega]$, where $x_j$ is randomly chosen from $\Dprf$ for every $j\in[\omega]$. 
\item $\NIZK.\vrfy(\crs,(\msg^*,y^*,\prfkey^*),\pi^*)=\top$.
\item $((\msg,y_{\mv{0}},\prfkey_{\setbracket{\mv{0}}}),\prfkey)\in \cR_L$.
\item $(\msg,y_{\mv{0}},\prfkey_{\setbracket{\mv{0}}})\neq(\msg^*,y^*,\prfkey^*)$.
\end{itemize}
$\qAnizk$ perfectly simulates Game $1$ until $\qA$ terminates.
We see that when the event $\event{S}$ occurs in the simulated Game 1, the output of $\expt{\qAnizk,\NIZK}{se\textrm{-}real}$ is $1$.
Namely, we have $\Pr[\event{S}]\leq\Pr[1\gets\expt{\qAnizk,\NIZK}{se\textrm{-}real}]$.

Since $\NIZK$ satisfies true-simulation extractability, there exists $\Sim=(\fksetup,\allowbreak\Sim_1,\Sim_2)$ such that we have
\[
\abs{\Pr[1\gets\expt{\qAnizk,\NIZK}{se\textrm{-}real}]-\Pr[1\gets\expt{\qAnizk,\Sim,\NIZK}{se\textrm{-}sim}]}\leq\negl(\secp).
\]
We then define the following Game~2.

\begin{description}
\item[Game $2$:]This game is the same as $\expt{\qAnizk,\Sim,\NIZK}{se\textrm{-}sim}$ except conceptual changes. Especially, this game is obtained by transforming $\expt{\qAnizk,\Sim,\NIZK}{se\textrm{-}sim}$ into a security game played between the challenger and $\qA$ so that the output distribution does not change.
\begin{enumerate}
\item The challenger generates $(\crs,\td)\gets\fksetup(1^\secp)$ and gives $\pp \seteq \crs$ to $\qA$.
\item When $\qA$ queries a message $\msg \in \bit^\msglen$, the challenger first samples $\prfkey \gets \cU_\prfkeyspace$.
Next, the challenger computes $y_{\mv{0}} \gets \PRF.\Eval(\prfkey,\mv{0})$ and $\prfkey_{\setbracket{\mv{0}}} \gets \Puncture(\prfkey,\setbracket{\mv{0}})$.
Then, the challenger computes $(\pi,\state_{\Sim})\gets\Sim_1(\crs,\td,(\msg,y_{\mv{0}},\prfkey_{\setbracket{\mv{0}}}))$ and sends $\tlC \seteq (\msg,y_{\mv{0}},\prfkey_{\setbracket{\mv{0}}},\pi)$ to $\qA$.

\item When $\qA$ outputs $C^{\ast}=(\msg^*,y^*,\prfkey^*,\pi^*)$, the challenger computes $\prfkey'\gets\Sim_2(\state_{\Sim},(\msg^*,y^*,\prfkey^*),\pi^*)$.
The challenger then outputs $1$ if all of the following conditions hold.
\begin{itemize}
\item For $\{x_j\}_{j\in[\omega]}$ randomly chosen from $\Dprf$, $\Eval(\pp,C^{\ast},x_j)=\PRF.\Eval(\prfkey,x_j)$ holds for some $j\in[\omega]$.
\item $\NIZK.\vrfy(\crs,((\msg^*,y^*,\prfkey^*),\pi^*))=\top$.
\item $((\msg,y_{\mv{0}},\prfkey_{\setbracket{\mv{0}}}),\prfkey)\in \cR_L$.
\item $((\msg^*,y^*,\prfkey^*),\prfkey')\in \cR_L$.
\item $(\msg,y_{\mv{0}},\prfkey_{\setbracket{0}})\neq (\msg^*,y^*,\prfkey^*)$.
\end{itemize}
Otherwise, the challenger outputs $0$.
\end{enumerate}
\end{description}
When the above first condition and fourth condition hold, we have $\PRF.\Eval(\prfkey,x_j)=\PRF.\Eval(\prfkey',x_j)$.
Then, from the key-injective property of $\PuncPRF$, we also have $\prfkey=\prfkey'$.
Therefore, from the security of $\PuncPRF$, we have $\Pr[\textrm{Output of Game~} 2 \textrm{~is~} 1]\leq\negl(\lambda)$.

From the discussions so far, we obtain $\Pr[\textrm{Output of Game~} 1 \textrm{~is~} 1]\leq\negl(\lambda)$.
This completes the proof.
\ifnum\submission=0
\end{proof}
\else
\fi
\else
Due to the space limitation, we provide the proof of \cref{thm:relaxed_watermarking_prf} in the full version~\cite{EPRINT:KitNisYam20}.
\fi

\medskip

\ifnum\cameraready=1
By using known results (see the full version~\cite{EPRINT:KitNisYam20} for the detail), we can instantiate \cref{const:relaxed_watermarking_prf} under the LWE assumption.
\else
From \cref{thm:keyinj_pPRF_LWE} and \cref{thm:seNIZK_LWE}, we can instantiate \cref{const:relaxed_watermarking_prf} under the LWE assumption.
\fi
Concretely, we obtain the following theorem.
\begin{theorem}\label{thm:relaxed_watermarking_prf_LWE}
Let $\epsilon$ be any inverse polynomial of $\secp$.
Assuming the quantum hardness of the LWE problem, there is a relaxed $(\epsilon,\cU_\prf)$-secure watermarking scheme for a family of PRF $\cF$, where $\cU_\prf$ is the uniform distribution over $\cF$.
\end{theorem}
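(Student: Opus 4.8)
The plan is to observe that Theorem~\ref{thm:relaxed_watermarking_prf_LWE} is obtained by instantiating the generic scheme of Construction~\ref{const:relaxed_watermarking_prf} with LWE-based primitives and then invoking Theorem~\ref{thm:relaxed_watermarking_prf}. Construction~\ref{const:relaxed_watermarking_prf} is built from exactly two ingredients: a puncturable PRF $\PuncPRF$ satisfying key-injectiveness (\cref{def:key_injective}), and a true-simulation extractable NIZK system for $\NP$ (\cref{def:tse_nizk}). So the whole task reduces to supplying both of these under the quantum hardness of LWE.

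First I would recall that Theorem~\ref{thm:keyinj_pPRF_LWE} gives a key-injective puncturable PRF secure against QPT adversaries assuming the quantum hardness of LWE. Second, Theorem~\ref{thm:seNIZK_LWE} gives a true-simulation extractable NIZK for $\NP$ secure against polynomial-time quantum adversaries under the same assumption. Plugging these two objects into Construction~\ref{const:relaxed_watermarking_prf} and applying Theorem~\ref{thm:relaxed_watermarking_prf} yields, for every inverse-polynomial $\epsilon$, a relaxed $(\epsilon,\cU_\prfkeyspace)$-secure watermarking scheme for the PRF family $\{\prf_\prfkey(\cdot)=\PRF.\Eval(\prfkey,\cdot)\mid \prfkey\in\prfkeyspace\}$, where $\cU_\prfkeyspace$ is the uniform distribution over the key space. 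Taking $\cF$ to be this PRF family and $\cU_\prf\seteq\cU_\prfkeyspace$ gives exactly the claimed statement.

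The only point worth a sentence of care is parameter-validity: in the reduction underlying Theorem~\ref{thm:relaxed_watermarking_prf} one sets $\omega=\secp/\epsilon$, and since $\epsilon$ is an inverse polynomial this $\omega$ stays polynomial in $\secp$, so the reduction $\qAnizk$ (and the subsequent reduction to puncturable-PRF security) is QPT; this is already ensured by the hypotheses of Theorem~\ref{thm:relaxed_watermarking_prf}, which we are free to assume here. There is no substantive obstacle remaining at this stage — all the real work (the hybrid argument, the use of simulation-extractability to extract $\prfkey$, and the use of key-injectiveness to conclude $\prfkey=\prfkey'$) is contained in the already-established Theorem~\ref{thm:relaxed_watermarking_prf}, so this final theorem is a direct corollary.
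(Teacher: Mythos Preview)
Your proposal is correct and matches the paper's own argument essentially verbatim: the paper simply states that by \cref{thm:keyinj_pPRF_LWE} and \cref{thm:seNIZK_LWE} one can instantiate \cref{const:relaxed_watermarking_prf} under LWE and then apply \cref{thm:relaxed_watermarking_prf}. Your extra remark about $\omega=\secp/\epsilon$ being polynomial is a nice sanity check but not required beyond what is already assumed in \cref{thm:relaxed_watermarking_prf}.
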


\ifnum\submission=0

\ifnum\submission=0
\subsection{Relaxed Watermarking for Compute-and-Compare Circuits}
\else
\section{Relaxed Watermarking for Compute-and-Compare Circuits}
\fi
\label{sec:const_relaxed_watermarking_cnc}

We give a construction of relaxed watermarking for circuits called (searchable) compute-and-compare circuits.
The construction is essentially the classical part of the SSL construction by Ananth and La Placa~\cite{EC:AnaLaP21}.
Note that their construction uses a primitive called input-hiding obfuscation. However, our construction instead uses injective one-way functions that can be seen as a concrete instantiation of input-hiding obfuscation.

Below, we first define a family of compute-and-compare circuits and then provide the construction of a relaxed watermarking scheme for it.

\begin{definition}[Compute-and-Compare Circuits]
A compute-and-compare circuit $\cnc{C}{\alpha}$ is of the form
\[
\cnc{C}{\alpha}(x)\left\{
\begin{array}{ll}
1&(C(x)=\alpha)\\
0&(\text{otherwise})~,
\end{array}
\right.
\]
where $C$ is a circuit and $\alpha$ is a string called lock value.
We let $\Ccnc=\{\cnc{C}{\alpha}|C:\bit^n\rightarrow\bit^m,\alpha\in\bit^m\}$.
\begin{description}
\item[Searchability:]We say that a family of compute-and-compare circuits $\Ccnc=\{\cnc{C}{\alpha}|C:\bit^n\rightarrow\bit^m,\alpha\in\bit^m\}$ is searchable if there exists a PPT algorithm $\cS$ such that given any $\cnc{C}{\alpha}\in\Ccnc$, $\cS$ outputs $x\in\bit^n$ such that $\cnc{C}{\alpha}(x)=1$ (i.e., $C(x)=\alpha$).
\end{description}
\end{definition}

\paragraph{Distribution of interest.}
For a function $\gamma(\secp)$, we say that a distribution $\Dcnc{\gamma}$ over $\Ccnc$ has conditional min-entropy $\gamma$ if $\cnc{C}{\alpha}\la\Dcnc{\gamma}$ satisfies $\Hmin(\alpha|C)\ge\gamma(\lambda)$.

\begin{construction}[Relaxed Watermarking for Searchable Compute and Compare Circuits]\label{const:relaxed_watermarking_cnc}
Let $\inplen$, $\outlen$, $\ell$ be polynomials of $\secp$.
Let $\Fow=\{\owf:\bit^\outlen\rightarrow\bit^\ell\}$ be a family of injective one-way functions and let $\NIZK=(\NIZK.\setup,\NIZK.\prove,\allowbreak\NIZK.\vrfy)$ be a NIZK system for $\NP$.
Our relaxed watermarking scheme for searchable compute-and-compare circuits $\Ccnc$ is as follows.
Its message space is $\bit^\msglen$ for some polynomial $\msglen$ of $\secp$.
Below, let $\cS$ be the search algorithm for $\Ccnc$.
\begin{description}
\item[$\gen(1^\secp)$:] Generate $\crs \gets \NIZK.\setup(1^\secp)$ and $\owf\gets\Fow$. Output $\pp \seteq (\crs,\owf)$. 
\item[$\Mark(\pp,\cnc{C}{\alpha},\msg)$:] Compute $x \seteq \cS(\cnc{C}{\alpha})$. That is, $x$ is an accepting point of $\cnc{C}{\alpha}$. Compute $y\gets\owf(\alpha)$. An NP relation $\cR_L$ is defined as follows.
\[
\cR_L \seteq \left\{\left((\msg,\owf,y,C),x)\right) \mid
y=\owf(C(x))
\right\}.
\]
Compute $\pi \gets \NIZK.\prove(\crs,(\msg,\owf,y,C),x)$.
Output $\tlC \seteq (\msg,y,C,\pi)$.
\item[$\Extract(\pp,\tlC')$:] Parse $\tlC' = (\msg^\prime,y^\prime,C^\prime,\pi^\prime)$ and output $\msg^\prime$.
\item[$\Eval(\pp,C^\prime,x)$:] Parse $\tlC' = (\msg^\prime,y^\prime,C^\prime,\pi^\prime)$ and run $\NIZK.\vrfy(\crs,(\msg^\prime,\owf,y^\prime,C^\prime),\pi^\prime)$. If the output is $\bot$, output $\bot$. Otherwise, output $1$ if $y'=\owf(C'(x))$ and $0$ otherwise.
\end{description}
\end{construction}

\begin{theorem}\label{thm:relaxed_watermarking_cnc}
Let $\inplen,\outlen$, and $\gamma$ be functions of $\secp$.
Also, let $\Dcnc{\gamma}$ be any distribution over $\Ccnc$ that has conditional min-entropy $\gamma$.
If $\Fow$ is a family of injective OWF for $\gamma$-sources and $\NIZK$ is a true-simulation extractable NIZK system for $\NP$ secure against adversaries of running time $O(2^\inplen)$, then \cref{const:relaxed_watermarking_cnc} is a relaxed $(1,\Dcnc{\gamma})$-secure watermarking scheme for $\Ccnc$.
\end{theorem}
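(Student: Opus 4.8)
The plan is to follow the template of the proof of \cref{thm:relaxed_watermarking_prf}, with the puncturable PRF replaced by the injective one-way function family $\Fow$, key-injectiveness replaced by injectivity of $\owf$, and pseudorandomness replaced by one-wayness of $\Fow$ for $\gamma$-sources (\cref{def-owf}).

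Statistical correctness and extraction correctness are routine. A freshly marked circuit has the form $\tlC=(\msg,y,C,\pi)$, where $\pi$ is an honest $\NIZK$ proof for the true statement $(\msg,\owf,y,C)$ with witness $x\seteq\cS(\cnc{C}{\alpha})$ (so $C(x)=\alpha$ and $y=\owf(C(x))$); hence $\NIZK.\vrfy$ accepts by completeness, and then $\Eval(\pp,\tlC,z)$ outputs $1$ on an input $z$ exactly when $y=\owf(C(z))$, which by injectivity of $\owf$ together with $y=\owf(\alpha)$ is equivalent to $C(z)=\alpha$, i.e.\ to $\cnc{C}{\alpha}(z)$; and $\Extract$ simply reads off $\msg$.

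For relaxed $(1,\Dcnc{\gamma})$-unremovability I would argue as in the PRF case. Let $\qA$ be a successful adversary: on a winning run it outputs $C^{\ast}=(\msg^{\ast},y^{\ast},C',\pi^{\ast})$ with $\msg^{\ast}\neq\msg$ and with $\Eval(\pp,C^{\ast},\cdot)$ agreeing with $\cnc{C}{\alpha}$ on \emph{every} input (this is what $\epsilon=1$ buys us). First, switch the proof $\pi$ given to $\qA$ from an honest proof to a simulated one and, invoking true-simulation extractability (\cref{def:tse_nizk}), extract a witness $x'$ from $\pi^{\ast}$; since the statement $(\msg^{\ast},\owf,y^{\ast},C')$ differs from the queried statement $(\msg,\owf,y,C)$ (as $\msg^{\ast}\neq\msg$), extraction succeeds with overwhelming probability and returns $x'$ with $y^{\ast}=\owf(C'(x'))$. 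Consequently $\Eval(\pp,C^{\ast},x')=1$ (the $\NIZK$ proof inside $C^{\ast}$ must verify, since otherwise $\Eval(\pp,C^{\ast},\cdot)\equiv\bot$ could not agree with $\cnc{C}{\alpha}$ at an accepting point, which exists by searchability), so by agreement at $x'$ we get $\cnc{C}{\alpha}(x')=1$, i.e.\ $C(x')=\alpha$, and therefore $\owf(C(x'))=\owf(\alpha)=y$. Thus $C(x')$, computable from the extracted $x'$ and the public circuit $C$, is a preimage of $y$ under $\owf$. Packaging this as a reduction: given a one-wayness challenge $(\owf,\aux=C,y=\owf(\alpha))$ where $(\alpha,C)$ is drawn from the marginal distribution induced by $\cnc{C}{\alpha}\la\Dcnc{\gamma}$ — which has $\Hmin(\alpha\mid C)\ge\gamma$ by the hypothesis on $\Dcnc{\gamma}$ — run the simulated game for $\qA$ (note that with simulated proofs the reduction needs neither $\alpha$ nor $\cS(\cnc{C}{\alpha})$, only $C$ and $y$), extract $x'$, and output $C(x')$; this contradicts that $\Fow$ is one-way for $\gamma$-sources.

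The delicate point — and, I believe, the reason the theorem is restricted to approximation factor $\epsilon=1$ while \cref{thm:relaxed_watermarking_prf} allows inverse-polynomial $\epsilon$ — is that ``agreement on all inputs'' is not efficiently checkable, so it cannot be folded into the output decision of an intermediate reduction in the way ``agreement on one of $\secp/\epsilon$ random inputs'' is in the PRF proof. For a PRF, agreement at a single random point already carries information (combined with the extracted punctured key and key-injectiveness it forces key equality), whereas a compute-and-compare circuit is $0$ on almost all inputs, so the gained agreement must be exploited precisely at the extracted accepting point $x'$, hence at \emph{all} inputs. I would therefore route the game hops through predicates evaluated by brute force over the $2^{n}$ inputs; this is exactly why the $\NIZK$ must be true-simulation extractable against $O(2^{n})$-time adversaries. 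With that strength, replacing the honest proof by a simulated one preserves the probability of the (inefficiently checkable) winning event up to negligible error, and the chain above then closes.
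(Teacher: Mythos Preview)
Your proposal is correct and follows essentially the same approach as the paper's proof: both argue correctness via $\NIZK$ completeness and injectivity of $\owf$, and for unremovability both switch to a simulated proof via true-simulation extractability, extract a witness $x'$ from the adversary's new proof, use functional equivalence to conclude $\cnc{C}{\alpha}(x')=1$, and thereby produce a preimage $C(x')$ of $y=\owf(\alpha)$, breaking the one-wayness of $\Fow$ for $\gamma$-sources. Your discussion of why the approximation factor must be $1$ and why the $\NIZK$ needs $O(2^n)$-time security (the brute-force functional-equivalence check inside the reduction $\qAnizk$) matches the paper's reasoning exactly.
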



\begin{proof}[Proof of~\cref{thm:relaxed_watermarking_cnc}]
The statistical correctness of \cref{const:relaxed_watermarking_cnc} follows from the completeness of $\NIZK$ and the injective property of $\Fow$.
Also, the extraction correctness of \cref{const:relaxed_watermarking_cnc} immediately follows from the construction.
Below, we prove the relaxed $(\epsilon,\Dcnc{\gamma})$-unremovability of \cref{const:relaxed_watermarking_cnc}.

Let $\qA$ be a QPT adversary attacking relaxed $(1,\Dcnc{\gamma})$-unremovability.
We prove this theorem using hybrid games.

\begin{description}
\item[Game $1$:]This is $\expa{\qA,\Dcnc{\gamma}}{r}{urmv}(\secp,\epsilon)$ for \cref{const:relaxed_watermarking_cnc}.

\begin{enumerate}
\item The challenger generates $\crs \gets \NIZK.\setup(1^\secp)$ and $\owf\gets\Fow$, and gives $\pp \seteq (\crs,\owf)$ to the adversary $\qA$.
\item At some point, $\qA$ queries a message $\msg \in \bit^\msglen$ to the challenger.
The challenger first samples $\cnc{C}{\alpha}\gets\Dcnc{\gamma}$.
Next, the challenger computes $x \seteq \cS(\cnc{C}{\alpha})$ and $y\gets\owf(\alpha)$.
Then, the challenger computes $\pi \gets \NIZK.\Prove(\crs,(\msg,\owf,y,C),x)$.
Then, the challenger returns $\tlC \seteq (\msg,y,C,\pi)$ to $\qA$.

\item Finally, $\qA$ outputs $\tlC^{\ast}=(\msg^*,y^*,C^*,\pi^*)$. If $\Eval(\pp,\tlC^{\ast},\cdot)$ and $\cnc{C}{\alpha}(\cdot)$ are functionally equivalent, and $\Extract (\pp, \tlC^{\ast})=\msg^* \ne \msg$, then the challenger outputs $1$ as the output of this game. Otherwise, the challenger outputs $0$ as the output of this game.
\end{enumerate}
\end{description}

We define the following three conditions.

\begin{itemize}
\item[$(a)$]  $\Eval(\pp,\tlC^{\ast},\cdot)$ and $\cnc{C}{\alpha}(\cdot)$ are functionally equivalent.
\item[$(b)$] $\NIZK.\vrfy(\crs,(m^*,\owf,y^*,C^*),\pi^*)=\top$.
\item[$(c)$] $\msg^*\neq \msg$.
\end{itemize}

It is clear that if all of the above conditions are satisfied, the output of Game $1$ is $1$.
In the opposite direction, it is clear that the conditions $(a)$ and $(c)$ are satisfied whenever the output of Game $1$ is $1$ from the definition of Game $1$.
Also, we see that if the condition $(b)$ is not satisfied, $\Eval(\pp,\tlC^{\ast},\cdot)$ and $\cnc{C}{\alpha}(\cdot)$ are not functionally equivalent and thus the output of Game $1$ is $0$.
Therefore, the condition $(b)$ is satisfied whenever the output of Game $1$ is $1$.
Overall, the output of Game $1$ is $1$ if and only if the above three conditions hold in Game $1$.


We next consider the following adversary $\qAnizk$ attacking the true-simulation extractability of $\NIZK$ using $\qA$.

\begin{enumerate}
\item Given $\crs$, $\qAnizk$ generates $\owf\gets\Fow$, and gives $\pp \seteq (\crs,\owf)$ to $\qA$.
\item When $\qA$ queries a message $\msg \in \bit^\msglen$, $\qAnizk$ first samples $\cnc{C}{\alpha}\gets\Dcnc{\gamma}$.
Next, $\qAnizk$ computes $x \seteq \cS(\cnc{C}{\alpha})$ and $y\gets\owf(\alpha)$.
Then, $\qAnizk$ sends a statement/witness pair $((\msg,\owf,y,C),x)$ to the challenger.

\item Given $\pi$, $\qAnizk$ sends $\tlC \seteq (\msg,y,C,\pi)$ to $\qA$.

\item When $\qA$ outputs $\tlC^{\ast}=(\msg^*,y^*,C^*,\pi^*)$, $\qAnizk$ first checks whether $\Eval(\pp,C^{\ast},\cdot)$ and $\cnc{C}{\alpha}(\cdot)$ are functionally equivalent.
(Note that this check can be done in time $O(2^n)$.)
If so, $\qAnizk$ outputs a statement/proof pair $((m^*,\owf,y^*,C^*),\pi^*)$. Otherwise, $\qAnizk$ outputs $\bot$.
\end{enumerate}
When we execute $\expt{\qAnizk,\NIZK}{se\textrm{-}real}$, the output of it is $1$ if and only if the following conditions hold.
\begin{itemize}
\item $\Eval(\pp,\tlC^{\ast},\cdot)$ and $\cnc{C}{\alpha}(\cdot)$ are functionally equivalent.
\item $\NIZK.\vrfy(\crs,(m^*,\owf,y^*,C^*),\pi^*)=\top$.
\item $((\msg,\owf,y,C),x)\in \cR_L$.
\item $(\msg,\owf,y,C)\neq(m^*,\owf,y^*,C^*)$.
\end{itemize}
$\qAnizk$ perfectly simulates Game $1$ for $\qA$ until $\qA$ terminates.
We see that when the output of the simulated Game $1$ is 1, the output of $\expt{\qAnizk,\NIZK}{se\textrm{-}real}$ is also $1$.
Namely, we have $\Pr[\textrm{Output of Game~} 1 \textrm{~is~} 1]\leq\Pr[1\gets\expt{\qAnizk,\NIZK}{se\textrm{-}real}]$.

$\qAnizk$ runs in time $O(2^n)$.
Since $\NIZK$ satisfies true-simulation extractability against adversaries runs in time $O(2^n)$, there exists $\Sim=(\fksetup,\Sim_1,\Sim_2)$ such that we have
\[
\abs{\Pr[1\gets\expt{\qAnizk,\NIZK}{se\textrm{-}real}]-\Pr[1\gets\expt{\qAnizk,\Sim,\NIZK}{se\textrm{-}sim}]}\leq\negl(\secp).
\]
We then define the following Game 2.

\begin{description}
\item[Game $2$:]This game is the same as $\expt{\qAnizk,\Sim,\NIZK}{se\textrm{-}sim}$ except conceptual changes. Especially, this game is obtained by transforming $\expt{\qAnizk,\Sim,\NIZK}{se\textrm{-}sim}$ into a security game played between the challenger and $\qA$ so that the output distribution does not change.
\begin{enumerate}
\item The challenger generates $(\crs,\td)\gets\fksetup(1^\secp)$ and $\owf\gets\Fow$, and gives $\pp \seteq (\crs,\owf)$ to $\qA$.
\item When $\qA$ queries a message $\msg \in \bit^\msglen$, the challenger first samples $\cnc{C}{\alpha}\gets\Dcnc{\gamma}$.
Next, the challenger computes $x \seteq \cS(\cnc{C}{\alpha})$ and $y\gets\owf(\alpha)$.
Then, the challenger computes $(\pi,\state_{\Sim})\gets\Sim_1(\crs,\td,(\msg,\owf,y,C))$ and sends $\tlC \seteq (\msg,y,C,\pi)$ to $\qA$.

\item When $\qA$ outputs $\tlC^{\ast}=(\msg^*,y^*,\prfkey^*,\pi^*)$, the challenger computes $x^*\gets\Sim_2(\state_{\Sim},(\msg^*,\owf,y^*,C^*),\pi^*)$. The challenger then outputs $1$ if all of the following conditions hold.
\begin{itemize}
\item $\Eval(\pp,C^{\ast},\cdot)$ and $\cnc{C}{\alpha}$ are functionally equivalent.
\item $\NIZK.\vrfy(\crs,(m^*,\owf,y^*,C^*),\pi^*)=\top$.
\item $((\msg,\owf,y,C),x)\in \cR_L$.
\item $((m^*,\owf,y^*,C^*),x^*)\in \cR_L$.
\item $(\msg,\owf,y,C)\neq(m^*,\owf,y^*,C^*)$.
\end{itemize}
Otherwise, the challenger outputs $0$.
\end{enumerate}
\end{description}

If the above first item and fourth item hold, we have
\[
\cnc{C}{\alpha}(x^*)=1 \Leftrightarrow C(x^*)=\alpha,
\]
and thus $\owf(C(x^*))=y$. Therefore, we have $\Pr[\textrm{Output of Game~} 2 \textrm{~is~} 1]\leq\negl(\secp)$ from the security of $\Fow$.

From the discussions so far, we obtain $\Pr[\textrm{Output of Game~} 1 \textrm{~is~} 1]\leq\negl(\secp)$.
This completes the proof.
\end{proof}

\medskip

From \cref{thm:inj_OWF_LWE} and \cref{thm:seNIZK_LWE}, we can instantiate \cref{const:relaxed_watermarking_cnc} under the LWE assumption.
Concretely, we obtain the following theorem.
\begin{theorem}\label{thm:relaxed_watermarking_cnc_LWE}
Let $\eta>0$ be any constant.
Assuming the hardness of the LWE problem against sub-exponential time quantum adversaries, there exists a relaxed $(1,\Dcnc{\secp^\eta})$-secure watermarking scheme for the class of compute-and-compare circuits $\Ccnc$, where $\Dcnc{\secp^\eta}$ is any distribution over $\Ccnc$ that has conditional min-entropy $\secp^\eta$.
\end{theorem}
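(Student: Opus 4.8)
The plan is to derive Theorem~\ref{thm:relaxed_watermarking_cnc_LWE} as a direct instantiation of \cref{const:relaxed_watermarking_cnc} via its security guarantee \cref{thm:relaxed_watermarking_cnc}. That theorem reduces the task to supplying, under the LWE assumption, two building blocks: (i) a family of injective one-way functions for $\gamma$-sources with $\gamma = \secp^\eta$, and (ii) a true-simulation extractable NIZK for $\NP$ secure against quantum adversaries running in time $O(2^{\inplen})$, where $\inplen$ is the input length of the circuits in $\Ccnc$. So the proof is essentially a matter of quoting two earlier theorems and checking that their security levels line up.

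For (i), I would invoke \cref{thm:inj_OWF_LWE} verbatim: for the fixed constant $\eta>0$, the quantum hardness of LWE gives a post-quantum injective OWF family for $\secp^\eta$-sources, and setting $\gamma(\secp) := \secp^\eta$ matches the conditional min-entropy requirement built into $\Dcnc{\secp^\eta}$. For (ii), I would use \cref{thm:seNIZK_LWE}, which yields a true-simulation extractable NIZK secure against sub-exponential time quantum adversaries under the sub-exponential quantum hardness of LWE.

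The one step that needs a little care — and the only non-mechanical part of the argument — is matching the required security level $O(2^{\inplen})$ against a ``sub-exponential'' bound. Since every circuit in $\Ccnc$ has input length $\inplen = \inplen(\secp) \le \secp^{k}$ for some constant $k$, and sub-exponential hardness means security against adversaries of running time $2^{\secp^{\delta}}$ for some constant $\delta>0$, I would instantiate the NIZK (and the underlying LWE instance) with an enlarged security parameter $\secp' := \lceil \secp^{k/\delta}\rceil$, which is still polynomial in $\secp$. The NIZK at parameter $\secp'$ is then secure against adversaries of running time $2^{(\secp')^{\delta}} \ge 2^{\secp^{k}} = \Omega(2^{\inplen})$, while all algorithms of the resulting watermarking scheme remain polynomial time in $\secp$; statistical correctness and extraction correctness are clearly unaffected by this rescaling. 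This is the standard complexity-leveraging bookkeeping and carries no real difficulty, but it is where one must be precise.

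Finally, plugging the OWF family from (i) and the NIZK from (ii) into \cref{const:relaxed_watermarking_cnc} and applying \cref{thm:relaxed_watermarking_cnc} with $\epsilon = 1$, $\gamma = \secp^\eta$, and the given distribution $\Dcnc{\secp^\eta}$ over $\Ccnc$ yields a relaxed $(1,\Dcnc{\secp^\eta})$-secure watermarking scheme for the class of compute-and-compare circuits $\Ccnc$, which is exactly the statement of the theorem.
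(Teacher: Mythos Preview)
Your proposal is correct and follows essentially the same approach as the paper, which simply cites \cref{thm:inj_OWF_LWE} and \cref{thm:seNIZK_LWE} and instantiates \cref{const:relaxed_watermarking_cnc} via \cref{thm:relaxed_watermarking_cnc}. Your added paragraph making the complexity-leveraging step explicit (rescaling the NIZK security parameter so that sub-exponential security covers $O(2^{\inplen})$-time adversaries) is a detail the paper leaves implicit but that you are right to spell out.
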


\else
\fi


\section{Secure Software Leasing from Two-Tier Quantum Lightning}\label{sec:SSL_ttQL}

This section shows how to construct a finite-term secure SSL scheme from two-tier quantum lightning and a relaxed watermarking.
Due to a technical reason, we additionally use an OT-MAC, which can be realized information theoretically.

\begin{construction}[SSL from Two-Tier Quantum Lightning]\label{const:SSL}
Let $\cC=\setbracket{\cC_{\secp}}_{\secp}$ be a circuit class such that $\cC_{\secp}$ contains circuit of input length is $n$ and output length $m$.
Our SSL scheme $(\setup,\sslgen,\lessor,\run,\sslcheck)$ for $\cC$ is based on a two-tier quantum lightning $\ttQL=(\ttQL.\setup, \boltgen,\semivrfy,\fullvrfy)$, a relaxed watermarking scheme $\WM=(\WM.\gen,\WM.\Mark,\WM.\Extract,\WM.\Eval)$ for $\cC$, and a OT-MAC $\MAC=(\Macgen,\Mactag,\Macvrfy)$.
\begin{itemize}
\item $\setup(1^\secp)$: Compute $\pp \gets \WM.\gen(1^{\secp})$ and output $\crs \seteq \pp$.
\item $\sslgen(\crs)$: Parse $\pp\gets\crs$. Compute $(\pk,\sk)\gets \ttQL.\setup(1^\secp)$ and $\mackey\gets\Macgen(1^\secp)$, and set $\ssl.\sk \seteq (\pp,\pk,\sk,\mackey)$.
\item $\lessor(\ssl.\sk,C)$: Do the following:
\begin{enumerate}
\item Parse $(\pp,\pk,\sk,\mackey)\gets\ssl.\sk$.
\item Compute $(\snum,\bolt)\gets \boltgen(\pk)$.
\item Compute $\tlC \gets \WM.\Mark(\pp,C,\pk\concat \snum)$.
\item Compute $\mac\gets\Mactag(\mackey,\snum)$.
\item Output $\sft_C \seteq (\bolt,\tlC,\mac)$.
\end{enumerate}
\item $\run(\crs,\sft_C,x)$: Do the following.
\begin{enumerate}
	\item Parse $\pp\gets\crs$ and $\sft_C = (\bolt,\tlC,\mac)$.
	\item Compute $\pk^\prime\concat \snum^\prime \gets \WM.\Extract(\pp,\tlC)$.
	\item Run $\semivrfy(\pk',\snum',\bolt)$ and obtain $(b,\bolt')$. If $b = \bot$, then output $\bot$. Otherwise, do the next step.
	\item Compute $y \gets \WM.\Eval(\pp,\tlC,x)$.
	\item Output $(\bolt',\tlC,\mac)$ and $y$.
\end{enumerate}
\item $\sslcheck(\ssl.\sk,\sft_C)$: Do the following.
\begin{enumerate}
\item Parse $(\pp,\pk,\sk,\mackey)\gets\ssl.\sk$ and $\sft_C = (\bolt,\tlC,\mac)$.
\item Compute $\pk^\prime\concat \snum^\prime \gets \WM.\Extract(\pp,\tlC)$.
\item If $\Macvrfy(\mackey,\snum^\prime,\mac)=\bot$, then output $\bot$. Otherwise, do the next step.
\item Output $d \gets \fullvrfy(\sk,\snum',\bolt)$.
\end{enumerate}
\end{itemize}
\end{construction}

We have the following theorems.

\begin{theorem}\label{thm:SSL_average}
Let $\epsilon$ be any inverse polynomial of $\secp$ and $\cD_{\cC}$ a distribution over $\cC$.
Assume $\ttQL$ is a two-tier quantum lightning scheme, $\WM$ is a $(\epsilon,\cD_{\cC})$-secure relaxed watermarking scheme for $\cC$, and $\MAC$ is an OT-MAC.
Then, \cref{const:SSL} is a $(\epsilon,\cD_{\cC})$-average-case finite-term lessor secure SSL scheme for $\cC$.
\end{theorem}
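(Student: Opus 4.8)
The plan is to dispatch correctness and reusability quickly and then devote the work to $(\epsilon,\cD_\cC)$-average-case lessor security, which is the heart of the statement. Correctness of $\run$ follows by combining the statistical correctness of $\WM$ (so that $\WM.\Eval(\pp,\tlC,x)=C(x)$ for every $x$) with the semi-verification correctness of $\ttQL$ (so that the $\semivrfy$ call inside $\run$ accepts an honestly generated $\bolt$). Correctness of $\sslcheck$ follows from correctness of $\MAC$ (so $\Macvrfy(\mackey,\snum,\mac)=\top$), extraction correctness of $\WM$ (so $\WM.\Extract(\pp,\tlC)=\pk\concat\snum$, exactly the string that was signed and used to generate $\bolt$), and full-verification correctness of $\ttQL$. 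Reusability then follows from \cref{lem:reusability_SSL} (replacing $\run$ by the algorithm $\run'$ it supplies, if necessary); it also holds directly, since the only step of $\run$ that touches the bolt register is the $\semivrfy$ call, so reusability of $\ttQL$ carries over.

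For security, suppose for contradiction that a QPT adversary $\qA$ wins $\expt{\qA,\cD_\cC}{a\textrm{-}lessor}(\secp,\epsilon)$ with non-negligible probability $\delta$. First I would note that, without loss of generality, the classical components $\tlC,\mac$ of both registers of $\qA$'s output $\sft^\ast$ can be taken to have been measured in the computational basis: both $\run$ and $\sslcheck$ only read these components (via $\WM.\Extract$, $\WM.\Eval$, $\Macvrfy$) as read-only classical values before doing anything else quantum, so inserting these measurements changes neither the acceptance probability of $\sslcheck$ nor the behaviour of $\run$ on the residual register. Write register $1$ of $\sft^\ast$ as $(\bolt_0,\tlC_0,\mac_0)$ and register $2$ as $(\bolt_1,\tlC_1,\mac_1)$, set $\pk_0\concat\snum_0\seteq\WM.\Extract(\pp,\tlC_0)$ and $\pk_1\concat\snum_1\seteq\WM.\Extract(\pp,\tlC_1)$, and let $(\pk,\snum)$ denote the honest values in the game. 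Let $\mathsf{Win}$ be the event that $\qA$ wins; it entails $\sslcheck(\ssl.\sk,\sft_0^\ast)=\top$, hence $\Macvrfy(\mackey,\snum_0,\mac_0)=\top$ and $\fullvrfy(\sk,\snum_0,\bolt_0)=\top$. I would then partition $\mathsf{Win}$ according to which of the following holds: (\textrm{i}) $\snum_0\ne\snum$; (\textrm{ii}) $\snum_0=\snum$ and $\pk_1\concat\snum_1=\pk\concat\snum$; (\textrm{iii}) $\snum_0=\snum$ and $\pk_1\concat\snum_1\ne\pk\concat\snum$. It suffices to bound $\Pr[\mathsf{Win}\wedge(\mathrm{i})]$, $\Pr[\mathsf{Win}\wedge(\mathrm{ii})]$ and $\Pr[\mathsf{Win}\wedge(\mathrm{iii})]$ each by a negligible function.

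Case (\textrm{i}) reduces to OT-MAC unforgeability: a reduction runs the entire game itself, obtaining $\mac=\Mactag(\mackey,\snum)$ from its single oracle query on $\snum$ (which it knows, having run $\boltgen(\pk)$), and outputs the forgery $(\snum_0,\mac_0)$, which is a valid tag on a message distinct from the queried one whenever (\textrm{i}) occurs; this is precisely what rules out the trivial attack of returning in register $1$ a fresh serial number together with a bolt that $\qA$ itself produced via $\boltgen$ (such a bolt passes $\fullvrfy$ by full-verification correctness) while keeping the genuine software in register $2$. Case (\textrm{iii}) reduces to relaxed $(\epsilon,\cD_\cC)$-unremovability of $\WM$: a reduction receives $\pp$, generates $(\pk,\sk)\gets\ttQL.\setup(1^\secp)$, $\mackey\gets\Macgen(1^\secp)$, $(\snum,\bolt)\gets\boltgen(\pk)$ and $\mac\gets\Mactag(\mackey,\snum)$, submits the message $\pk\concat\snum$ to obtain $\tlC=\WM.\Mark(\pp,C,\pk\concat\snum)$ with $C\gets\cD_\cC$, feeds $(\pp,(\bolt,\tlC,\mac))$ to $\qA$ (which is distributed exactly as in the real game), reads the classical circuit $\tlC_1$ off register $2$ of $\sft^\ast$, and outputs $C^\ast\seteq\tlC_1$. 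In case (\textrm{iii}) we have $\Extract(\pp,C^\ast)=\pk_1\concat\snum_1\ne\pk\concat\snum$; and since $\run$ outputs a non-$\bot$ value only when its internal $\semivrfy$ accepts and it then returns $\WM.\Eval(\pp,\tlC_1,x)$, using that $\semivrfy$ is independent of $x$ and that $\WM.\Eval$ never touches the bolt, the hypothesis $\Pr[\runout(\crs,P_2(\ssl.\sk,\sft^\ast),x)=C(x)]\ge\epsilon$ (over uniform $x$ and $\run$'s coins) forces $\Pr_{x}[\WM.\Eval(\pp,\tlC_1,x)=C(x)]\ge\epsilon$, so the reduction wins whenever $\mathsf{Win}\wedge(\mathrm{iii})$ occurs.

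Case (\textrm{ii}) reduces to two-tier unclonability of $\ttQL$, and I expect this to be the main obstacle, because it requires careful tracking of the measurements. A reduction receives $\pk$, generates $\pp\gets\WM.\gen(1^\secp)$, $\mackey\gets\Macgen(1^\secp)$, $C\gets\cD_\cC$, $(\snum,\bolt)\gets\boltgen(\pk)$, $\tlC=\WM.\Mark(\pp,C,\pk\concat\snum)$ and $\mac=\Mactag(\mackey,\snum)$, runs $\qA$ on $(\pp,(\bolt,\tlC,\mac))$, and submits $(\qstate{L_0},\qstate{L_1},\snum^\ast)\seteq(\bolt_0,\bolt_1,\snum)$ to the unclonability challenger, who then runs $\fullvrfy(\sk,\snum,\bolt_0)$ and $\semivrfy(\pk,\snum,\bolt_1)$. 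On event (\textrm{ii}) we have $\snum_0=\snum$, so $\sslcheck(\ssl.\sk,\sft_0^\ast)=\top$ already implies that the challenger's $\fullvrfy(\sk,\snum,\bolt_0)$ outcome is $\top$; conditioned on that, the residual state on register $2$ is proportional to $P_2(\ssl.\sk,\sft^\ast)$, and since $\pk_1\concat\snum_1=\pk\concat\snum$ the $\semivrfy$ call inside $\run$ is exactly $\semivrfy(\pk,\snum,\cdot)$, so by the same reasoning as in case (\textrm{iii}) the hypothesis forces $\semivrfy(\pk,\snum,\cdot)$ to accept the bolt register of $P_2(\ssl.\sk,\sft^\ast)$ with probability at least $\epsilon$. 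Hence the reduction wins with probability at least $\epsilon\cdot\Pr[\mathsf{Win}\wedge(\mathrm{ii})]$, and since $\epsilon^{-1}$ is polynomial, two-tier unclonability yields $\Pr[\mathsf{Win}\wedge(\mathrm{ii})]\le\negl(\secp)$. Summing the three negligible bounds contradicts the non-negligibility of $\delta$. The delicate point to spell out is that the single serial number $\snum^\ast=\snum$ submitted to the unclonability challenger must simultaneously serve as $\snum_0$ inside $\sslcheck$ (legitimate because, on event (\textrm{ii}), $\snum_0=\snum$) and as $\snum_1$ inside $\run$ (legitimate because $\snum_1=\snum$), so that one genuinely exhibits a single serial number for which one bolt passes full-verification while another passes semi-verification.
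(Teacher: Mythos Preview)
Your proof is correct and follows essentially the same approach as the paper: both split the winning event according to whether $\snum_0=\snum$ and whether $\pk_1\concat\snum_1=\pk\concat\snum$, and then reduce the three resulting sub-events to OT-MAC security, two-tier unclonability, and relaxed unremovability respectively. Your treatment is in fact a bit more careful than the paper's: you make explicit that the classical components may be measured upfront, you use a genuine disjoint partition (the paper uses a union bound that double-counts), and you explicitly track the $\epsilon$ factor lost in the two-tier unclonability reduction (the paper states the corresponding \cref{lem:SSL_ttQL} without spelling out this loss).
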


\begin{theorem}\label{thm:SSL_perfect}
Let $\beta$ be any inverse polynomial of $\secp$ and $\cD_{\cC}$ a distribution over $\cC$.
Assume $\ttQL$ is a two-tier quantum lightning scheme, $\WM$ is a $(1,\cD_{\cC})$-secure relaxed watermarking scheme for $\cC$, and $\MAC$ is an OT-MAC.
Then, \cref{const:SSL} is a $(\beta,\cD_{\cC})$-perfect finite-term lessor secure SSL scheme for $\cC$.
\end{theorem}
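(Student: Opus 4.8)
The plan is to verify correctness and reusability and then prove $(\beta,\cD_{\cC})$-perfect lessor security by a case analysis mirroring the argument sketched in \cref{sec:overview} for the Ananth--La Placa scheme, but replacing the publicly verifiable unclonable state generator by two-tier quantum lightning and the input-hiding obfuscator by the relaxed watermarking scheme. Correctness of $\run$ and $\sslcheck$ is immediate from semi- and full-verification correctness of $\ttQL$, from statistical and extraction correctness of $\WM$ (which give $\WM.\Extract(\pp,\tlC)=\pk\concat\snum$ and $\WM.\Eval(\pp,\tlC,x)=C(x)$), and from correctness of $\MAC$; reusability then follows from \cref{lem:reusability_SSL} together with reusability of $\ttQL$. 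For security, fix a QPT adversary $\qA$ and consider $\expt{\qA,\cD_\cC}{p\textrm{-}lessor}(\secp,\beta)$. Since $\sslcheck$ and $\run$ only ever feed the watermarked-circuit and MAC components of their input to classical algorithms (and the scored outputs are the classical values $\top/\bot$ and $y$), by deferred measurement we may assume the challenger measures those registers of the bipartite output $\sft^\ast$ at the outset, leaving a possibly entangled bipartite state on bolt registers $\bolt_0,\bolt_1$ together with classical strings $\pk'\concat\snum'\seteq\WM.\Extract(\pp,\tlC_0)$, $\mac_0$ on register $1$ and $\pk''\concat\snum''\seteq\WM.\Extract(\pp,\tlC_1)$, $\mac_1$ on register $2$. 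Let $W$ be the event that the challenger outputs $1$. If $W$ holds then $\Macvrfy(\mackey,\snum',\mac_0)=\top$ and $\fullvrfy(\sk,\snum',\bolt_0)=\top$; moreover, since every $C(x)\neq\bot$ while $\run$ yields a non-$\bot$ value only through its $\WM.\Eval$ branch, the winning condition forces $\WM.\Eval(\pp,\tlC_1,\cdot)\equiv C$ and $\semivrfy(\pk'',\snum'',\cdot)$ to accept on the bolt register of $P_2(\sk,\sft^\ast)$ with probability at least $\beta$.

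The first reduction is to one-time security of $\MAC$: a forger generates $\pp,(\pk,\sk),(\snum,\bolt)$ itself, obtains $\mac\gets\Mactag(\mackey,\snum)$ from its single oracle call, runs $\qA$ on $(\crs,(\bolt,\WM.\Mark(\pp,C,\pk\concat\snum),\mac))$ with $C\gets\cD_\cC$, measures the classical registers of $\sft^\ast$, and outputs $(\snum',\mac_0)$; it wins whenever $\snum'\neq\snum$ and $\Macvrfy(\mackey,\snum',\mac_0)=\top$. Hence $\Pr[W\wedge\snum'\neq\snum]\leq\negl(\secp)$, and it suffices to bound $\Pr[W\wedge\snum'=\snum]$, which we split on whether $\pk''\concat\snum''=\pk\concat\snum$ or not.

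If $\pk''\concat\snum''=\pk\concat\snum$, we break two-tier unclonability of $\ttQL$: given $\pk$, the reduction generates $\pp,\mackey,(\snum,\bolt)\gets\boltgen(\pk)$ and $C\gets\cD_\cC$, forms $\sft_C=(\bolt,\WM.\Mark(\pp,C,\pk\concat\snum),\Mactag(\mackey,\snum))$, runs $\qA$, measures the classical registers of $\sft^\ast$, and outputs $(\bolt_0,\bolt_1,\snum)$. Because $\fullvrfy(\sk,\snum,\cdot)$ and $\semivrfy(\pk,\snum,\cdot)$ act on disjoint registers they commute, so the probability both accept equals $\Pr[\fullvrfy\to\top]\cdot\Pr[\semivrfy\to\top\mid\fullvrfy\to\top]$; conditioned on $\fullvrfy(\sk,\snum,\bolt_0)\to\top$ register $2$ is in the state $P_2(\sk,\sft^\ast)$ (here we use $\snum'=\snum$, so the serial number passed to $\fullvrfy$ inside $\sslcheck$ is indeed $\snum$), on which $\semivrfy$ accepts with probability $\geq\beta$ whenever $W$ holds. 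Thus the reduction wins with probability at least $\beta\cdot\Pr[W\wedge\snum'=\snum\wedge\pk''\concat\snum''=\pk\concat\snum]$, and since $\beta$ is an inverse polynomial, two-tier unclonability gives that this joint probability is $\negl(\secp)$. If instead $\pk''\concat\snum''\neq\pk\concat\snum$, we break relaxed $(1,\cD_\cC)$-unremovability of $\WM$: given $\pp$, the reduction generates $(\pk,\sk),\mackey,(\snum,\bolt)$, sends the $C$-independent message $\msg\seteq\pk\concat\snum$, receives $\tlC\gets\WM.\Mark(\pp,C,\msg)$ for $C\gets\cD_\cC$, runs $\qA$ on $(\crs,(\bolt,\tlC,\Mactag(\mackey,\snum)))$, measures the classical registers of $\sft^\ast$, and outputs $C^\ast\seteq\tlC_1$; whenever $W$ holds in this case, $\WM.\Eval(\pp,\tlC_1,\cdot)\equiv C$ and $\WM.\Extract(\pp,\tlC_1)=\pk''\concat\snum''\neq\msg$, so the challenger outputs $1$, whence $\Pr[W\wedge\pk''\concat\snum''\neq\pk\concat\snum]\leq\negl(\secp)$. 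Combining the three bounds gives $\Pr[W]\leq\negl(\secp)$.

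The main obstacle, and the step I expect to require the most care, is the two-tier unclonability reduction in the third paragraph: one must faithfully relate the SSL winning event — which is phrased via the post-verification state $P_2(\sk,\sft^\ast)$ and the per-input approximation parameter $\beta$ — to the two-tier unclonability game, exploiting that the full- and semi-verification measurements commute because they touch disjoint registers and that collapsing the $\fullvrfy$ outcome onto $\top$ produces exactly $P_2(\sk,\sft^\ast)$. The role of the one-time $\MAC$ (which pins the serial number extracted from the returned software to the one the lessor actually issued) is precisely what makes this case split valid, since without it a cheating lessee could simply return a freshly generated bolt for an unrelated serial number and trivially pass $\sslcheck$; this is the ``technical reason'' for including the MAC in \cref{const:SSL}.
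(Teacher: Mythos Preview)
Your proposal is correct and follows essentially the same three-way case split as the paper's proof of \cref{thm:SSL_average} (which the paper explicitly says carries over to \cref{thm:SSL_perfect}): bound $\Pr[W\wedge\snum'\neq\snum]$ via OT-MAC security, $\Pr[W\wedge\snum'=\snum\wedge\pk''\concat\snum''=\pk\concat\snum]$ via two-tier unclonability, and $\Pr[W\wedge\pk''\concat\snum''\neq\pk\concat\snum]$ via relaxed $(1,\cD_\cC)$-unremovability. Your treatment of the two-tier QL reduction is in fact more careful than the paper's: you correctly observe that the perfect-lessor winning condition only guarantees $\semivrfy$ accepts on $P_2$ with probability $\ge\beta$ (not with certainty), which introduces a multiplicative $\beta$ loss in that reduction---a detail the paper elides by only writing out the average-case argument.
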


Since the proofs for the above two theorems are almost the same, we only provide the proof of \cref{thm:SSL_average} and omit the proof for \cref{thm:SSL_perfect}.


\begin{proof}[Proof of~\cref{thm:SSL_average}]
The correctness of $\run$ of \cref{const:SSL} follows from the statistical correctness and extraction correctness of $\WM$, and the semi-verification correctness of $\ttQL$.
Also, the correctness of $\sslcheck$ of \cref{const:SSL} follows from the extraction correctness of $\WM$, the correctness of $\MAC$, and the full-verification correctness of $\ttQL$.
Below, we prove the $(\epsilon,\cD_\cC)$-average-case finite-term lessor security of \cref{const:SSL}.

Let $\qA$ be a QPT adversary attacking $(\epsilon,\cD_\cC)$-average-case finite-term lessor security.
The detailed description of $\expt{\qA,\cD_\cC}{aft\textrm{-}lessor}(\secp,\epsilon)$ is as follows.
\begin{enumerate}
\item The challenger generates $\pp \gets \WM.\Gen(1^\secp)$, $(\pk,\sk)\gets\ttQL.\Setup(1^\secp)$, and $\mackey\gets\MAC.\Gen(1^\secp)$.
The challenger then generate $C\gets\cD_\cC$ and $(\snum,\bolt)\gets\boltgen(\pk)$.
The challenger also computes $\tlC\gets\WM.\Mark(\pp,C,\pk\|\snum)$ and $\mac\gets\Mactag(\mackey,\snum)$.
The challenger finally sends $\crs:=\pp$ and $\sft_C \seteq (\bolt,\tlC,\mac)$ to $\qA$.
Below, let $\ssl.\sk:=(\pp,\pk,\sk,\mackey)$.

\item $\qA$ outputs $(\tlC^{(1)},\mac^{(1)},\tlC^{(2)},\mac^{(2)},\qb^*)$.
$(\tlC^{(1)},\mac^{(1)})$ is the classical part of the first copy, and $(\tlC^{(2)},\mac^{(2)})$ is that of the second copy.
Moreover, $\qb^*$ is a density matrix associated with two registers $\reg_1$ and $\reg_2$, where the states in $\reg_1$ and $\reg_2$ are associated with the first and second copy, respectively.
Below, let $\sft^\first=(\Trace_2[\qb^*],\tlC^\first,\mac^\first)$ and $\sft^\second=(P_2(\ssl.\sk,\qb^*),\tlC^\second,\mac^\second)$.
Recall that $P_2(\ssl.\sk,\qb^*)$ denotes the resulting post-measurement state on $\reg_2$ after the check on $\reg_1$.

\item If it holds that $\sslcheck(\ssl.\sk,\sft^\first)=\top$ and $\Pr[\runout(\crs,\sft^\second,x) = C(x)]\geq\epsilon$, where the probability is taken over the choice of $x\la\zo{n}$ and the random coin of $\run$, then the challenger outputs $1$ as the output of this game. Otherwise, the challenger outputs $0$ as the output of this game.
\end{enumerate}

Below, we let $\pk^{(1)}\|\snum^\first\gets\WM.\Extract(\pp,\tlC^\first)$ and $\pk^\second\|\snum^\second\gets\WM.\Extract(\pp,\tlC^\second)$.
The output of $\expt{\qA,\cD_\cC}{aft\textrm{-}lessor}(\secp,\epsilon)$ is $1$ if and only if the following conditions hold.

\begin{itemize}
\item[(a)] $\Macvrfy(\mackey,\snum^\first,\mac^\first)=\top$.
\item[(b)] $\fullvrfy(\sk,\snum^\first,\Trace_2[\qb^*])=\top$.
\item[(c)] $\semivrfy(\pk^\second,\snum^\second,P_2(\ssl.\sk,\qb^*))=\top$.
\item[(d)] $\Pr_{x\la\zo{n}}[\WM.\Eval(\crs,\tlC^{(2)},x) = C(x)]\geq\epsilon$.
\end{itemize}

We can estimate the advantage of $\qA$ as
\ifnum\submission=0
\begin{align*}
\Pr[\expt{\qA,\cD_\cC}{aft\textrm{-}lessor}(\secp,\epsilon)\out 1]
&=\Pr[\expt{\qA,\cD_\cC}{aft\textrm{-}lessor}(\secp,\epsilon)\out 1\land\snum^\first=\snum\land\pk^\second\|\snum^\second=\pk\|\snum]\\
&~~~~+\Pr[\expt{\qA,\cD_\cC}{aft\textrm{-}lessor}(\secp,\epsilon)\out 1\land(\snum^\first\neq\snum\lor\pk^\second\|\snum^\second\neq\pk\|\snum)]\\
&\leq\Pr[\expt{\qA,\cD_\cC}{aft\textrm{-}lessor}(\secp,\epsilon)\out 1\land\snum^\first=\snum\land\pk^\second\|\snum^\second=\pk\|\snum]\\
&~~~~+\Pr[\expt{\qA,\cD_\cC}{aft\textrm{-}lessor}(\secp,\epsilon)\out 1\land\snum^\first\neq\snum]\\
&~~~~+\Pr[\expt{\qA,\cD_\cC}{aft\textrm{-}lessor}(\secp,\epsilon)\out 1\land\pk^\second\|\snum^\second\neq\pk\|\snum]
\end{align*}
\else
\begin{align*}
&\Pr[\expt{\qA,\cD_\cC}{aft\textrm{-}lessor}(\secp,\epsilon)\out 1]\\
&=\Pr[\expt{\qA,\cD_\cC}{aft\textrm{-}lessor}(\secp,\epsilon)\out 1\land\snum^\first=\snum\land\pk^\second\|\snum^\second=\pk\|\snum]\\
&~~~~+\Pr[\expt{\qA,\cD_\cC}{aft\textrm{-}lessor}(\secp,\epsilon)\out 1\land(\snum^\first\neq\snum\lor\pk^\second\|\snum^\second\neq\pk\|\snum)]\\
&\leq\Pr[\expt{\qA,\cD_\cC}{aft\textrm{-}lessor}(\secp,\epsilon)\out 1\land\snum^\first=\snum\land\pk^\second\|\snum^\second=\pk\|\snum]\\
&~~~~+\Pr[\expt{\qA,\cD_\cC}{aft\textrm{-}lessor}(\secp,\epsilon)\out 1\land\snum^\first\neq\snum]\\
&~~~~+\Pr[\expt{\qA,\cD_\cC}{aft\textrm{-}lessor}(\secp,\epsilon)\out 1\land\pk^\second\|\snum^\second\neq\pk\|\snum]
\end{align*}
\fi

We then have the following lemmas.
\begin{lemma}\label{lem:SSL_ttQL}
$\Pr[\expt{\qA,\cD_\cC}{aft\textrm{-}lessor}(\secp,\epsilon)\out 1\land\snum^\first=\snum\land\pk^\second\|\snum^\second=\pk\|\snum]\allowbreak =\negl(\secp)$ by the two-tier unclonability of $\ttQL$.
\end{lemma}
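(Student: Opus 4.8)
The plan is to reduce the event in the statement to a break of two-tier unclonability of $\ttQL$ (Definition~\ref{def:tt_unclonability}). First I would build a QPT adversary $\cB$ against the two-tier unclonability game that internally runs $\qA$. On input $\pk$ from the two-tier challenger, $\cB$ perfectly simulates the challenger of $\expt{\qA,\cD_\cC}{a\textrm{-}lessor}(\secp,\epsilon)$: it samples $\pp\gets\WM.\Gen(1^\secp)$, $\mackey\gets\Macgen(1^\secp)$, and $C\gets\cD_\cC$, runs $(\snum,\bolt)\gets\boltgen(\pk)$ by itself (which only needs $\pk$), computes $\tlC\gets\WM.\Mark(\pp,C,\pk\|\snum)$ and $\mac\gets\Mactag(\mackey,\snum)$, and hands $\crs:=\pp$ and $\sft_C:=(\bolt,\tlC,\mac)$ to $\qA$. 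Crucially, $\cB$ never needs the two-tier secret key $\sk$ for this simulation, so it can be carried out given only $\pk$.

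When $\qA$ outputs $(\tlC^\first,\mac^\first,\tlC^\second,\mac^\second,\qb^*)$, $\cB$ computes the classical extractions $\pk^\first\|\snum^\first\gets\WM.\Extract(\pp,\tlC^\first)$ and $\pk^\second\|\snum^\second\gets\WM.\Extract(\pp,\tlC^\second)$, and checks whether $\snum^\first=\snum$, $\pk^\second\|\snum^\second=\pk\|\snum$, and $\Macvrfy(\mackey,\snum^\first,\mac^\first)=\top$; if any check fails, $\cB$ aborts. Otherwise $\cB$ submits the serial number $\snum^*:=\snum$ together with the two halves of $\qb^*$, namely $\qstate{L}_0:=\Trace_2[\qb^*]$ (register $\reg_1$) for full-verification and $\qstate{L}_1:=\reg_2$ for semi-verification. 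Then I would argue that, conditioned on the event in the lemma, $\cB$ wins: the two-tier challenger runs $\fullvrfy(\sk,\snum,\qstate{L}_0)$, which (since $\snum^\first=\snum$) is precisely condition~(b); conditioned on its outcome being $\top$ and on the MAC check that $\cB$ already verified, the residual state on $\reg_2$ equals $P_2(\ssl.\sk,\qb^*)$; the challenger then runs $\semivrfy(\pk,\snum,\qstate{L}_1)$, which (since $\pk^\second\|\snum^\second=\pk\|\snum$) is precisely condition~(c). As $\expt{\qA,\cD_\cC}{a\textrm{-}lessor}(\secp,\epsilon)\out1$ is equivalent to $(a)\wedge(b)\wedge(c)\wedge(d)$, the event in the lemma entails that $\cB$ does not abort and that both verifications accept, whence $\Pr[\expa{\cB,\ttQL}{tt}{unclone}(1^\secp)=1]\ge\Pr[\expt{\qA,\cD_\cC}{a\textrm{-}lessor}(\secp,\epsilon)\out1\wedge\snum^\first=\snum\wedge\pk^\second\|\snum^\second=\pk\|\snum]$. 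Two-tier unclonability of $\ttQL$ then bounds the right-hand side by $\negl(\secp)$.

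The step I expect to be the main obstacle is justifying that the residual state on $\reg_2$ in $\cB$'s experiment is exactly $P_2(\ssl.\sk,\qb^*)$. The definition of $P_2$ uses the full $\sslcheck$ on $\reg_1$ (which includes the classical $\Macvrfy$ step and the quantum $\fullvrfy$ step), whereas the two-tier challenger applies only $\fullvrfy$; I would resolve this by noting that $\Macvrfy$ touches only the classical data $(\snum^\first,\mac^\first)$ and not the quantum registers, so conditioned on $\Macvrfy(\mackey,\snum^\first,\mac^\first)=\top$ (which $\cB$ checks before submitting) the $\sslcheck$ measurement on $\reg_1$ coincides with $\fullvrfy(\sk,\snum^\first,\cdot)$, and the $\top$-projection is the same in both experiments. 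I would also remark that the order in which the two-tier challenger applies $\fullvrfy$ on $\reg_1$ and $\semivrfy$ on $\reg_2$ is immaterial, since these act on disjoint registers and commute, so the joint distribution of their outcomes matches the sequential structure of the SSL experiment. The remaining bookkeeping (perfectness of $\cB$'s simulation and the fact that dropping condition~(d) can only increase the probability) is routine.
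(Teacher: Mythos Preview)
Your proposal is correct and follows the same reduction as the paper: the paper's own argument for this lemma is a one-sentence sketch stating that if conditions~(b) and~(c) together with $\snum^\first=\snum$ and $\pk^\second\|\snum^\second=\pk\|\snum$ hold with non-negligible probability, then $\qA$ can be used to break two-tier unclonability of $\ttQL$. You have simply filled in the details of that reduction, including the points about the MAC check being purely classical and the commutativity of $\fullvrfy$ and $\semivrfy$ on disjoint registers, which the paper leaves implicit. One minor notational quibble: writing $\qstate{L}_0:=\Trace_2[\qb^*]$ is slightly misleading since $\cB$ must hand over the entangled bipartite state, not its marginal; your parenthetical ``register $\reg_1$'' makes clear you mean the latter, so this is cosmetic.
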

\begin{lemma}\label{lem:SSL_MAC}
$\Pr[\expt{\qA,\cD_\cC}{aft\textrm{-}lessor}(\secp,\epsilon)\out 1\land\snum^\first\neq\snum]=\negl(\secp)$ by the security of $\MAC$.
\end{lemma}
\begin{lemma}\label{lem:SSL_WM}
$\Pr[\expt{\qA,\cD_\cC}{aft\textrm{-}lessor}(\secp,\epsilon)\out 1\land\pk^\second\|\snum^\second\neq\pk\|\snum]=\negl(\secp)$ by the $(\epsilon,\cD_\cC)$-removability of $\WM$.
\end{lemma}

For \cref{lem:SSL_ttQL}, if the condition (b) and (c) above and $\snum^\first=\snum\land\pk^\second\|\snum^\second=\pk\|\snum$ hold at the same time with non-negligible probability, by using $\qA$, we can construct an adversary breaking the two-tier unclonability of $\ttQL$. Thus, we have \cref{lem:SSL_ttQL}.
Next, for \cref{lem:SSL_MAC}, if the condition (a) and $\snum^\first\neq\snum$ hold with non-negligible probability, also by using $\qA$, we can construct an adversary breaking the security of $\MAC$. Thus, we have \cref{lem:SSL_MAC}.
Finally, for \cref{lem:SSL_WM}, if the condition (d) and $\pk^\second\|\snum^\second\neq\pk\|\snum$ hold with non-negligible probability, by using $\qA$, we can construct an adversary breaking $(\epsilon,\cD_\cC)$-unremovability of $\WM$.
Thus, we have \cref{lem:SSL_WM}.

From the discussions so far, we obtain $\Pr[\expt{\qA,\cD_\cC}{aft\textrm{-}lessor}(\secp,\epsilon)\out 1]\leq\negl(\lambda)$.
This completes the proof.
\end{proof}

\ifnum\cameraready=1
\subsection*{Secure Software Leasing with Classical Communication}\label{sec:LNCS_only_ccSSL}
It is not difficult to extend the definition of SSL to that of SSL with classical communication.
By using two-tier QL with classical verification in~\cref{sec:ttQL_Classical_Vrfy} instead of two-tier QL, it is easy to extend the scheme in~\cref{sec:SSL_ttQL} to an SSL scheme with classical communication thanks to the bolt-to-certificate capability. Thus, we can achieve SSL with classical communication from the LWE assumption. Due to space limitations, we omit the details of the definition and construction. See the full version~\cite{EPRINT:KitNisYam20}.
\else
\fi

\ifnum\submission=0

\section{Secure Software Leasing with Classical Communication}\label{sec:ccSSL_csvttQL}

In this section, we extend our finite-term secure SSL scheme to one with classical communication by using two-tier quantum lightning with classical verification.

\subsection{Definition}\label{sec:ccSSL}
First, we formalize the notion of SSL with classical communication.

\begin{definition}[SSL with Setup and classical communication]\label{def:ssl_cc_crs}
Let $\cC=\setbracket{\cC_{\secp}}_{\secp}$ be a circuit class such that $\cC_{\secp}$ contains circuit of input length is $n$ and output length $m$.
A secure software lease scheme with setup and classical communication for $\cC$ is a tuple of algorithms $(\setup,\sslgen,\clessor,\lessee_1,\lessee_2,\run,\sslcert,\allowbreak\certvrfy)$.
\begin{itemize}
\item $\setup(1^\secp), \sslgen(\crs),\run(\crs,\sft_C,x)$: These are the same as the SSL in~\cref{def:ssl_crs}.
\item $\sslgen(\crs):$ The key generation algorithm takes as input $\crs$ and outputs a public key $\pk$ and secret key $\sk$.
\item $\lessee_1(\pk):$ The first stage lessee algorithm takes as input $\crs$ and outputs a classical string $\obligation$ and a quantum state $\statelessee$.
\item $\clessor(\sk,\obligation,C):$ The lessor algorithm takes as input $\sk$, $\obligation$, and a circuit $C\in\cC$, and outputs a classical string $\answer$.
\item $\lessee_2(\pk,\statelessee,\answer):$ The second stage lessee algorithm takes as input $\crs$, $\statelessee$, and $\answer$, and outputs a quantum state $\sft$.
\item $\sslcert(\crs,\sft^*):$ The certification algorithm takes as input $\crs$ and $\sft^*$ and outputs a classical string $\cert$.
\item $\certvrfy(\sk,\cert):$ The certification-verification algorithm takes as input $\sk$ and $\cert$ and outputs $\top$ or $\bot$.

\end{itemize}
\end{definition}

\begin{definition}[Correctness for SSL with classical verification]\label{def:correctness_SSL_cc}
An SSL scheme with classical communication $(\setup,\sslgen,\clessor,\allowbreak\lessee_1,\lessee_2,\run,\sslcheck)$ for $\cC = \setbracket{\cC_\secp}_\secp$ is correct if for all $C\in \cC_\secp$, the following two properties hold:
\begin{itemize}
\item Correctness of $\run$:
\ifnum\submission=0
\begin{align*}
&\Pr\left[ \forall x, \Pr[\runout(\crs,\sft_C,x)=C(x)]\ge 1-\negl(\secp) \ \middle |
\begin{array}{rl}
& \crs \gets \setup(1^\secp) \\
&(\pk,\sk) \gets \sslgen(\crs) \\
&(\obligation,\statelessee)\gets\lessee_1(\pk)\\
&\answer\gets\clessor(\sk,\obligation,C)\\
&\sft_C \gets\lessee_2(\pk,\statelessee,\answer)
\end{array}
\right]\\
& \ge  1-\negl(\secp).
\end{align*}
\else
\begin{align*}
&\Pr\left[ 
\begin{array}{ll}
\forall x, &\Pr[\runout(\crs,\sft_C,x)=C(x)]\\
 &\ge 1-\negl(\secp)
\end{array}
~\middle |
\begin{array}{rl}
& \crs \gets \setup(1^\secp) \\
&(\pk,\sk) \gets \sslgen(\crs) \\
&(\obligation,\statelessee)\gets\lessee_1(\pk)\\
&\answer\gets\clessor(\sk,\obligation,C)\\
&\sft_C \gets\lessee_2(\pk,\statelessee,\answer)
\end{array}
\right]\\
&\ge 1-\negl(\secp).
\end{align*}
\fi
\item Correctness of $\certvrfy$:
\[
\Pr\left[ \certvrfy(\sk,\cert) = \top \ \middle |
\begin{array}{rl}
&\crs \gets \setup(1^\secp) \\
&(\pk,\sk) \gets \sslgen(\crs) \\
&(\obligation,\statelessee)\gets\lessee_1(\pk)\\
&\answer\gets\clessor(\sk,\obligation,C)\\
&\sft_C \gets\lessee_2(\pk,\statelessee,\answer)\\
&\cert\gets\sslcert(\crs,\sft_C)\\
\end{array}
\right]\ge 1-\negl(\secp).
\]
\end{itemize}
\end{definition}

Similarly to the ordinary SSL, we consider the following two security notions perfect finite-term lessor security and average-case finite-term lessor security.

\begin{definition}[Perfect Finite-Term Lessor Security]\label{def:cc_perfect_lessor_security}
Let $\beta$ be any inverse polynomial of $\lambda$ and $\cD_{\cC}$ a distribution on $\cC$.
We define the $(\beta,\cD_{\cC})$-perfect finite-term lessor security game $\expt{\qA,\cD_\cC}{pft\textrm{-}lessor\textrm{-}cc}(\secp,\beta)$ between the challenger and adversary $\qA$ as follows.
\begin{enumerate}
\item The challenger generates $\crs \gets \setup(1^\secp)$ and $(\pk,\sk)\gets \sslgen(\crs)$, and sends $\crs$ and $\pk$ to $\qA$.
\item $\qA$ outputs $\obligation$. The challenger generates $C \gets \cD_{\cC}$, computes $\answer \gets \clessor(\sk,\obligation,C)$, and sends $\answer$ to $\qA$.
\item $\qA$ outputs a classical string $\cert^*$ and a quantum state $\sft^\ast$.
\item If $\certvrfy(\sk,\cert^*) = \top$ and $\forall x\ \Pr[\runout(\crs,\sft^*,x) = C(x)]\ge \beta$ hold, where the probability is taken over the choice of the random coin of $\run$, then the challenger outputs $1$. Otherwise, the challenger outputs $0$.
\end{enumerate}

We say that an SSL scheme with classical communication $(\setup,\sslgen,\clessor,\allowbreak\lessee_1,\allowbreak\lessee_2,\run,\allowbreak\sslcert,\certvrfy)$ is $(\beta,\cD_{\cC})$-perfect finite-term lessor secure,
if for any QPT $\qA$, the following holds.
\begin{align*}
\Pr[\expt{\qA,\cD_\cC}{pft\textrm{-}lessor\textrm{-}cc}(\secp,\beta) \out 1] \le \negl(\secp).
\end{align*}
\end{definition}

\begin{definition}[Average-Case Finite-Term Lessor Security]\label{def:cc_average_lessor_security}
Let $\epsilon$ be any inverse polynomial of $\lambda$ and $\cD_{\cC}$ a distribution on $\cC$.
We define the $(\epsilon,\cD_{\cC})$-average-case finite-term lessor security game $\expt{\qA,\cD_\cC}{aft\textrm{-}lessor\textrm{-}cc}(\secp,\epsilon)$ by replacing the fourth stage of $\expt{\qA,\cD_\cC}{pft\textrm{-}lessor\textrm{-}cc}(\secp,\beta)$ with the following.
\begin{enumerate}
\setcounter{enumi}{3}
\item If $\certvrfy(\sk,\cert^*) = \top$ and $\Pr[\runout(\crs,\sft^*,x) = C(x)]\ge \epsilon$ hold, where the probability is taken over the choice of $x\la\zo{n}$ and the random coin of $\run$, then the challenger outputs $1$. Otherwise, the challenger outputs $0$.
\end{enumerate}

We say that an SSL scheme with classical communication $(\setup,\sslgen,\clessor,\allowbreak\lessee_1,\allowbreak\lessee_2,\run,\allowbreak\sslcert,\certvrfy)$ is $(\epsilon,\cD_\cC)$-average-case finite-term lessor secure,
if for any QPT $\qA$, the following holds.
\begin{align*}
\Pr[\expt{\qA,\cD_\cC}{aft\textrm{-}lessor}(\secp,\epsilon) \out 1] \le \negl(\secp).
\end{align*}
\end{definition}

\subsection{Construction}\label{sec:construction_SSLcc}
We show how to construct a finite-term secure SSL scheme with classical communication from two-tier quantum lightning with classical verification, relaxed watermarking, and OT-MAC.

\begin{construction}[SSL from Two-Tier QL with classical verification]\label{const:SSLcc}
Let $\cC=\setbracket{\cC_{\secp}}_{\secp}$ be a circuit class such that $\cC_{\secp}$ contains circuit of input length is $n$ and output length $m$.
Our SSL scheme with classical communication $(\setup,\sslgen,\clessor,\allowbreak\lessee_1,\lessee_2,\run,\sslcert,\certvrfy)$ for $\cC$ is based on a two-tier QL with semi-classical verification $\ttQL=(\ttQL.\setup, \boltgen,\boltcert,\semivrfy,\ttQL.\certvrfy)$, relaxed watermarking scheme $\WM=(\WM.\setup,\WM.\Mark,\allowbreak\WM.\Extract,\WM.\Eval)$ for $\cC$, and OT-MAC $\MAC=(\Macgen,\allowbreak\Mactag,\Macvrfy)$.
\begin{itemize}
\item $\setup(1^\secp)$: Compute $\pp \gets \WM.\gen(1^{\secp})$ and output $\crs\seteq \pp$.
\item $\sslgen(\crs)$: Compute $(\pk,\sk)\gets \ttQL.\setup(1^\secp)$ and $\mackey\gets\Macgen(1^\secp)$, and output $\ssl.\pk\seteq \pk$ and $\ssl.\sk \seteq (\pp,\pk,\sk,\mackey)$.
\item $\lessee_1(\ssl.\pk)$:  Parse $\pk\gets\ssl.\pk$, generate $(\snum,\bolt)\gets\boltgen(\pk)$, and outputs $\obligation:=\snum$ and $\statelessee:=\bolt$.
\item $\clessor(\sk,\obligation,C)$:
\begin{enumerate}
\item Parse $(\pp,\pk,\sk,\mackey)\gets\ssl.\sk$ and $\snum\gets\obligation$.
\item Compute $\tlC \gets \WM.\Mark(\pp,C,\pk\concat \snum)$.
\item Compute $\mac\gets\Mactag(\mackey,\snum)$.
\item Output $\answer \seteq (\tlC,\mac)$.
\end{enumerate}
\item $\lessee_2(\ssl.\pk,\statelessee,\answer)$: Parse $\bolt\gets\statelessee$ and $(\tlC,\mac)\gets\answer$, and output $\sft:=(\bolt,\tlC,\mac)$.
\item $\run(\crs,\csft_C,x)$: Do the following.
\begin{enumerate}
	\item Parse $\pp\gets\crs$ and $(\bolt,\tlC,\mac)\gets\sft$.
	\item Compute $\pk^\prime\concat \snum^\prime \gets \WM.\Extract(\pp,\tlC)$.
	\item Run $(b,\bolt') \gets \semivrfy(\pk',\snum',\bolt)$. If $b=\bot$, then output $\bot$. Otherwise, do the next step.
	\item Compute $y \gets \WM.\Eval(\pp,\tlC,x)$.
	\item Output $(\bolt',\tlC,\mac)$ and $y$.
\end{enumerate}
\item $\sslcert(\crs,\sft)$: Parse $(\bolt,\tlC,\mac)\gets\sft$, runs $\ttQL.\cert \gets \boltcert(\bolt)$, and output $\cert \seteq (\ttQL.\cert,\tlC,\mac)$.
\item $\certvrfy(\ssl.\sk,\cert)$: Do the following.
\begin{enumerate}
\item Parse $(\pp,\pk,\sk,\mackey)\gets\ssl.\sk$ and $(\ttQL.\cert,\tlC,\mac)\gets\cert$.
\item Compute $\pk^\prime\concat \snum^\prime \gets \WM.\Extract(\pp,\tlC)$.
\item If $\Macvrfy(\mackey,\snum',\mac)=\bot$, then output $\bot$. Otherwise, do the next step.
\item Output $d \gets \ttQL.\certvrfy(\sk,\snum',\cert)$.
\end{enumerate}
\end{itemize}
\end{construction}

We have the following theorems.

\begin{theorem}\label{thm:SSLcc_average}
Let $\epsilon$ be any inverse polynomial of $\secp$ and $\cD_{\cC}$ a distribution over $\cC$.
Assume $\ttQL$ is a two-tier quantum lightning scheme with classical verification, $\WM$ is a $(\epsilon,\cD_{\cC})$-secure relaxed watermarking scheme for $\cC$, and $\MAC$ is an OT-MAC.
Then, \cref{const:SSLcc} is a $(\epsilon,\cD_{\cC})$-average-case finite-term lessor secure SSL scheme with classical communication for $\cC$.
\end{theorem}

\begin{theorem}\label{thm:SSLcc_perfect}
Let $\beta$ be any inverse polynomial of $\secp$ and $\cD_{\cC}$ a distribution over $\cC$.
Assume $\ttQL$ is a two-tier quantum lightning scheme with classical verification, $\WM$ is a $(1,\cD_{\cC})$-secure relaxed watermarking scheme for $\cC$, and $\MAC$ is an OT-MAC.
Then, \cref{const:SSLcc} is a $(\beta,\cD_{\cC})$-perfect finite-term lessor secure SSL scheme with classical communication for $\cC$.
\end{theorem}

Since the proofs for the above two theorems are almost the same, we provide the proof of only \cref{thm:SSLcc_average} and omit the proof of \cref{thm:SSLcc_perfect}.


\begin{proof}[Proof of~\cref{thm:SSLcc_average}]
The correctness of $\run$ of \cref{const:SSLcc} follows from the statistical correctness and extraction correctness of $\WM$, and the semi-verification correctness of $\ttQL$.
Also, the correctness of $\certvrfy$ of \cref{const:SSLcc} follows from the extraction correctness of $\WM$, the correctness of $\MAC$, and the certification-verification correctness of $\ttQL$.
Below, we prove the $(\epsilon,\cD_\cC)$-average-case finite-term lessor security of \cref{const:SSLcc}.

Let $\qA$ be a QPT adversary attacking $(\epsilon,\cD_\cC)$-average-case finite-term lessor security.
The detailed description of $\expt{\qA,\cD_\cC}{aft\textrm{-}lessor\textrm{-}cc}(\secp,\epsilon)$ is as follows.
\begin{enumerate}
\item The challenger generates $\pp \gets \WM.\Gen(1^\secp)$, $(\pk,\sk)\gets\ttQL.\Setup(1^\secp)$, and $\mackey\gets\MAC.\Gen(1^\secp)$.
The challenger sends $\crs:=\pp$ and $\ssl.\pk:=\pk$ to $\qA$.
Below, let $\ssl.\sk:=(\pp,\pk,\sk,\mackey)$.

\item $\qA$ sends $\obligation:=\snum^*$ to the challenger.
The challenger generates $C\gets\cD_\cC$.
The challenger also computes $\tlC\gets\WM.\Mark(\pp,C,\pk\|\snum^*)$ and $\mac\gets\Mactag(\mackey,\snum^*)$.
The challenger finally sends $\answer:=(\tlC,\mac)$ to $\qA$.

\item $\qA$ outputs $\cert^*=(\ttQL.\cert^*,\tlC^{(1)},\mac^{(1)})$ and $\sft^*=(\qb^*,\tlC^{(2)},\mac^{(2)})$, where $\qb^*$ is a single quantum state and others are classical strings.

\item If it holds that $\certvrfy(\ssl.\sk,\cert)=\top$ and $\Pr[\runout(\crs,\sft^*,x) = C(x)]\ge\epsilon$, where the probability is taken over the choice of $x\la\zo{n}$ and the random coin of $\run$, then the challenger outputs $1$ as the output of this game. Otherwise, the challenger outputs $0$ as the output of this game.
\end{enumerate}

Below, we let $\pk^{(1)}\|\snum^\first\gets\WM.\Extract(\pp,\tlC^\first)$ and $\pk^\second\|\snum^\second\gets\WM.\Extract(\pp,\tlC^\second)$.
The output of $\expt{\qA,\cD_\cC}{aft\textrm{-}lessor}(\secp,\epsilon)$ is $1$ if and only if the following conditions hold.

\begin{itemize}
\item[(a)] $\Macvrfy(\mackey,\snum^\first,\mac^\first)=\top$.
\item[(b)] $\ttQL.\certvrfy(\sk,\snum^\first,\ttQL.\cert^*)=\top$.
\item[(c)] $\semivrfy(\pk^\second,\snum^\second,\qb^*)=\top$.
\item[(d)] $\Pr_{x\la\zo{n}}[\WM.\Eval(\pp,\tlC^\second,x)= C(x)]\ge\epsilon$.
\end{itemize}

We can estimate the advantage of $\qA$ as
\ifnum\submission=0
\begin{align*}
\Pr[\expt{\qA,\cD_\cC}{aft\textrm{-}lessor}(\secp,\epsilon)\out 1]
&=\Pr[\expt{\qA,\cD_\cC}{aft\textrm{-}lessor}(\secp,\epsilon)\out 1\land\snum^\first=\snum^*\land\pk^\second\|\snum^\second=\pk\|\snum^*]\\
&~~~~+\Pr[\expt{\qA,\cD_\cC}{aft\textrm{-}lessor}(\secp,\epsilon)\out 1\land(\snum^\first\neq\snum^*\lor\pk^\second\|\snum^\second\neq\pk\|\snum^*)]\\
&\leq\Pr[\expt{\qA,\cD_\cC}{aft\textrm{-}lessor}(\secp,\epsilon)\out 1\land\snum^\first=\snum^*\land\pk^\second\|\snum^\second=\pk\|\snum^*]\\
&~~~~+\Pr[\expt{\qA,\cD_\cC}{aft\textrm{-}lessor}(\secp,\epsilon)\out 1\land\snum^\first\neq\snum^*]\\
&~~~~+\Pr[\expt{\qA,\cD_\cC}{aft\textrm{-}lessor}(\secp,\epsilon)\out 1\land\pk^\second\|\snum^\second\neq\pk\|\snum^*]
\end{align*}
\else
\begin{align*}
&\Pr[\expt{\qA,\cD_\cC}{aft\textrm{-}lessor}(\secp,\epsilon)\out 1]\\
&=\Pr[\expt{\qA,\cD_\cC}{aft\textrm{-}lessor}(\secp,\epsilon)\out 1\land\snum^\first=\snum^*\land\pk^\second\|\snum^\second=\pk\|\snum^*]\\
&~~~~+\Pr[\expt{\qA,\cD_\cC}{aft\textrm{-}lessor}(\secp,\epsilon)\out 1\land(\snum^\first\neq\snum^*\lor\pk^\second\|\snum^\second\neq\pk\|\snum^*)]\\
&\leq\Pr[\expt{\qA,\cD_\cC}{aft\textrm{-}lessor}(\secp,\epsilon)\out 1\land\snum^\first=\snum^*\land\pk^\second\|\snum^\second=\pk\|\snum^*]\\
&~~~~+\Pr[\expt{\qA,\cD_\cC}{aft\textrm{-}lessor}(\secp,\epsilon)\out 1\land\snum^\first\neq\snum^*]\\
&~~~~+\Pr[\expt{\qA,\cD_\cC}{aft\textrm{-}lessor}(\secp,\epsilon)\out 1\land\pk^\second\|\snum^\second\neq\pk\|\snum^*]
\end{align*}
\fi

We then have the following lemmas.
\begin{lemma}\label{lem:SSLcc_ttQL}
$\Pr[\expt{\qA,\cD_\cC}{aft\textrm{-}lessor}(\secp,\epsilon)\out 1\land\snum^\first=\snum^*\land\pk^\second\|\snum^\second=\pk\|\snum^*]=\negl(\secp)$ by the two-tier unclonability with classical verification of $\ttQL$.
\end{lemma}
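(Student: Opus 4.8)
The plan is to prove \cref{lem:SSLcc_ttQL} by a straightforward reduction to two-tier unclonability with classical verification (\cref{def:classical_vrfy_tt_unclonability}) of $\ttQL$. First I would assume toward a contradiction that the probability in the statement of \cref{lem:SSLcc_ttQL} is non-negligible and construct a QPT adversary $\cB$ playing $\expb{\cB,\ttQL}{tt}{unclone}{cv}(1^\secp)$. On input $\pk$ from its challenger, $\cB$ would generate the remaining public data itself, namely $\pp\gets\WM.\gen(1^\secp)$ and $\mackey\gets\Macgen(1^\secp)$, send $\crs:=\pp$ and $\ssl.\pk:=\pk$ to $\qA$, and then emulate the rest of $\expt{\qA,\cD_\cC}{a\textrm{-}lessor\textrm{-}cc}(\secp,\epsilon)$ exactly as the honest challenger: upon receiving $\obligation:=\snum^*$ from $\qA$, it samples $C\gets\cD_\cC$, computes $\tlC\gets\WM.\Mark(\pp,C,\pk\|\snum^*)$ and $\mac\gets\Mactag(\mackey,\snum^*)$, and returns $\answer:=(\tlC,\mac)$. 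The key point justifying that this is a perfect simulation is that the honest challenger uses the two-tier QL secret key $\sk$ only in the final verification step, never during the interaction, so $\cB$ does not need it to run $\qA$.

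Next I would specify $\cB$'s output and verify it wins on the relevant event. When $\qA$ halts with $\cert^*=(\ttQL.\cert^*,\tlC^\first,\mac^\first)$ and $\sft^*=(\qb^*,\tlC^\second,\mac^\second)$, $\cB$ outputs the triple $(\snum^*,\qb^*,\ttQL.\cert^*)$ to its challenger. Recall from the proof of \cref{thm:SSLcc_average} that $\expt{\qA,\cD_\cC}{a\textrm{-}lessor\textrm{-}cc}$ outputs $1$ only if conditions (b) $\ttQL.\certvrfy(\sk,\snum^\first,\ttQL.\cert^*)=\top$ and (c) $\semivrfy(\pk^\second,\snum^\second,\qb^*)=\top$ hold, where $\pk^\first\|\snum^\first\gets\WM.\Extract(\pp,\tlC^\first)$ and $\pk^\second\|\snum^\second\gets\WM.\Extract(\pp,\tlC^\second)$ --- values which $\cB$ recomputes identically. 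Intersecting with the event $\snum^\first=\snum^*\land\pk^\second\|\snum^\second=\pk\|\snum^*$, conditions (b) and (c) collapse to $\ttQL.\certvrfy(\sk,\snum^*,\ttQL.\cert^*)=\top$ and $\semivrfy(\pk,\snum^*,\qb^*)=\top$, which is exactly the winning event of $\expb{\cB,\ttQL}{tt}{unclone}{cv}(1^\secp)$ for $\cB$'s submission $(\snum^*,\qb^*,\ttQL.\cert^*)$. Hence $\cB$'s success probability is non-negligible (up to the coin-accounting addressed below), contradicting two-tier unclonability, and the lemma follows.

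The only genuinely delicate point --- which I would treat carefully rather than skip --- is the randomness accounting, since $\semivrfy$ is invoked probabilistically inside $\run$ and condition (d) only lower-bounds $\Pr[\runout(\crs,\sft^*,x)=C(x)]$, with the probability over $x\la\zo{n}$ and the coins of $\run$, by $\epsilon$ rather than asserting a deterministic success. I would observe that $\semivrfy$'s behaviour does not depend on the evaluation point $x$, so condition (d) already forces $\semivrfy(\pk^\second,\snum^\second,\qb^*)$ to return $\top$ with probability at least $\epsilon$ over its own coins; thus $\cB$, which runs this semi-verification once, wins with probability at least $\epsilon$ times the probability of the event in the lemma, still inverse-polynomial. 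The main obstacle is therefore purely one of bookkeeping: making sure $\cB$ commits to the single serial number $\snum^*$ that simultaneously serves as the serial number extracted from the returned certificate's watermark and as the one under which the retained state must semi-verify (which is precisely what the conjunction in the lemma guarantees), and threading the coins of $\run$ and the verification algorithms through so that a non-negligible event in the SSL game becomes a non-negligible advantage in the unclonability game.
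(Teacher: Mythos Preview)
Your proposal is correct and follows the same reduction idea the paper uses; the paper itself gives only a one-sentence sketch (``if conditions (b) and (c) and $\snum^\first=\snum^*\land\pk^\second\|\snum^\second=\pk\|\snum^*$ hold with non-negligible probability, we can construct an adversary breaking two-tier unclonability''), so your write-up is strictly more detailed. In particular, your careful treatment of the coins of $\semivrfy$ inside $\run$ --- arguing that the $\epsilon$ lower bound on $\Pr[\runout(\crs,\sft^*,x)=C(x)]$ forces $\semivrfy(\pk^\second,\snum^\second,\qb^*)=\top$ with probability at least $\epsilon$, independently of $x$ --- resolves an ambiguity that the paper's decomposition into conditions (c) and (d) leaves implicit.
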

\begin{lemma}\label{lem:SSLcc_MAC}
$\Pr[\expt{\qA,\cD_\cC}{aft\textrm{-}lessor}(\secp,\epsilon)\out 1\land\snum^\first\neq\snum^*]=\negl(\secp)$ by the security of $\MAC$.
\end{lemma}
\begin{lemma}\label{lem:SSLcc_WM}
$\Pr[\expt{\qA,\cD_\cC}{aft\textrm{-}lessor}(\secp,\epsilon)\out 1\land\pk^\second\|\snum^\second\neq\pk\|\snum^*]=\negl(\secp)$ by the $(\epsilon,\cD_\cC)$-unremovability of $\WM$.
\end{lemma}

For \cref{lem:SSLcc_ttQL}, if the condition (b) and (c) above and $\snum^\first=\snum^*\land\pk^\second\|\snum^\second=\pk\|\snum^*$ hold at the same time with non-negligible probability, by using $\qA$, we can construct an adversary breaking the two-tier unclonability of $\ttQL$. Thus, we have \cref{lem:SSLcc_ttQL}.
Next, for \cref{lem:SSLcc_MAC}, if the condition (a) and $\snum^\first\neq\snum^*$ hold with non-negligible probability, also by using $\qA$, we can construct an adversary breaking the security of $\MAC$. Thus, we have \cref{lem:SSLcc_MAC}.
Finally, for \cref{lem:SSLcc_WM}, if the condition (d) and $\pk^\second\|\snum^\second\neq\pk\|\snum^*$ hold with non-negligible probability, by using $\qA$, we can construct an adversary breaking $(\epsilon,\cD_\cC)$-unremovability of $\WM$.
Thus, we have \cref{lem:SSLcc_WM}.

From the discussions so far, we obtain $\Pr[\expt{\qA,\cD_\cC}{aft\textrm{-}lessor}(\secp,\epsilon)\out 1]\leq\negl(\lambda)$.
This completes the proof.
\end{proof}
%

\else
\fi
\ifnum\submission=0

\section{Putting It Altogether: SSL from LWE}\label{sec:SSL_LWE}

In this section, we summarize our results.

\paragraph{SSL for a family of PRF.}
By combining \cref{thm:SSL_average} with \cref{thm:ttQLcv_implies_ttQL}, \cref{thm:ttQLcv_LWE}, \cref{thm:relaxed_watermarking_prf_LWE}, and \cref{thm:MAC_information_theoretic}, we obtain the following theorem.
\begin{theorem}\label{thm:SSL_LWE_prf}
Let $\epsilon$ be any inverse polynomial of $\secp$.
Assuming the quantum hardness of the LWE problem, there exists a $(\epsilon,\cU_\prf)$-average-case finite-term lessor secure SSL scheme for a family of PRF $\cF$, where $\cU_\prf$ is the uniform distribution over $\cF$.
\end{theorem}
Also, by combing \cref{thm:SSLcc_average} with \cref{thm:ttQLcv_LWE}, \cref{thm:relaxed_watermarking_prf_LWE}, and \cref{thm:MAC_information_theoretic}, we obtain the following theorem.
\begin{theorem}\label{thm:SSLcc_LWE_prf}
Let $\epsilon$ be any inverse polynomial of $\secp$.
Assuming the quantum hardness of the LWE problem, there exists a $(\epsilon,\cU_\prf)$-average-case finite-term lessor secure SSL scheme with classical communication for a family of PRF $\cF$, where $\cU_\prf$ is the uniform distribution over $\cF$.
\end{theorem}

\paragraph{SSL for compute-and-compare circuits.}
By combining \cref{thm:SSL_perfect} with \cref{thm:ttQLcv_implies_ttQL}, \cref{thm:ttQLcv_LWE}, \cref{thm:relaxed_watermarking_cnc_LWE}, and \cref{thm:MAC_information_theoretic}, we obtain the following theorem.
\begin{theorem}\label{thm:SSL_LWE_cnc}
Let $\beta$ be any inverse polynomial of $\secp$ and $\eta>0$ any constant.
Assuming the hardness of the LWE problem against sub-exponential time quantum adversaries, there exists a $(\beta,\Dcnc{\secp^\eta})$-perfect finite-term lessor secure SSL scheme for the class of compute-and-compare circuits $\Ccnc$, where $\Dcnc{\secp^\eta}$ is any distribution over $\Ccnc$ that has conditional min-entropy $\secp^\eta$.
\end{theorem}
Also, by combing \cref{thm:SSLcc_perfect} with \cref{thm:ttQLcv_LWE}, \cref{thm:relaxed_watermarking_cnc_LWE}, and \cref{thm:MAC_information_theoretic}, we obtain the following theorem.
\begin{theorem}\label{thm:SSLcc_LWE_cnc}
Let $\beta$ be any inverse polynomial of $\secp$ and $\eta>0$ any constant.
Assuming the hardness of the LWE problem against sub-exponential time quantum adversaries, there exists a $(\beta,\Dcnc{\secp^\eta})$-perfect finite-term lessor secure SSL scheme with classical communication for the class of compute-and-compare circuits $\Ccnc$, where $\Dcnc{\secp^\eta}$ is any distribution over $\Ccnc$ that has conditional min-entropy $\secp^\eta$.
\end{theorem}

\else\fi

	\ifnum\anonymous=0
	\ifnum\acknowledgments=1
	\fi
	\fi
	\ifnum\llncs=1
	\bibliographystyle{auxiliary/myalpha}
	\bibliography{auxiliary/abbrev3,auxiliary/crypto,auxiliary/other-bib}
	\else
	\ifnum\choosebibstyle=0
	\else
	\ifnum\choosebibstyle=1
	\bibliographystyle{alpha}
	\else
	\bibliographystyle{abbrv}
	\fi
	\newcommand{\etalchar}[1]{$^{#1}$}

	\fi
	\fi

	
\ifnum\cameraready=0
	\ifnum\llncs=0
	\appendix

\else
	\newpage
	 	\appendix
	 	\setcounter{page}{1}
 	{
	\noindent
 	\begin{center}
	{\Large SUPPLEMENTAL MATERIALS}
	\end{center}
 	}
	\setcounter{tocdepth}{2}

	\setcounter{tocdepth}{1}

	\fi
\fi

\ifnum\submission=0
\else
\ifnum\cameraready=1
\else

\fi
\fi

\end{document}